\documentclass[12pt]{article}
\usepackage[affil-it]{authblk} 
\usepackage{fullpage}

\RequirePackage{amsmath,amssymb}
\RequirePackage{natbib}
\RequirePackage{graphicx}
\usepackage{subcaption}
\usepackage[export]{adjustbox}

\usepackage[normalem]{ulem} 
\RequirePackage[OT1]{fontenc}
\usepackage{latexsym}
\usepackage{etoolbox}
\usepackage{calc}
\usepackage{fancyhdr}
\usepackage{ifthen}
\usepackage{tabularx}
\RequirePackage[colorlinks,citecolor=blue,urlcolor=blue]{hyperref}
\usepackage{epsfig,xcolor}
\usepackage{comment}
\usepackage[linesnumbered,ruled]{algorithm2e}

\newcommand{\mI}{\ensuremath{\mathbb I}}
\newcommand{\mE}{\ensuremath{\mathbb E}}
\newcommand{\mP}{\ensuremath{\mathbb P}}
\newcommand{\mR}{\ensuremath{\mathbb R}}

\newtheorem{remark}{Remark}[section]
\newtheorem{definition}{Definition}[section]
\numberwithin{equation}{section}
\newtheorem{prop}{Proposition}[section]
\newtheorem{thm}{Theorem}[section]
\newtheorem{lem}{Lemma}[section]
\newtheorem{coro}[prop]{Corollary}
\newenvironment{proof}{\paragraph{Proof:}}{\hfill$\square$}
\newtheorem{ass}{Assumption}[section]

\newcommand{\infoOSP}{\texttt{OGInfoSP}}
\newcommand{\hell}{\hat\ell}

\begin{document}

\title{Selecting Informative Conformal Prediction Sets with an Optimized FCR-Controlled Approach}

\author[1]{Israela Solomon}
\author[2]{Etienne Roquain}
\author[1]{Saharon Rosset}
\author[1]{Ruth Heller}

\affil[1]{Department of Statistics and Operations Research, Tel-Aviv University}
\affil[2]{Laboratoire de Probabilités, Statistique et Modélisation (LPSM), Sorbonne Université}

\date{} 
\maketitle

\begin{abstract}
Conformal methods provide prediction sets for outcomes with confidence guarantees. We study their use in a selective inference setting, where inference is performed only when the prediction set is informative. The analyst may consider as informative, for example, cases with prediction sets that are sufficiently small, exclude null values, or satisfy other appropriate monotone constraints.  
Because inference is typically restricted to informative cases in practical applications, accounting for the resulting selection bias is crucial to maintaining false coverage rate (FCR) control. 
A general framework for constructing such informative conformal prediction sets while controlling the FCR on the selected sample was suggested in \cite{GazinHellerMarandonRoquain2025}.  In this work we focus on  oracle-guided procedures. We derive the optimal decision policy under a suitable power objective in the oracle setting where the probability of belonging to each prediction set can be computed. In practice, of course, only estimated probabilities are available. We therefore introduce a calibration procedure that adjusts the oracle policy to maintain finite sample FCR control. We show that this approach can achieve substantially higher power than available alternatives. We demonstrate the effectiveness of our new methods for classification outcomes on both real and simulated data.   
\end{abstract}

\section{Introduction}
Conformal inference provides  uncertainty guarantees in prediction problems by using  labeled data to construct a prediction set for each new example, with the guarantee that the probability that the outcome lies in the prediction set is at least $1-\alpha$. When many new examples are available, however, this guarantee may be insufficient: while we expect a fraction $1-\alpha$ of  prediction sets to contain the outcome overall, among the ``interesting" prediction sets (i.e., those that are informative for the analyst), the fraction  that contain the outcome may be substantially lower. 

For example, consider a classification outcome with possible classes $[K]= \{1,\ldots,K \}$. When constructing $1-\alpha$ level prediction sets, some examples yield the full set $[K]$, which  is a trivial and clearly  uninformative prediction set. The fraction of prediction sets  that contain the outcome among those with non-trivial prediction sets is necessarily lower than the corresponding fraction among all test samples. 

The analyst's goal is to guarantee coverage only for prediction sets that are informative. Importantly, this does not mean that examples with uninformative prediction sets are ignored. Rather, such examples may trigger a different downstream action. For instance, in a medical application, only prediction sets that clearly support a clinical decision may be reported. If a prediction set is too ambiguous (e.g., it includes both the category ``healthy" and ``sick"), the patient may be referred for additional testing. By contrast, when the prediction set excludes one of these categories, the patient is classified as either healthy or sick. The analyst then seeks a meaningful uncertainty guarantee for this selected subset of cases: among all patients who do no undergo further examination, the fraction whose prediction sets fail to contain the true outcome should be at most $\alpha$ in expectation.

 The classic split conformal set-up we consider is as follows. The variables $\{(X_i,Y_i), i\in[n+m]\}$ are independent and identically distributed (iid) pairs from 
 the joint distribution $P_{XY}$ on $\mathcal X\times \mathcal Y$ (e.g., $\mR^d\times [K]$ for classification, or $\mR^d\times \mR$ for regression).  The calibration sample is  $\{(X_j,Y_j),j\in [n] \}$;  the test sample is $\{(X_{n+i},Y_{n+i}), i \in [m]\}$.
While all measurements are observed in the calibration sample, only the $X_i$'s are observed in the test sample and the aim is to provide  prediction sets for  the unobserved  $(Y_{n+i})_{i\in [m]}$.  The prediction sets are subsets of $\mathcal Y= [K]$ for  classification, or subsets of $\mathcal Y=\mR$ for regression.

We are concerned with the setting that the analyst has a clear notion that only a subset of the $\mathcal Y$ space is informative \citep{GazinHellerMarandonRoquain2025}.  We denote the set of informative prediction sets by $\mathcal I$.
For example in classification, if the analyst considers as informative only prediction sets of size at most 2, then $\mathcal{I}$ contains all singletons and pairs of two classes. Our running examples will be the following: non-trivial selection, so $\mathcal I $ consists of all the non-empty sets except the trivial prediction set $[K]$; no null class, so $\mathcal I$ consists of all sets that exclude a specific class labeled as `null'. The subset collection in  $\mathcal I $ can be very general, requiring only a mild 'nestedness' assumption, see Assumption~\ref{ass-martingale}. 

It is well known that selection can invalidate error guarantees \citep{benjamini2005false}. Therefore, although we expect only $\alpha$ of the examples to fail to cover the true outcome when each prediction set is constructed to guarantee at least $1-\alpha$ coverage, the expected fraction of non-covering examples among the selected subset may be substantially higher. This issue was pointed out  for selected confidence intervals by \cite{benjamini2005false}, and later for selected conformal prediction sets by \cite{bao2024selective}.

In \cite{GazinHellerMarandonRoquain2025}, two procedures were suggested to address the selection of informative examples: \texttt{InfoSP}, that selects informative prediction sets by applying the Benjamini-Hochberg (BH, \citealt{BH1995}) procedure on 'adjusted levels' (where the 'adjusted level' of the example is the minimal $\alpha$ for which the prediction set will be informative), and then constructs the prediction sets for the selected $\mathcal S\subseteq [m]$ examples at the reduced level $\alpha|\mathcal S|/m$; \texttt{InfoSCOP}, that adds a pre-processing step that uses half the calibration samples for selecting a subset of the remaining calibration and test samples, so that the fraction of examples for which informative prediction sets can be constructed following pre-processing is large.  \texttt{InfoSCOP} was suggested to address the potential conservativeness of \texttt{InfoSP}  when the fraction of informative examples  in the test sample  is small (so $|\mathcal S|/m$ will be small). However,  how to choose the most favorable tuning parameters for the pre-processing step in \texttt{InfoSCOP} remains unclear.  The main limitation of both \texttt{InfoSP} and \texttt{InfoSCOP} is that they are driven primarily by the intuition of reducing the the original $\alpha$ level for which the prediction sets are constructed (thus widening the prediction sets constructed),  than by firm optimality guidelines. This limitation motivated us to suggest an  oracle-guided informative selective prediction sets procedure, which we shall refer to as  \infoOSP. This new procedure  does not suffer from the potential conservatism of \texttt{InfoSP} and is clearly guided by optimality theory.

This work adds to the increasing body of work on conformal selective inference. Specifically, the selection problem when constructing multiple prediction sets simultaneously was studied in \cite{bao2024selective, jin2025confidence, GazinHellerMarandonRoquain2025, NEURIPS2024_6b99cfe8}.  
These works build on the  multiple comparisons literature:  \cite{benjamini2005false}  introduced the FCR as an error criterion to control when  constructing multiple  confidence intervals on selected parameters; the problem of selecting confidence intervals  satisfying specific notions of informativeness was studied in \cite{weinstein20,weinstein20online}.

The structure of the paper and our main contributions are as follows. 
\begin{itemize}
    \item In \S~\ref{sec-optimal} we formalize the power objective and constraint, and solve the oracle optimization problem, which   assumes that conditional on $X$, the probability of belong to a prediction set can be computed exactly. We solve the oracle problem by varying a single Lagrange multiplier $\mu\geq 0$, and using the fact that the maximization for a given $x\in \mathcal X$ has a simple geometric interpretation as the upper envelope of a finite collection of lines. 
    \item In \S~\ref{sec-proc} we provide an oracle-guided data driven procedure, \infoOSP, that uses the calibration sample in order to guarantee control of the false coverage rate (FCR). We suggest two implementations:  a simple one  that is easy to  implement,  as well as a more efficient one that uses the specific geometry of the algorithm.  
    \item In \S~\ref{subsec-finite-sample-theory} we provide the finite sample FCR guarantee for \infoOSP\ (Theorem \ref{thm-FCR-control}). The proof uses  Lemma \ref{lemma-martingale} that is of independent interest. 
   \item In \S~\ref{sec-exisitingprocedures} and \S~\ref{sec-simulations} we apply the procedure for prevalent settings where informative selection takes place.    We demonstrate the excellent power properties of our novel procedure with numerical experiments. Moreover, for classification outcomes we introduce  an additional initial step in our oracle-guided, data driven procedure to  address realistic settings involving  distribution shift between the training data used to define the non-conformity scores and the calibration and test data. 
   \item In \S~\ref{sec-discussion} we discuss extensions of interest. 
\end{itemize}

\section{The oracle decision policy }\label{sec-optimal}

Let $D(X)\in\{0,1\}$ denote a binary decision function for an example $(X,Y)\sim P_{XY}$. 
The value $D(X)=1$ indicates that the analyst constructs a prediction set for $Y$, while $D(X)=0$ indicates that no prediction set is constructed. 
When $D(X)=1$, the constructed prediction set is given by the set-valued function $C(X)\in\mathcal I$, where $\mathcal I$ denotes the collection of informative prediction sets.

Let $w:\mathcal I\to (0,B)$ be a bounded positive weight function that quantifies the utility of constructing a prediction set $C$.
Higher values of $w(C)$ correspond to more informative or more concentrated prediction sets.
Our goal is to maximize the weighted power
\begin{align}
\Pi(D,C)
&=
\mE\left[
\frac{1}{m}\sum_{i\in[m]}
w\!\left(C(X_{n+i})\right)
\mI\{Y_{n+i}\in C(X_{n+i})\}
D(X_{n+i})
\right]\nonumber\\
&=
\mE\left[
w\!\left(C(X)\right)
\mI\{Y\in C(X)\}
D(X)
\right].
\label{equPower}
\end{align}

\begin{remark}\label{rem-gain function}
The weight function $w(C)$ allows the analyst to encode preferences over prediction sets.
In many settings, higher utility corresponds to higher resolution, meaning that the prediction set is more concentrated and therefore more informative about the value of $Y$.
For example, for classification outcomes one may take $w(C)=1/|C|$, which penalizes large label sets, and  can be interpreted as measuring the average probability mass per element of $C$. 
Other weight functions may be preferable in applications where the informativeness of prediction sets with the same cardinality varies. 
In regression settings,  $w(C)$ may depend on the length (or Lebesgue measure)  of an interval, so $w(C)=1/|C|$ remains a natural choice. 
In our numerical examples we adopt the choice $w(C)=1/|C|$.
\end{remark}

We further require that the inflation of non-covering prediction sets among the selected examples be small. 
In selective conformal prediction, the FCR has typically been used for this purpose \citep{bao2024selective, GazinHellerMarandonRoquain2025}:
\begin{equation}\label{eq-FCR}
FCR(D,C)
=
\mE\left[
\frac{
\sum_{i\in[m]}\mI\{Y_{n+i}\notin C(X_{n+i})\}D(X_{n+i})
}{
1\lor \sum_{i\in[m]}D(X_{n+i})
}
\right].
\end{equation}

A closely related error measure is the marginal FCR, defined as
\begin{equation}\label{eq-mFCR}
mFCR(D,C)
=
\frac{
\mE\left[
\sum_{i\in[m]}\mI\{Y_{n+i}\notin C(X_{n+i})\}D(X_{n+i})
\right]
}{
\mE\left[
\sum_{i\in[m]}D(X_{n+i})
\right]
}
=
\frac{
\mE\left[\mI\{Y\notin C(X)\}D(X)\right]
}{
\mE\left[D(X)\right]
}.
\end{equation}
We use the convention that if the denominator is zero, the error rate is defined to be zero. 
We shall impose the constraint $mFCR(D,C)\leq \alpha$, which is simpler to handle technically. Note, however, that in this setting $mFCR(D,C)$ and $FCR(D,C)$ are tightly connected, as formalized in the following proposition. 

\begin{prop}\label{prop-FCR-oracle}
    Suppose $(X_i,Y_i)_{i\in [n+m]}$ is an iid sample from $P_{XY}$.  Then for all $D$ and $C$, $$FCR(D,C) = mFCR(D,C)\:\mP\Big(\sum_{i\in [m]}D(X_{n+i})>0\Big). $$ 
\end{prop}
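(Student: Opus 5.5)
The plan is the standard exchangeability argument for ratios of sums of iid terms, conditioning on the number of selected examples. Throughout, $D$ and $C$ are fixed (non-random) functions, so the pairs $(D_i, V_i)$ with $D_i:=D(X_{n+i})$ and $V_i:=\mI\{Y_{n+i}\notin C(X_{n+i})\}D(X_{n+i})$ are iid across $i\in[m]$. Set $N:=\sum_{i\in[m]} D_i$. The denominator in the FCR is then $1\lor N$, and on $\{N=0\}$ the numerator vanishes, so
\[
FCR(D,C)=\sum_{k=1}^m \frac{1}{k}\,\mE\Big[\sum_{i\in[m]} V_i\,\mI\{N=k\}\Big].
\]

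\textbf{Key step.} For each $k\ge 1$, compute $\mE[V_i\mid N=k]$. By exchangeability of the iid pairs, $\mE[V_i \mI\{N=k\}]$ does not depend on $i$. Moreover, $V_i=V_iD_i$, so it suffices to condition on $D_i=1$:
\[
\mE[V_i\mI\{N=k\}]=\mE\big[V_i\,\mI\{D_i=1\}\,\mI\{\textstyle\sum_{j\ne i}D_j=k-1\}\big].
\]
Since $\sum_{j\ne i}D_j$ depends only on the other pairs, it is independent of $(D_i,V_i)$. Hence this factors as
\[
\mE[V_1]\;\mP\Big(\textstyle\sum_{j\ne 1}D_j=k-1\Big).
\]
Applying the same computation with $D_i$ in place of $V_i$ gives
\[
\mE[D_1\mI\{N=k\}]=\mE[D_1]\,\mP\Big(\textstyle\sum_{j\ne 1}D_j=k-1\Big).
\]
Combining the two, for $\mE[D_1]>0$,
\[
\mE[V_1\mI\{N=k\}]=\frac{\mE[V_1]}{\mE[D_1]}\,\mE[D_1\mI\{N=k\}]=mFCR(D,C)\,\mE[D_1\mI\{N=k\}].
\]

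\textbf{Assembly.} Summing over $i$ gives
\[
\mE\Big[\sum_i V_i\mI\{N=k\}\Big]=mFCR\cdot\mE[N\mI\{N=k\}]=mFCR\cdot k\,\mP(N=k).
\]
Dividing by $k$ and summing over $k\ge1$ yields
\[
FCR=mFCR\cdot\mP(N>0),
\]
which is the claim. The degenerate case $\mE[D(X)]=0$ is handled separately: then $N=0$ almost surely, so both sides are zero under the stated convention.

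\textbf{Main obstacle.} The only delicate point is justifying the factorization, namely the independence of $(D_i,V_i)$ from $\sum_{j\ne i}D_j$. This requires that $D$ and $C$ do not depend on the data, in particular not on the calibration sample or on other test points. I will state this explicitly as the interpretation of ``for all $D$ and $C$'' in the oracle setting. The rest is bookkeeping.
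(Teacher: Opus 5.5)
Your proof is correct and is essentially the paper's argument: the paper conditions on the full vector $(D(X_{n+j}))_{j\in[m]}$ and uses independence and identical distribution to replace each selected point's miscoverage probability by the common value $\mP(Y\notin C(X)\mid D(X)=1)=mFCR(D,C)$, leaving $\mE[N/(1\lor N)]=\mP(N>0)$. Your decomposition over $\{N=k\}$ with the leave-one-out factorization is the same independence argument with slightly different bookkeeping, and it also treats the degenerate case $\mE[D(X)]=0$ explicitly, which the paper leaves implicit.
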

See \ref{proof-prop-FCR-oracle} for the proof. 

Since $\mP(\sum_{i\in [m]} D(X_{n+i})>0)\in [0,1]$, we have $FCR(D,C)\leq mFCR(D,C). $ Thus level $\alpha$ control of the  $mFCR(D,C)$ entails  control the $FCR(D,C)$:  $$FCR(D,C)\leq mFCR(D,C)\leq \alpha.$$ 

\subsection{Oracle problem formulation}\label{subsec-oracle-problem-formulation}
We consider the oracle setting in which the nonconformity scores perfectly recover the conditional probability of the outcome belonging to a candidate prediction set.
Specifically, for each $(X,Y)\sim P_{XY}$ and each $C\in\mathcal I$, the conditional probability $\mP(Y\in C\mid X)$ is known. For classification outcomes, this reduces to recovering precisely the  probability of belonging to a class $k$, $\mP(Y=k\mid X)$,  for all $k\in[K]$. 

We seek functions $D$ and $C$ that solve the following optimization problem:
\begin{eqnarray}
&& \max_{D:\mathcal X\to\{0,1\},\; C:\mathcal X\to\mathcal I} \ \Pi(D,C) \nonumber\\
&& \text{subject to } \quad mFCR(D,C)\leq \alpha.
\label{eq-OMT-problem}
\end{eqnarray}

In this oracle setting, both the weighted power and the $mFCR$ constraint can be written in terms of the known conditional probabilities. 
Specifically,
\begin{equation}\label{equPower-2}
\Pi(D,C)
:=
\mE\left[
w\!\left(C(X)\right)
\mP(Y\in C(X)\mid X)
D(X)
\right].
\end{equation}
The $mFCR$ constraint can be equivalently expressed using
\begin{equation}\label{def-G(D,C)}
G(D,C)
:=
\mE\left[
\left(1-\mP(Y\in C(X)\mid X)-\alpha\right)
D(X)
\right].
\end{equation}
After rearrangement, the constraint $mFCR(D,C)\leq\alpha$ is equivalent to requiring $G(D,C)\leq 0$, which is linear in the decision function $D$.

We assume $\mathcal I$ to be a finite collection of candidate prediction sets. 
This assumption is naturally satisfied for classification outcomes, where  $\mathcal I$ may consist of a finite collection of label subsets, for example all subsets up to a prescribed cardinality or a curated family of interpretable label groups.
In regression settings, $\mathcal I$ may consist of a finite grid of candidate prediction intervals, such as sign-restricted intervals with endpoints taken from a fixed finite grid.

Using $G(D,C)$ defined in  \eqref{def-G(D,C)}, our optimization problem  \eqref{eq-OMT-problem} is therefore expressed as follows for finite $\mathcal I$:
\begin{eqnarray}
&& \max_{D:\mathcal X\to\{0,1\},\; C:\mathcal X\to\mathcal I} \ \Pi(D,C) \nonumber\\
&& \text{subject to } \quad G(D,C)\leq 0.
\label{eq-OMT-problem-Gconstraint}
\end{eqnarray}

Next, we describe the sequence of results leading to the solution of the optimization problem in \eqref{eq-OMT-problem-Gconstraint}.

\subsection{Solving the Lagrangian with upper-envelope geometry
}\label{subsec-optimal-constructing}

Consider the Lagrangian for the optimization problem \eqref{eq-OMT-problem-Gconstraint}
\begin{align*}
\mathcal{L}(D,C,\mu)
&:=
\Pi(D,C)-\mu G(D,C)\\
&=
\mE\Big[
\big(
w(C(X))\mP(Y\in C(X)\mid X)
-\mu(1-\mP(Y\in C(X)\mid X)-\alpha)
\big)
D(X)
\Big],
\end{align*}
where $\mu\geq 0$ and $\Pi(D,C)$ and $G(D,C)$ are defined in \eqref{equPower-2} and \eqref{def-G(D,C)}, respectively. 

We first solve the reduced problem
\begin{equation}
 \max_{D:\mathcal X\to\{0,1\},\; C:\mathcal X\to\mathcal I} \ \mathcal{L}(D,C,\mu) 
\label{eq-OMT-reduced}
\end{equation}
For fixed $\mu\geq 0$, the Lagrangian is separable across $x\in \mathcal X$. Thus the maximization can be carried out pointwise. For $x\in \mathcal X$, each candidate $C\in \mathcal I$ defines a line in $\mu$:
$$\ell_{x,C}(\mu) = w(C) \mP(Y \in C \mid X = x) + \mu \left(\mP(Y \in C \mid X = x) - (1 - \alpha)\right).$$ The intercept $w(C) \mP(Y \in C \mid X = x)$ is the weighted oracle power contribution of $C$ at~$x$. The slope $\mP(Y \in C \mid X = x) - (1 - \alpha)$ is the conditional coverage difference of $C$ from $1-\alpha.$ Thus, as $\mu$ increases, the optimization gives increasing preference to sets with larger conditional coverage probability. 

Define the upper envelope $$ \mathcal{U}_x(\mu) = \max_{C \in \mathcal{I}} \ell_{x,C}(\mu).$$

The candidate informative prediction set selected at $x\in \mathcal X$ for multiplier $\mu$ is the set that attains the upper envelope $\mathcal{U}_x(\mu)$, defined as:
\begin{equation}\label{eq-C-mu}
C^{\mu}(x) \in \arg\max_{C\in\mathcal I} \ell_{x,C}(\mu).
\end{equation}
 To ensure maximum power (which is  the expectation of $w(C)\mP(Y\in C\mid X)D(X)$), we break ties by selecting the set with largest weight, see \S~\ref{app-oracle-tie-breaking} for the proof.  
 If  more than one prediction set corresponding to the largest weight, we further break ties in conditional  coverage probability by selecting  the one with the smallest index in a lexicographically ordered $\mathcal I$.\footnote{Assumption \ref{ass-probratio} ensures that such events occur with probability zero by requiring that any nontrivial linear combination of the conditional coverage probabilities has a non-atomic distribution.} While the choice of tie breaking  among sets with equal weight is flexible,  it must be systematic so that if $C_1\neq C_2$ but $\ell_{x,C_1}(\cdot) = \ell_{x,C_2}(\cdot)$,  exactly one is uniquely chosen  for the line segment forming the upper envelope.

The corresponding decision rule is \begin{equation}\label{eq-D-mu}
D^{\mu}(x) = \mI\left\{ \mathcal{U}_x(\mu) \geq 0 \right\}.
\end{equation}

Since $\mathcal{U}_x(\mu)$ is the upper envelope of lines, it is piecewise linear. Therefore $C^{\mu}(x)$ and  $D^{\mu}(x)$ are piecewise constant. $C^{\mu}(x)$ changes only at {\it breakpoints} and  where the envelope switches from one line to another; $D^{\mu}(x)$ changes only at the {\it zero crossing} where $\mathcal U_x(\mu)$ crosses zero. See illustration in Figure \ref{fig-envelope}. 

The following two lemmas  are key to solving the optimization problem in \eqref{eq-OMT-problem-Gconstraint}. 

\begin{lem} \label{lem-envelope-properties}
Let $\mathcal{U}(\mu)$ be the upper envelope of lines $\ell_1(\mu), \dots, \ell_r(\mu)$. Denote by $\ell_{(1)}, \dots, \ell_{(s)}$ the lines whose segments appear on $\mathcal{U}(\mu)$ for $\mu \ge 0$, ordered from left to right. Then:
\begin{enumerate}
\item $\mathcal{U}$ is convex.
\item The sequence of slopes of $\ell_{(1)}, \dots, \ell_{(s)}$ is non-decreasing.
\item The sequence of intercepts of $\ell_{(1)}, \dots, \ell_{(s)}$ is non-increasing.
\item $\ell_{(s)}$ has the largest slope among $\ell_1, \dots, \ell_r$. If multiple lines have the same slope as $\ell_{(s)}$, then $\ell_{(s)}$ has the largest intercept among them.
\end{enumerate}
\end{lem}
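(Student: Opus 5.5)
We write each line as $\ell_j(\mu)=a_j+b_j\mu$, with intercept $a_j$ and slope $b_j$, so that $\mathcal U(\mu)=\max_j \ell_j(\mu)$ on $[0,\infty)$. The plan is to take the four claims in order. Claim~1 is immediate, claim~2 follows from a two-point comparison, claim~3 follows from claim~2 together with continuity of $\mathcal U$ at breakpoints, and claim~4 comes from a limiting argument as $\mu\to\infty$.

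For item~1, each $\ell_j$ is affine and hence convex, and a pointwise maximum of finitely many convex functions is convex. For item~2, take two consecutive envelope lines $\ell_{(k)}$ and $\ell_{(k+1)}$ with breakpoint $\mu_k\ge 0$. Choose $\mu'<\mu_k<\mu''$ with $\mu'$ in the interior of the segment of $\ell_{(k)}$ and $\mu''$ in the interior of the segment of $\ell_{(k+1)}$; the first segment may start at $0$, in which case take $\mu'\in[0,\mu_k)$. Since $\ell_{(k)}$ attains the maximum at $\mu'$, we have $\ell_{(k)}(\mu')\ge \ell_{(k+1)}(\mu')$. Since $\ell_{(k+1)}$ attains the maximum at $\mu''$, we have $\ell_{(k+1)}(\mu'')\ge \ell_{(k)}(\mu'')$. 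The difference $\ell_{(k+1)}-\ell_{(k)}$ is affine, is $\le 0$ at $\mu'$ and $\ge 0$ at $\mu''>\mu'$, so its slope $b_{(k+1)}-b_{(k)}$ is $\ge 0$. Equivalently, the right derivative of the convex function $\mathcal U$ is non-decreasing, and on each segment it equals the slope of the active line.

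For item~3, at the breakpoint $\mu_k\ge0$ the two lines agree: $a_{(k)}+b_{(k)}\mu_k=a_{(k+1)}+b_{(k+1)}\mu_k$. Hence
\[
a_{(k)}-a_{(k+1)}=(b_{(k+1)}-b_{(k)})\mu_k\ge 0,
\]
using item~2 and $\mu_k\ge0$. Here the restriction to $\mu\ge 0$ is essential, and it is the only place where care is needed. If two consecutive lines had equal slopes, they would be parallel and could not both appear with distinct segments unless they coincide. In that case the systematic tie-breaking in the paper selects only one of them, so we may assume consecutive lines are distinct and have strictly increasing slopes.

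For item~4, suppose some $\ell_j$ had $b_j>b_{(s)}$. Then $\ell_j(\mu)-\ell_{(s)}(\mu)\to\infty$ as $\mu\to\infty$, so $\ell_j$ would exceed $\ell_{(s)}$ for large $\mu$. This contradicts the fact that $\ell_{(s)}$ attains the envelope on an unbounded final interval $[\mu_{s-1},\infty)$; that interval is unbounded because there are finitely many lines and hence finitely many breakpoints. If instead $b_j=b_{(s)}$ and $a_j>a_{(s)}$, then $\ell_j>\ell_{(s)}$ everywhere, again a contradiction. Ties with equal slope and equal intercept are resolved by the tie-breaking convention.

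I expect the main obstacle to be bookkeeping rather than mathematics. One has to make precise what it means for a line to ``appear'' on the envelope, handling segments of zero length and coincident lines, and to make sure the ordering left to right is well defined. I would handle this by defining the $\ell_{(k)}$ via the right-continuous argmax with the stated tie-breaking, which makes the breakpoints and final interval well defined.
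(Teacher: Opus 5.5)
Your proposal is correct and takes essentially the same route as the paper: the envelope is convex, slopes are monotone by convexity (your two-point comparison is an explicit form of the paper's appeal to convexity), intercepts are ordered because consecutive lines meet at a nonnegative breakpoint, and item~4 follows from the large-$\mu$ domination argument. Your identity $a_{(k)}-a_{(k+1)}=(b_{(k+1)}-b_{(k)})\mu_k\ge 0$ is slightly tidier than the paper's contradiction argument, because it also covers a breakpoint at $\mu_k=0$, where the paper's claim of an intersection at some $\mu>0$ with strictly decreasing intercepts need not literally hold.
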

See \ref{app-proof-lem-envelope-properties} for the proof. 

We denote the largest conditional coverage probability among the informative prediction sets by 
\begin{equation}\label{eq-T(X)}
T(x) = \max_{C\in\mathcal{I}} \mP(Y\in C\mid X = x).
\end{equation}
It plays a critical role in determining the behavior of the upper envelope $\mathcal U_x(\mu)$ and the subsequent selection decisions. 

\begin{lem} \label{lem-U-by-Tx}
If $T(x) \ge 1 - \alpha$, then $\mathcal{U}_x(\mu) \ge 0$ for all $\mu \ge 0$. Otherwise, $\mathcal{U}_x(\mu)$ is decreasing in $\mu$.
\end{lem}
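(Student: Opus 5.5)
We will argue directly from the definition of the upper envelope. For each $C\in\mathcal I$, the line $\ell_{x,C}(\mu)$ has intercept $w(C)\mP(Y\in C\mid X=x)\ge 0$, because $w>0$, and slope $\mP(Y\in C\mid X=x)-(1-\alpha)$. So everything depends on how these slopes compare with $0$, and by \eqref{eq-T(X)} the largest slope is exactly $T(x)-(1-\alpha)$.

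\emph{Case $T(x)\ge 1-\alpha$.} Let $C^*$ attain the maximum in \eqref{eq-T(X)}; it exists because $\mathcal I$ is finite. Then $\ell_{x,C^*}$ has non-negative intercept and non-negative slope, so $\ell_{x,C^*}(\mu)\ge 0$ for every $\mu\ge 0$. Since $\mathcal U_x(\mu)\ge \ell_{x,C^*}(\mu)$, we get $\mathcal U_x(\mu)\ge 0$ on $[0,\infty)$.

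\emph{Case $T(x)<1-\alpha$.} Every line $\ell_{x,C}$ now has slope at most $T(x)-(1-\alpha)<0$, so each one is strictly decreasing in $\mu$. A pointwise maximum of finitely many strictly decreasing functions is strictly decreasing. Indeed, take $\mu_1<\mu_2$ and let $C_2$ attain the maximum at $\mu_2$. Then
\[
\mathcal U_x(\mu_2)=\ell_{x,C_2}(\mu_2)<\ell_{x,C_2}(\mu_1)\le \mathcal U_x(\mu_1).
\]
As an alternative check, Lemma~\ref{lem-envelope-properties} shows that $\mathcal U_x$ is convex and piecewise linear, and its rightmost segment carries the largest slope, which is negative. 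So all segment slopes are negative and $\mathcal U_x$ is decreasing.

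No step here is a real obstacle. The only points needing care are that the intercepts are non-negative, which is what makes the first case work, and that finiteness of $\mathcal I$ guarantees both that the maximizer $C^*$ exists and that the argmax at $\mu_2$ is attained.
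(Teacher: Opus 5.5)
Your proof is correct and takes the same route as the paper. A maximizer of $T(x)$ gives a line with non-negative intercept and non-negative slope, and that line bounds $\mathcal U_x$ from below; when $T(x)<1-\alpha$, every line has negative slope, so their pointwise maximum is decreasing. Your explicit check that a finite maximum of strictly decreasing lines is strictly decreasing fills in a step the paper asserts without detail.
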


\begin{proof}
If $T(x) \ge 1 - \alpha$, then there exists $C' \in \mathcal{I}$ such that $\ell_{x,C'}(\mu)$ has a non-negative slope. Since its intercept is non-negative, $\ell_{x,C'}(\mu) \ge 0$ for all $\mu \ge 0$. Therefore $\mathcal{U}_x(\mu) \ge \ell_{x,C'}(\mu) \ge 0$ for all $\mu \ge 0$. Otherwise, assume that $T(x) < 1 - \alpha$. Then $\mP(Y \in C \mid X = x) < 1 - \alpha$ for all $C \in \mathcal{I}$. In this case, all the lines have negative slopes and thus $\mathcal{U}_x(\mu)$ is decreasing.
\end{proof}

In the following proposition, we establish monotonicity properties for key terms involving $C^{\mu}$ and $D^{\mu}$ using Lemma~\ref{lem-envelope-properties} and Lemma~\ref{lem-U-by-Tx}.

\begin{prop}\label{prop-monotoneD}
For each fixed $x\in \mathcal X$, as $\mu\geq 0$ is increasing: 
\begin{enumerate}
    \item $\mP(Y\in C^{\mu}(x) \mid X=x)$ is non-decreasing.
    \item $w(C^{\mu}(x))\mP(Y\in C^{\mu}(x) \mid X=x)$ is non-increasing.
    \item $D^{\mu}(x)$ is non-increasing.
    \item $w(C^{\mu}(x))\mP(Y\in C^{\mu}(X) \mid X=x)D^{\mu}(x)$ is non-increasing.
    \item $(1 - \mP(Y \in C^{\mu}(x) \mid X = x) -\alpha)D^{\mu}(x)$ is non-increasing.
\end{enumerate}
\end{prop}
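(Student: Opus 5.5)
Fix $x\in\mathcal X$ and write $p_C=\mP(Y\in C\mid X=x)$. The argument uses the geometry of the upper envelope $\mathcal U_x$ from Lemma~\ref{lem-envelope-properties}, together with the tie-breaking rule for $C^\mu(x)$.

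\emph{Items 1 and 2.} As $\mu$ increases, the maximizer $C^\mu(x)$ moves along the envelope segments $\ell_{(1)},\dots,\ell_{(s)}$ from left to right. Two things need care here.

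\begin{itemize}
\item At a breakpoint, the tie-breaking rule (largest weight, then lexicographic order) picks one line consistently. We check that this is compatible with the left-to-right ordering: at a breakpoint $\mu_0$, the tied lines have equal values there, and the selected one is some $\ell_{(j)}$ or $\ell_{(j+1)}$. Either choice sits between its neighbours in the order, so monotonicity is unaffected.
\item Lines that coincide entirely are treated as a single line, because the rule selects exactly one of them.
\end{itemize}

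By part~2 of Lemma~\ref{lem-envelope-properties}, the slopes $p_{C}-(1-\alpha)$ are non-decreasing along this sequence, which gives item~1. By part~3, the intercepts $w(C)p_C$ are non-increasing, which gives item~2.

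The main delicate point is rigour about segments that are degenerate on $[0,\infty)$. I would formalise it as follows. For $\mu<\mu'$, let $C=C^\mu(x)$ and $C'=C^{\mu'}(x)$. Optimality of each at its own $\mu$ gives $\ell_{C}(\mu)\ge \ell_{C'}(\mu)$ and $\ell_{C'}(\mu')\ge\ell_C(\mu')$. Subtracting the two inequalities shows that the slope of $\ell_{C'}$ is at least the slope of $\ell_C$. This is the standard exchange argument and yields item~1 directly, without invoking the envelope ordering.

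For item~2, split on the slopes.
\begin{itemize}
\item If the slopes differ strictly, the inequality $\ell_C(\mu)\ge\ell_{C'}(\mu)$ with $\mu\ge0$ forces the intercept of $\ell_C$ to be at least that of $\ell_{C'}$.
\item If the slopes are equal, the lines are parallel and both are optimal, so they coincide. Then the intercepts, and hence the products $w\,p$, are equal.
\end{itemize}

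\emph{Item 3.} By Lemma~\ref{lem-U-by-Tx}, there are two cases. If $T(x)\ge1-\alpha$, then $\mathcal U_x(\mu)\ge0$ for all $\mu\ge0$, so $D^\mu(x)\equiv1$. Otherwise $\mathcal U_x$ is decreasing, so $\{\mu:\mathcal U_x(\mu)\ge0\}$ is an initial interval and $D^\mu(x)=\mI\{\mathcal U_x(\mu)\ge0\}$ is non-increasing.

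\emph{Item 4.} The quantity $w(C^\mu(x))\,p_{C^\mu(x)}\,D^\mu(x)$ is a product of two non-negative non-increasing functions, by items~2 and~3. Such a product is non-increasing.

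\emph{Item 5.} This is the step that is not a pure product argument, because $1-p_{C^\mu(x)}-\alpha$ can have either sign.
\begin{itemize}
\item Where $D^\mu(x)=1$ throughout an interval of $\mu$, the factor $1-p_{C^\mu(x)}-\alpha$ is non-increasing by item~1.
\item The remaining case is the drop of $D$ from $1$ to $0$. By item~3 this happens only when $T(x)<1-\alpha$. In that case every $C\in\mathcal I$ has $1-p_C-\alpha>0$, so just before the drop the expression is positive, and after the drop it equals $0$. The drop is therefore a decrease.
\item When $T(x)\ge1-\alpha$, we have $D\equiv1$ and item~1 alone suffices.
\end{itemize}
Combining the cases gives item~5.
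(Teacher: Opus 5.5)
Your proof is correct. Items 3--5 follow the paper's argument; your pairwise ``drop'' analysis in item 5 is just the paper's observation that, when $T(x)<1-\alpha$, both factors are non-negative and non-increasing.

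Items 1--2 are where you take a genuinely different route. The paper reads them off Lemma~\ref{lem-envelope-properties}: $C^{\mu}(x)$ runs through the envelope segments $\ell_{(1)},\dots,\ell_{(s)}$ from left to right, and along these the slopes increase and the intercepts decrease. That is the natural geometric picture, but it tacitly assumes the selected set is always one of those segments and moves monotonically through them. In other words, it relies on tie-breaking behaving well at breakpoints and for coincident lines; your preliminary bullets assert this rather than prove it.

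Your exchange argument makes all of that unnecessary. Adding $\ell_{C}(\mu)\ge\ell_{C'}(\mu)$ and $\ell_{C'}(\mu')\ge\ell_{C}(\mu')$ gives $(\mu'-\mu)(p_{C'}-p_C)\ge 0$. The first inequality alone then gives $w(C)p_C-w(C')p_{C'}\ge \mu(p_{C'}-p_C)\ge 0$ for $\mu\ge 0$, so you do not even need to split into equal and unequal slopes. Since this uses only that $C^{\mu}(x)$ is \emph{some} maximizer, it holds under every tie-breaking rule. In particular, it proves items 1--2 of Proposition~\ref{prop-monotoneD-procedure} verbatim, whereas the paper merely asserts that switching to smaller-weight tie-breaking does not affect monotonicity.

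In exchange, the paper's route sets up the segment and breakpoint structure that it reuses later: for right-continuity, for the large-$\mu$ behaviour, and for the envelope-traversal algorithm.
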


\begin{proof}
\begin{enumerate}
\item The slopes of the segments that form $\mathcal{U}_x$ are $\mP(Y \in C^{\mu}(x) \mid X = x) - (1 - \alpha)$. By Lemma~\ref{lem-envelope-properties}, the sequence of slopes is non-decreasing. Therefore $\mP(Y \in C^{\mu}(x) \mid X = x)$ is non-decreasing.

\item The intercepts of the lines that correspond to the segments that form $\mathcal{U}_x$ are $w(C^{\mu}(x))\mP(Y \in C^{\mu}(x) \mid X = x)$. By Lemma~\ref{lem-envelope-properties}, the sequence of intercepts is non-increasing. Therefore $w(C^{\mu}(x))\mP(Y \in C^{\mu}(x) \mid X = x)$ is non-increasing.

\item By Lemma~\ref{lem-U-by-Tx}, if $T(x) \ge 1 - \alpha$, then $D^{\mu}(x) = \mI\left\{ \mathcal{U}_x(\mu) \geq 0 \right\} = 1$ for all $\mu \ge 0$. Otherwise, $\mathcal{U}_x(\mu)$ is decreasing, and so $D^{\mu}(x)$ is non-increasing.

\item $w(C^{\mu}(x))\mP(Y\in C^{\mu}(x) \mid X=x)D^{\mu}(x)$ is non-increasing since $w(C^{\mu}(x))\mP(Y\in C^{\mu}(x) \mid X=x)$ and $D^{\mu}(x)$ are both non-increasing and non-negative.

\item Since $\mP(Y\in C^{\mu}(x) \mid X=x)$ is non-decreasing, $1 - \mP(Y \in C^{\mu}(x) \mid X = x) - \alpha$ is non-increasing. If $T(x) \ge 1 - \alpha$, then by Lemma~\ref{lem-U-by-Tx}, $D^{\mu}(x) = 1$ for all $\mu \ge 0$, and therefore $(1 - \mP(Y \in C^{\mu}(x) \mid X = x) -\alpha)D^{\mu}(x)$ is non-increasing. Otherwise, assume $T(x) < 1 - \alpha$. Then $\mP(Y \in C \mid X = x) < 1 - \alpha$ for all $C \in \mathcal{I}$. Hence, $1 - \mP(Y \in C^{\mu}(x) \mid X = x) - \alpha > 0$ for all $\mu \ge 0$. Therefore $(1 - \mP(Y \in C^{\mu}(x) \mid X = x) -\alpha)D^{\mu}(x)$ is non-increasing as the product of two non-increasing and non-negative functions.
\end{enumerate}
\end{proof}

An immediate consequence is that the expectation of items 4 and 5 of Proposition \ref{prop-monotoneD}, which correspond to the power objective and constraint of the optimization problem in \eqref{eq-OMT-problem-Gconstraint}, are both non-increasing in $\mu$.
\begin{coro}
$\Pi(D^{\mu}, C^{\mu})$ and $G(D^{\mu}, C^{\mu})$ are non-increasing in $\mu.$
\end{coro}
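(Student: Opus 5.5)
The corollary follows by taking expectations of the pointwise statements in items 4 and 5 of Proposition~\ref{prop-monotoneD}. By \eqref{equPower-2} and \eqref{def-G(D,C)},
\begin{align*}
\Pi(D^{\mu},C^{\mu}) &= \mE\big[f_\mu(X)\big], \qquad f_\mu(x):=w(C^{\mu}(x))\,\mP(Y\in C^{\mu}(x)\mid X=x)\,D^{\mu}(x),\\
G(D^{\mu},C^{\mu}) &= \mE\big[g_\mu(X)\big], \qquad g_\mu(x):=\big(1-\mP(Y\in C^{\mu}(x)\mid X=x)-\alpha\big)D^{\mu}(x).
\end{align*}
Fix $0\le \mu_1\le\mu_2$. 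Items 4 and 5 of Proposition~\ref{prop-monotoneD} give, for every $x\in\mathcal X$, the inequalities $f_{\mu_2}(x)\le f_{\mu_1}(x)$ and $g_{\mu_2}(x)\le g_{\mu_1}(x)$. Monotonicity of expectation then yields $\Pi(D^{\mu_2},C^{\mu_2})\le \Pi(D^{\mu_1},C^{\mu_1})$ and $G(D^{\mu_2},C^{\mu_2})\le G(D^{\mu_1},C^{\mu_1})$.

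Two technical points need checking before this step is legitimate.

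\emph{Integrability.} Since $w$ takes values in $(0,B)$ and conditional probabilities lie in $[0,1]$, we have $0\le f_\mu\le B$ and $|g_\mu|\le 1+\alpha$. Both functions are therefore bounded, and all the expectations are finite.

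\emph{Measurability.} We need $x\mapsto C^{\mu}(x)$ and $x\mapsto D^{\mu}(x)$ to be measurable. Because $\mathcal I$ is finite, each $\ell_{x,C}(\mu)$ is a measurable function of $x$, as it is built from the measurable maps $x\mapsto \mP(Y\in C\mid X=x)$. Hence $\mathcal U_x(\mu)$, being a finite maximum, is measurable in $x$. The argmax with the deterministic tie-breaking rule (largest weight, then lexicographic order) is determined by finitely many measurable comparisons, so $C^{\mu}$ is measurable. Finally $D^{\mu}(x)=\mI\{\mathcal U_x(\mu)\ge 0\}$ is measurable.

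I do not expect a real obstacle here, because the monotonicity is already established pointwise in Proposition~\ref{prop-monotoneD}. The only point needing care is the measurability of the tie-broken argmax, and the finiteness of $\mathcal I$ settles it.
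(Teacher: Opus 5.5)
Your proposal is correct and matches the paper's argument: the paper also obtains the corollary by taking expectations of the pointwise monotonicity in items 4 and 5 of Proposition~\ref{prop-monotoneD}. Your integrability and measurability checks are details the paper leaves implicit, and they go through as you state them.
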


\begin{figure}[htbp]
    \centering
    \begin{tabular}{ccc}
    	\hspace{-0.03\textwidth}
        \includegraphics[width=0.44\textwidth,valign=t]{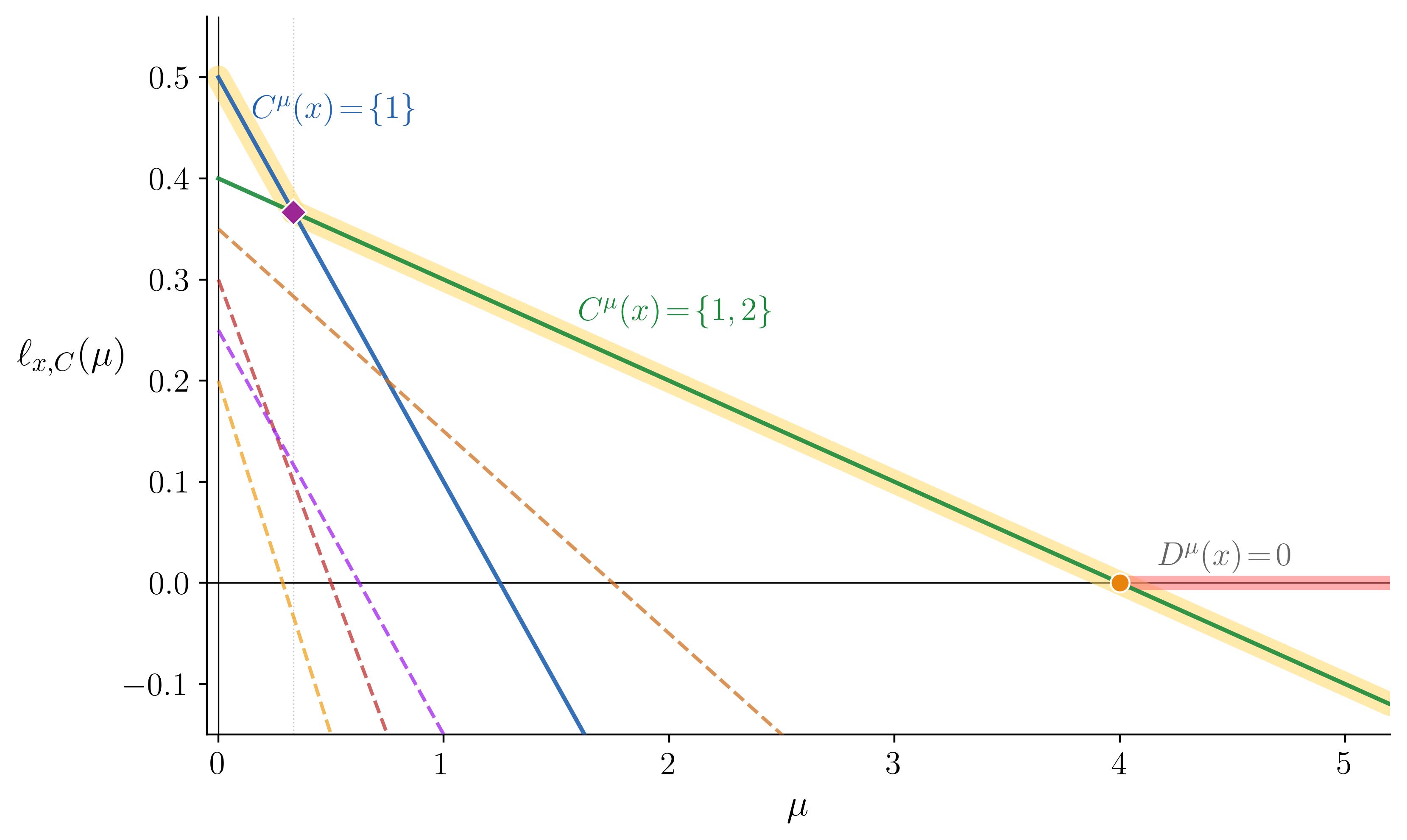}
        \includegraphics[width=0.44\textwidth,valign=t]{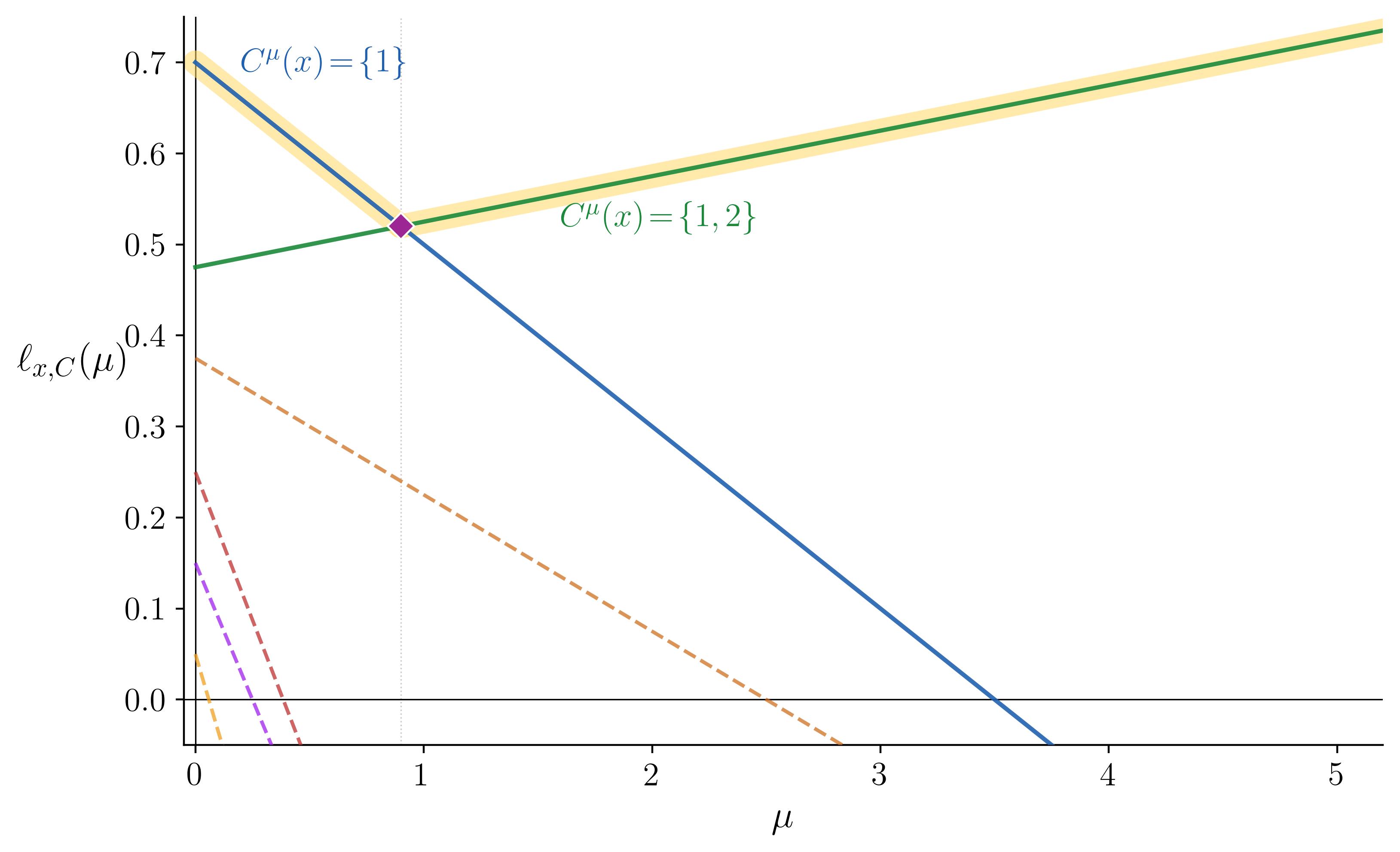}
        \hspace{0.005\textwidth}
        \includegraphics[width=0.12\textwidth,valign=t]{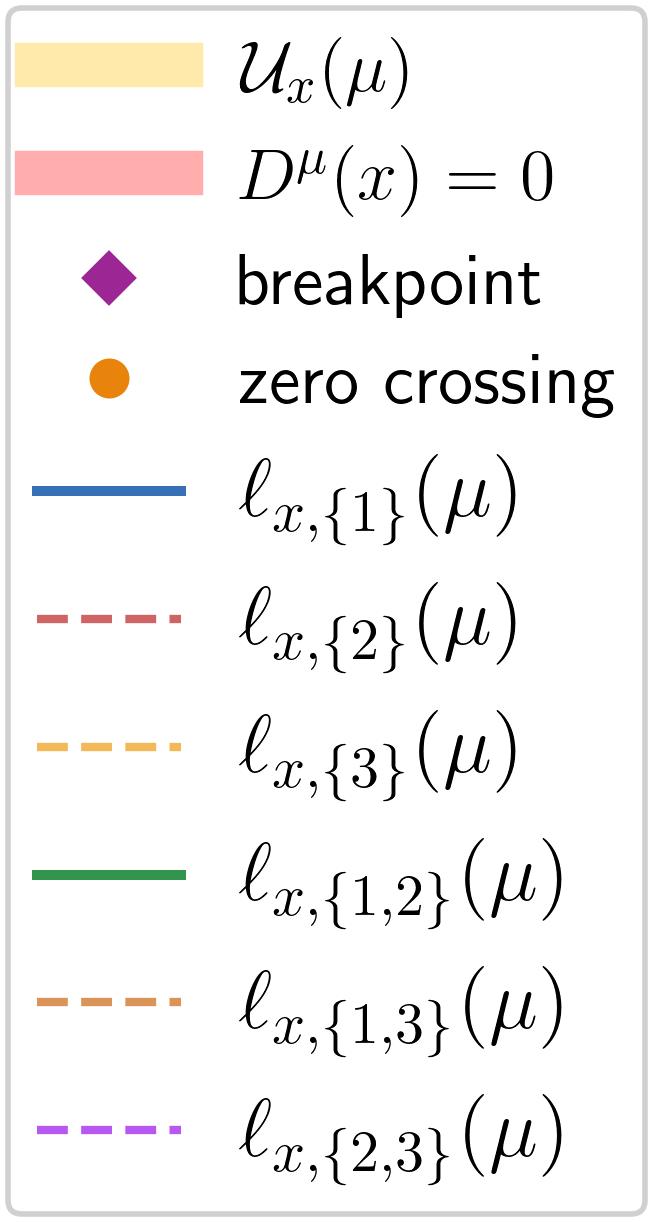}
    \end{tabular}
    \caption{An example for $K = 3$, where $\mathcal{I}$ is all sets with $1$ or $2$ classes, $w(C) = \frac{1}{|C|}$, and $\alpha = 0.1$. In the left panel, the class probabilities are $0.5$, $0.3$, and $0.2$. The upper envelope $\mathcal{U}_x(\mu)$ consists of $\ell_{x,\{1\}}(\mu)$ for $\mu \in [0, \frac{1}{3}]$ and $\ell_{x,\{1,2\}}(\mu)$ for $\mu \in (\frac{1}{3}, \infty)$, with a zero crossing at $\mu = 4$. Therefore $C^{\mu}(x)$ is $\{1\}$ for $\mu \in [0, \frac{1}{3}]$ and $\{1,2\}$ for $\mu > \frac{1}{3}$, and $D^{\mu}(x) = \mI\{\mu \le 4\}$. In the right panel, the class probabilities are $0.7$, $0.25$, and $0.05$. The upper envelope $\mathcal{U}_x(\mu)$ consists of $\ell_{x,\{1\}}(\mu)$ for $\mu \in [0, \frac{9}{10}]$ and $\ell_{x,\{1,2\}}(\mu)$ for $\mu > 9$. There is no zero crossing since the slope of $\ell_{x,\{1,2\}}(\mu)$ is positive, as the set probability is above $1 - \alpha$. Therefore $D^{\mu}(x) = 1$ for all $\mu \ge 0$. $C^{\mu}(x)$ is $\{1\}$ for $\mu \in [0, \frac{9}{10}]$ and $\{1,2\}$ for $\mu > \frac{9}{10}$.}
    \label{fig-envelope}
\end{figure}

The behavior for large $\mu$ is also clear from the envelope geometry. For large enough $\mu$, the slope dominates the intercept and the line segment in the upper envelope is the one with the largest  possible slope, i.e.,   $T(X)-(1-\alpha)$ for $T(x)$ defined in \eqref{eq-T(X)}.

\begin{prop}\label{lem:oracle_mularge}
    For all $x\in \mathcal{X}$, there exists $\mu_0(x)\geq 0$ such that for all $\mu\geq \mu_0(x)$,  $D^{\mu}(x) = \mI(T(x)\geq 1-\alpha)$ and $C^{\mu}(x) \in \arg\max_{C\in\mathcal{I}} \mP(Y\in C\mid X=x)$. If multiple sets share the same maximum coverage probability, the set with the largest weight $w(C)$ is selected.
\end{prop}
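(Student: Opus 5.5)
Fix $x\in\mathcal X$ and abbreviate $p_C=\mP(Y\in C\mid X=x)$, so that $\ell_{x,C}(\mu)=w(C)p_C+\mu(p_C-(1-\alpha))$. Since $\mathcal I$ is finite, the proof compares finitely many lines and finds a threshold past which the comparison is settled.

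\textbf{Step 1 (which set attains the envelope).} Let $\mathcal I^*=\arg\max_{C\in\mathcal I}p_C$, so every $C\in\mathcal I^*$ has $p_C=T(x)$.

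For $C\notin\mathcal I^*$ and $C^*\in\mathcal I^*$, the gap $\delta_C=T(x)-p_C$ is positive. The difference $\ell_{x,C^*}(\mu)-\ell_{x,C}(\mu)=w(C^*)T(x)-w(C)p_C+\mu\delta_C$ is therefore strictly positive once
$$\mu>\frac{w(C)p_C-w(C^*)T(x)}{\delta_C}.$$
Because $w<B$ and probabilities are at most $1$, it suffices to take $\mu>B/\delta_C$. Let $\mu_1(x)=\max_{C\notin\mathcal I^*}B/\delta_C$, with $\mu_1(x)=0$ if $\mathcal I^*=\mathcal I$. Then for $\mu>\mu_1(x)$ every line attaining $\mathcal U_x(\mu)$ comes from $\mathcal I^*$.

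Within $\mathcal I^*$ all lines share the slope $T(x)-(1-\alpha)$, so they are ordered by intercept $w(C)T(x)$. Hence the maximizer is a set of largest weight in $\mathcal I^*$. If $T(x)=0$ all these lines coincide, and the tie-breaking rule stated before the proposition (largest weight, then lexicographic order) again selects the largest-weight set. This yields $C^\mu(x)\in\arg\max_C p_C$ with the stated tie rule, and it matches item 4 of Lemma~\ref{lem-envelope-properties}.

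\textbf{Step 2 (the decision $D^\mu$).} For $\mu>\mu_1(x)$ we have $\mathcal U_x(\mu)=w(C^*)T(x)+\mu(T(x)-(1-\alpha))$.

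If $T(x)\ge1-\alpha$, Lemma~\ref{lem-U-by-Tx} already gives $D^\mu(x)=1$ for every $\mu\ge0$.

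If $T(x)<1-\alpha$, the line is strictly decreasing, so it is negative for $\mu>w(C^*)T(x)/(1-\alpha-T(x))=:\mu_2(x)$. Then $D^\mu(x)=0=\mI(T(x)\ge1-\alpha)$.

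Setting $\mu_0(x)$ equal to $\max(\mu_1(x),\mu_2(x))$ plus any positive amount, with $\mu_2(x):=0$ in the first case, gives both conclusions simultaneously for all $\mu\ge\mu_0(x)$.

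\textbf{Main obstacle.} The only delicate point is the handling of ties and boundaries. The threshold must make the inequalities strict so that no set outside $\mathcal I^*$ can tie at the envelope. Sets in $\mathcal I^*$ with equal weight produce identical lines, and the systematic tie-breaking convention fixed earlier in the paper resolves these. The bound via $B$ gives a uniform, explicit choice, so no limiting argument is needed.
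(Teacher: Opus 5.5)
Your proof is correct and follows the same idea as the paper's. For large $\mu$ the slope $\mP(Y\in C\mid X=x)-(1-\alpha)$ dominates, so only maximal-coverage sets can attain $\mathcal U_x(\mu)$, and among them the intercept (hence the weight) decides; Lemma~\ref{lem-U-by-Tx} and the eventual negativity of strictly decreasing lines then settle $D^\mu(x)$.

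The only difference is that you replace the appeal to item~4 of Lemma~\ref{lem-envelope-properties} with explicit pairwise thresholds $B/\delta_C$. This gives a concrete $\mu_0(x)$. It also makes explicit the degenerate case $T(x)=0$, where all intercepts are zero, so the largest-weight conclusion comes from the tie-breaking rule rather than from the intercept comparison the paper uses.
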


\begin{proof}
Let $x \in \mathcal{X}$. If $T(x) \ge 1 - \alpha$, then by Lemma~\ref{lem-U-by-Tx}, $D^{\mu}(x) = \mI\left\{ \mathcal{U}_x(\mu) \geq 0 \right\} = 1 \, \forall \mu \ge 0$. Otherwise, if $T(x) < 1 - \alpha$, then all the lines $\{\ell_{x,C}\}_{C \in \mathcal{I}}$ are strictly decreasing, and thus for all sufficiently large $\mu$, $\mathcal{U}_x(\mu) < 0$ and $D^{\mu}(x) = \mI\left\{ \mathcal{U}_x(\mu) \geq 0 \right\} = 0$. In both cases we have $D^{\mu}(x) = \mI(T(x)\geq 1-\alpha)$ for all sufficiently large $\mu$.

By Lemma~\ref{lem-envelope-properties}, for all sufficiently large $\mu$, $\mathcal{U}_x(\mu)$ is supported by a line with the largest slope, i.e. largest $\mP(Y \in C \mid X = x) - (1 - \alpha)$, and in case multiple lines have this slope, it has the largest intercept among them. Therefore, $C^{\mu}(x) \in \arg\max_{C \in \mathcal{I}} \mP(Y \in C \mid X = x)$, and if multiple sets have the same maximal conditional coverage probability, $\ell_{x,C^{\mu}(x)}$ has the largest intercept among the corresponding lines. Since the intercept for each set $C$ is $w(C)\mP(Y \in C \mid X = x)$ and all such sets share the same probability, $C^{\mu}(x)$ has the largest $w(C)$ among them.
\end{proof}

\subsection{The oracle procedure}

Denote the optimal solution for the optimization problem~\eqref{eq-OMT-problem-Gconstraint} by
\[f^*=\max_{\substack{D:\mathcal X\to\{0,1\},\; C:\mathcal X\to\mathcal I, \\ G(D,C)\leq 0}} \Pi(D,C).\]
For any feasible $(D,C)$ (i.e., satisfying $G(D,C) \leq 0$), 
the Lagrangian $\mathcal{L}(D,C,\mu)$ serves as an upper bound for the objective  $\Pi(D,C)$ for any $\mu \geq 0$:  $$\mathcal{L}(D,C,\mu) = \Pi(D,C) - \mu G(D,C) \geq \Pi(D,C).$$ Consequently, for any $\mu\geq 0$, the optimal value $f^*$ is bounded above by the Lagrangian evaluated at $(D^{\mu}, C^{\mu})$:
\begin{align*}
f^* &= \max_{\substack{D:\mathcal X\to\{0,1\},\; C:\mathcal X\to\mathcal I, \\ G(D,C)\leq 0}} \Pi(D,C)
\leq \max_{\substack{D:\mathcal X\to\{0,1\},\; C:\mathcal X\to\mathcal I, \\ G(D,C)\leq 0}} \mathcal{L}(D,C,\mu) \\
&\leq \max_{D:\mathcal X\to\{0,1\},\; C:\mathcal X\to\mathcal I} \mathcal{L}(D,C,\mu) = \mathcal{L}(D^{\mu}, C^{\mu},\mu),
\end{align*}
 where $C^{\mu}$ and $D^{\mu}$ are defined by \eqref{eq-C-mu} and \eqref{eq-D-mu}, respectively.
This result is formalized in the following proposition. 
\begin{prop}\label{prop-f-leq-L}
	For all $\mu \geq 0$, $f^* \leq \mathcal{L}(D^{\mu}, C^{\mu},\mu)$.
\end{prop}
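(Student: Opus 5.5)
The argument is weak duality, made rigorous in three steps. The only part that needs care is the final equality, where the pointwise maximizers $(D^{\mu},C^{\mu})$ defined in \eqref{eq-C-mu}--\eqref{eq-D-mu} must attain the unconstrained supremum of the Lagrangian.

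\textbf{Step 1: the Lagrangian dominates the objective on the feasible set.} Fix $\mu\ge 0$ and any feasible pair $(D,C)$, meaning $G(D,C)\le 0$. Then $-\mu G(D,C)\ge 0$, so
\[
\Pi(D,C)\le \Pi(D,C)-\mu G(D,C)=\mathcal L(D,C,\mu).
\]
All expectations here are finite because $w$ takes values in $(0,B)$ and the probabilities lie in $[0,1]$.

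\textbf{Step 2: pointwise maximization of the integrand.} Let $(D,C)$ be arbitrary measurable maps, not necessarily feasible. By the definition of $\ell_{x,C}$, the integrand of $\mathcal L$ at $x$ equals $\ell_{x,C(x)}(\mu)D(x)$. Two cases cover every $x$.
\begin{itemize}
\item If $\mathcal U_x(\mu)\ge 0$, then $\ell_{x,C(x)}(\mu)D(x)\le \max\{0,\ell_{x,C(x)}(\mu)\}\le \mathcal U_x(\mu)$.
\item If $\mathcal U_x(\mu)<0$, then $\ell_{x,C(x)}(\mu)<0$, so $\ell_{x,C(x)}(\mu)D(x)\le 0$.
\end{itemize}
Hence
\[
\ell_{x,C(x)}(\mu)D(x)\le \mathcal U_x(\mu)\,\mI\{\mathcal U_x(\mu)\ge 0\}=\ell_{x,C^{\mu}(x)}(\mu)D^{\mu}(x),
\]
where the equality uses that $C^{\mu}(x)$ attains $\mathcal U_x(\mu)$ and that $D^{\mu}(x)=\mI\{\mathcal U_x(\mu)\ge0\}$. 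The tie-breaking rule only selects among maximizers, so it does not affect this equality.

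\textbf{Step 3: take expectations.} Taking expectations of the pointwise bound in Step 2 gives $\mathcal L(D,C,\mu)\le \mathcal L(D^{\mu},C^{\mu},\mu)$ for every $(D,C)$. Combining this with Step 1 yields $\Pi(D,C)\le \mathcal L(D^{\mu},C^{\mu},\mu)$ for every feasible $(D,C)$. Taking the supremum over feasible pairs gives $f^*\le \mathcal L(D^{\mu},C^{\mu},\mu)$.

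The only technical point is measurability of $C^{\mu}$ and $D^{\mu}$, which is needed for the expectation in Step 3 to be defined. Since $\mathcal I$ is finite, $\mathcal U_x(\mu)$ is a maximum of finitely many measurable functions of $x$, so it is measurable. The selection $C^{\mu}$ picks the first maximizer in a fixed order (largest weight, then lexicographic index), which is a measurable rule over finitely many measurable functions. Therefore both $C^{\mu}$ and $D^{\mu}$ are measurable.
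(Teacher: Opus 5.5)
Your proposal is correct and follows the same route as the paper: weak duality ($\Pi(D,C)\le\mathcal{L}(D,C,\mu)$ whenever $G(D,C)\le 0$ and $\mu\ge 0$), followed by the observation that $(D^{\mu},C^{\mu})$ maximizes the separable Lagrangian pointwise in $x$. The paper simply asserts that final equality. You verify it through the two-case bound $\ell_{x,C(x)}(\mu)D(x)\le \max\{\mathcal{U}_x(\mu),0\}$ and also check measurability, which adds rigor but is not a different argument.
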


As $\mu$ increases, both power and the constraints decrease (see Proposition \ref{prop-decreasingG}). Thus the optimal solution should use the smallest $\mu$ for which the constraint is satisfied. 

If $\mP\!\left(T(X)\geq 1-\alpha\right)=0$ for $T(X)$ defined in \eqref{eq-T(X)}, i.e., the largest conditional coverage probability among informative prediction sets is almost surely below $1-\alpha$, 
then  \eqref{def-G(D,C)} implies that the constraint $G(D,C)\le 0$ can only be satisfied by  $D(X)=0$ almost surely, making this trivial solution  optimal. 
When $\mP\!\left(T(X)> 1-\alpha\right)=0$,  this case is formalized in Proposition~\ref{prop-D-G}. 
We henceforth focus on the more interesting and practically relevant setting in which 
$$\mP\!\left(T(X)> 1-\alpha\right)>0,$$
 so that the oracle procedure can achieve nontrivial power. 
 In this setting, we have the following result.

\begin{prop}\label{prop-positive-selection}
	  If $\mP(T(X) \geq 1-\alpha) > 0$, then $\mE\left[ D^{\mu}(X)\right]>0$ for any $\mu \geq 0$.
\end{prop}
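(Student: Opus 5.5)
The plan is to reduce the claim to the pointwise statement of Lemma~\ref{lem-U-by-Tx}. Write $A=\{x\in\mathcal X: T(x)\geq 1-\alpha\}$, with $T$ as in \eqref{eq-T(X)}. By hypothesis $\mP(X\in A)>0$.

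Fix an arbitrary $\mu\geq 0$ and take any $x\in A$. Lemma~\ref{lem-U-by-Tx} gives $\mathcal U_x(\mu)\geq 0$. Concretely, choose $C'\in\mathcal I$ attaining the maximum in $T(x)$; this maximum exists because $\mathcal I$ is finite. The line $\ell_{x,C'}$ then has nonnegative intercept $w(C')\mP(Y\in C'\mid X=x)$ and nonnegative slope $T(x)-(1-\alpha)$, so $\ell_{x,C'}(\mu)\geq 0$. By \eqref{eq-D-mu}, this yields $D^{\mu}(x)=\mI\{\mathcal U_x(\mu)\geq 0\}=1$. The conclusion does not depend on how ties in $C^{\mu}(x)$ are broken, since $D^{\mu}$ depends only on the value of the envelope. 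Hence $D^{\mu}(x)\geq \mI\{x\in A\}$ for every $x\in\mathcal X$.

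Taking expectations, and using that $D^{\mu}\geq 0$, gives
\[
\mE[D^{\mu}(X)]\geq \mE[\mI\{T(X)\geq 1-\alpha\}]=\mP(T(X)\geq 1-\alpha)>0 .
\]
Since $\mu\geq 0$ was arbitrary, this proves the claim.

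I expect no real obstacle. The only point needing care is measurability: $T(X)$ and $D^{\mu}(X)$ must be random variables for the expectation and probability to make sense. Both are built from finitely many measurable functions $x\mapsto \mP(Y\in C\mid X=x)$, $C\in\mathcal I$, through maxima and indicators, so they are measurable. I would note this in a single line.
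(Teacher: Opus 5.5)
Your proof is correct and is the same argument as the paper's. Lemma~\ref{lem-U-by-Tx} gives $D^{\mu}(x)=1$ whenever $T(x)\geq 1-\alpha$, so $\mE[D^{\mu}(X)]\geq \mP(T(X)\geq 1-\alpha)>0$; your added remarks on the attaining set $C'$ and on measurability are harmless extra detail.
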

\begin{proof}
If $T(X)\geq 1-\alpha$, then $D^{\mu}(X)=1$ for all $\mu\geq 0$ by Lemma \ref{lem-U-by-Tx}. Therefore, $$\mP(T(X)\geq 1-\alpha)\leq \mP(D^{\mu}(X)=1) = \mE\left[ D^{\mu}(X)\right].$$ Since the left-hand side is strictly positive by assumption,  the result follows. 
\end{proof}

Next, we establish that in well-powered settings the constraint functional   $G(D^{\mu}, C^{\mu})$   has a limit that is strictly negative. 
\begin{prop}\label{prop-decreasingG}
If $\mP(T(X) > 1-\alpha) > 0$, then $G(D^{\mu}, C^{\mu})<0$ for all sufficiently large  $\mu$.
\end{prop}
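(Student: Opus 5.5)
The plan is to compute the pointwise limit of the integrand of $G(D^{\mu},C^{\mu})$ as $\mu\to\infty$ and then pass to the expectation. Define the integrand
\[
g_\mu(x)=\big(1-\mP(Y\in C^{\mu}(x)\mid X=x)-\alpha\big)D^{\mu}(x).
\]

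\textbf{Step 1 (pointwise limit).} By Proposition~\ref{lem:oracle_mularge}, for each $x$ there is $\mu_0(x)$ such that for all $\mu\ge\mu_0(x)$ we have $D^{\mu}(x)=\mI(T(x)\ge 1-\alpha)$ and $\mP(Y\in C^{\mu}(x)\mid X=x)=T(x)$. Hence $g_\mu(x)$ is eventually constant and equals
\[
g_\infty(x):=(1-\alpha-T(x))\,\mI\{T(x)\ge 1-\alpha\}.
\]
This limit is $\le 0$ everywhere, and $<0$ exactly on $\{T(x)>1-\alpha\}$.

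\textbf{Step 2 (limit of the expectation).} By Proposition~\ref{prop-monotoneD}(5), $g_\mu(x)$ is non-increasing in $\mu$, so it converges monotonically to $g_\infty(x)$. The integrand is bounded, since $|g_\mu|\le 1$. Therefore monotone or dominated convergence gives
\[
G(D^{\mu},C^{\mu})\to \mE[g_\infty(X)]=-\mE\big[(T(X)-(1-\alpha))\,\mI\{T(X)>1-\alpha\}\big].
\]
Because $\mP(T(X)>1-\alpha)>0$, the integrand $(T(X)-(1-\alpha))\mI\{T(X)>1-\alpha\}$ is non-negative and strictly positive on a set of positive probability. Its expectation is therefore strictly positive, so the limit is strictly negative.

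\textbf{Step 3 (conclusion).} The limit is negative, and by the Corollary $G(D^{\mu},C^{\mu})$ is non-increasing in $\mu$. Hence there is $\bar\mu$ with $G(D^{\mu},C^{\mu})<0$ for all $\mu\ge\bar\mu$.

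\textbf{Main obstacle.} The only delicate point is measurability and the interchange of limit and expectation. The threshold $\mu_0(x)$ is not uniform in $x$, so we cannot fix a single $\mu$ directly and must rely on the convergence theorem. This requires $g_\mu$ to be measurable. That holds because $C^{\mu}(x)$ is determined by finitely many measurable functions $\mP(Y\in C\mid X=x)$, $C\in\mathcal I$, through comparisons, and the tie-breaking rule is deterministic.
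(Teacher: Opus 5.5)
Your proof is correct and uses the same ingredients as the paper. You apply Proposition~\ref{lem:oracle_mularge} to identify the large-$\mu$ integrand, then use strict positivity of $\mE[(T(X)-(1-\alpha))\mI\{T(X)>1-\alpha\}]$. The two arguments differ in how they pass to large $\mu$, and here yours is the more careful one.

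The paper substitutes $D^{\mu}(x)=\mI(T(x)\ge 1-\alpha)$ and $\mP(Y\in C^{\mu}(x)\mid X=x)=T(x)$ directly into $G(D^{\mu},C^{\mu})$ at a single ``sufficiently large'' $\mu$. This tacitly requires $\sup_x \mu_0(x)<\infty$, which fails in general. For $T(x)<1-\alpha$ and $C$ a maximal-coverage set, $\mathcal U_x(\mu)\ge \ell_{x,C}(\mu)$ keeps $D^{\mu}(x)=1$ at least up to $\mu=w(C)T(x)/(1-\alpha-T(x))$. That bound blows up as $T(x)\uparrow 1-\alpha$. So whenever $T(X)$ puts mass arbitrarily close below $1-\alpha$, one has $g_\mu(x)>0=g_\infty(x)$ on a set of positive probability for every finite $\mu$. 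The paper's displayed equality for $G(D^{\mu},C^{\mu})$ then holds at no finite $\mu$.

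Your argument needs only the pointwise statement of Proposition~\ref{lem:oracle_mularge}. It combines eventual constancy of $g_\mu(x)$, the bound $|g_\mu|\le 1$, and dominated (or monotone) convergence along any sequence $\mu_n\to\infty$. It then finishes with either the definition of the limit or the Corollary's monotonicity. This is precisely the step the paper glosses over, and your ``main obstacle'' paragraph identifies it correctly. Your version also matches the paper's own informal phrasing, just before the proposition, that $G(D^{\mu},C^{\mu})$ ``has a limit that is strictly negative.''
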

\begin{proof}
    By Proposition \ref{lem:oracle_mularge}, for $\mu$ large enough $D^{\mu}(x) = \mI(T(x)\geq 1-\alpha)$ and $C^{\mu}(x) \in \arg\max_{C\in\mathcal{I}} \mP(Y\in C\mid X=x)$. Substituting these into the definition of $G(D^{\mu}, C^{\mu})$ yields $$G(D^{\mu}, C^{\mu})  = \mE\left[(1 -T(X)-\alpha)\mI(T(X)\geq 1-\alpha)\right] = \mE\left[(1-\alpha -T(X))\mI(1-\alpha -T(X) <0)\right].$$
    Since $\mP(1-\alpha -T(X) <0) > 0$ by assumption, it follows that the last expression is negative.  
\end{proof}

Next, we show that the oracle optimization problem \eqref{eq-OMT-problem-Gconstraint} has a solution, assuming the following sufficient condition. See Theorem~\ref{thm-random-policy} 
for a randomized counterpart that does not require this assumption.

\begin{ass}\label{ass-probratio}
	The law of $X$ is non-atomic and for every $C \in \mathcal{I}$, the random variable $\mP(Y\in C\mid X)$ has a continuous, non-atomic distribution. Moreover, for every distinct $C_1, C_2 \in \mathcal{I}$, any nontrivial linear combination of $\mP(Y\in C_1\mid X)$ and $\mP(Y\in C_2\mid X)$ also has a non-atomic distribution.
\end{ass}
For a fixed $\mu$, this assumption ensures that almost surely in $X$,   the upper envelope $\mathcal U_X(\mu)$ is non-zero and the candidate set $C^{\mu}(X)$ is uniquely determined. This means  $\mathcal{U}_X$ is almost never at a breakpoint or a zero crossing for a given $\mu$. As a consequence, $G(D^{\mu}, C^{\mu})$ is a continuous  decreasing function of $\mu$. If $\lim_{\mu \rightarrow \infty} G(D^{\mu}, C^{\mu})< 0$, then the set $\{\mu: G(D^{\mu}, C^{\mu})\leq 0\}$ has a unique minimum $\mu^*$, which is the multiplier at which the policy is implemented. This is formalized in the following Theorem.

\begin{thm}\label{thm-optimal-nonrandom}
    Assume that  $\mP(T(X) > 1-\alpha) > 0$ and that Assumption \ref{ass-probratio} holds.  Then: 
    \begin{enumerate}
        \item The  multiplier $\mu^* = \min\{\mu: G(D^{\mu}, C^{\mu})\leq 0\}$ exists and is finite. 
    \item The oracle policy $(D^{\mu^*},C^{\mu^*})$ is an optimal solution to the oracle problem  \eqref{eq-OMT-problem-Gconstraint}.
        \end{enumerate}
\end{thm}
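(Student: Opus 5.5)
The plan is to first establish that $g(\mu):=G(D^{\mu},C^{\mu})$ is non-increasing and continuous on $[0,\infty)$, and eventually negative. Then use weak duality (Proposition~\ref{prop-f-leq-L}) together with complementary slackness at $\mu^*$.

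\textbf{Step 1 (existence of $\mu^*$).} By the Corollary following Proposition~\ref{prop-monotoneD}, $g$ is non-increasing. By Proposition~\ref{prop-decreasingG}, $g(\mu)<0$ for large $\mu$. Hence the set $A=\{\mu\ge 0: g(\mu)\le 0\}$ is a nonempty interval of the form $[\mu_1,\infty)$ or $(\mu_1,\infty)$.

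To show the infimum is attained, I would prove $g$ is continuous using Assumption~\ref{ass-probratio}. Fix $\mu_0$. Pointwise in $x$, $(D^{\mu}(x),C^{\mu}(x))$ is locally constant in $\mu$ near $\mu_0$ unless $\mu_0$ is a breakpoint or a zero crossing of $\mathcal U_x$. Let $p_C=\mP(Y\in C\mid X=x)$.
\begin{itemize}
\item $\mu_0$ is a breakpoint between $C_1\neq C_2$ when $\ell_{x,C_1}(\mu_0)=\ell_{x,C_2}(\mu_0)$. This is a nontrivial linear relation $(w(C_1)+\mu_0)p_{C_1}-(w(C_2)+\mu_0)p_{C_2}=0$. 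The relation is nontrivial because the weights are positive and the coefficients cannot both vanish.
\item $\mu_0$ is a zero crossing when $(w(C)+\mu_0)p_C=\mu_0(1-\alpha)$.
\end{itemize}
Both events have probability zero by the assumption. The case $\mu_0=0$ needs a separate check, since there a zero crossing would require $w(C)p_C=0$, i.e.\ $p_C=0$, which is a null event by non-atomicity.

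So $D^{\mu}(X)\to D^{\mu_0}(X)$ and $C^{\mu}(X)\to C^{\mu_0}(X)$ almost surely as $\mu\to\mu_0$. The integrand of $G$ is bounded by $1+\alpha$, so dominated convergence gives continuity of $g$. Then $A$ is closed, $\mu^*=\min A$ exists and is finite, and:
\begin{itemize}
\item either $\mu^*=0$,
\item or $\mu^*>0$, in which case continuity forces $g(\mu^*)=0$.
\end{itemize}

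\textbf{Step 2 (optimality).} $(D^{\mu^*},C^{\mu^*})$ is feasible since $g(\mu^*)\le 0$. In either case of Step 1 we have $\mu^*g(\mu^*)=0$. Hence
$$\mathcal L(D^{\mu^*},C^{\mu^*},\mu^*)=\Pi(D^{\mu^*},C^{\mu^*}).$$
Proposition~\ref{prop-f-leq-L} gives $f^*\le \mathcal L(D^{\mu^*},C^{\mu^*},\mu^*)=\Pi(D^{\mu^*},C^{\mu^*})\le f^*$. So the policy attains $f^*$ and is optimal.

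\textbf{Main obstacle.} The delicate step is the continuity argument: verifying that the breakpoint and zero-crossing events at a fixed $\mu_0$ are genuinely nontrivial linear combinations covered by Assumption~\ref{ass-probratio}. This must hold uniformly over the finitely many pairs in $\mathcal I$, and the tie-breaking rule must not create discontinuities off a null set. Finiteness of $\mathcal I$ makes a union bound over pairs suffice.
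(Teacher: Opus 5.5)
Your proposal is correct and follows essentially the same route as the paper. You prove continuity of $G(D^{\mu},C^{\mu})$ from Assumption~\ref{ass-probratio} (tie and zero-crossing events are null at each fixed $\mu$, a union bound over the finite $\mathcal I$, then dominated convergence), combine it with Proposition~\ref{prop-decreasingG}, and conclude by weak duality (Proposition~\ref{prop-f-leq-L}) using $\mu^*G(D^{\mu^*},C^{\mu^*})=0$. The only cosmetic difference is that you absorb the case $\mu^*=0$ into this complementary-slackness identity, whereas the paper treats $mFCR(D^0,C^0)\le\alpha$ as a separate first case.
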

See Appendix \ref{proof-thm-optimal-nonrandom} for the proof.

\section{Practical procedures with FCR control}\label{sec-proc}

We consider a prediction algorithm, built from training data that is independent of the calibration and test data. We assume that for each $x \in \mathcal{X}$, the estimated conditional probability of belonging to each informative prediction set is available. For classification outcomes, this simplifies to the common assumption that the estimated probability of belonging to a class $k$, $\hat{\mP}(Y = k \mid X = x)$, is available for all $k\in [K]$. Note that the estimated conditional probabilities are implicitly evaluated conditionally on the training data, so for a fixed training set  $\mathcal D_{train}$, we write $\hat{\mP}(Y \in C \mid X = x)$ as shorthand for $\hat{\mP}(Y \in C \mid X = x, \mathcal D_{train})$.  

We propose the following approach, that is motivated by the oracle construction of \S~\ref{sec-optimal}, 
now with the oracle probability $\mP(Y \in C \mid X = x)$ replaced by the fitted probability $\hat{\mP}(Y \in C \mid X = x)$ for each fixed $x\in \mathcal X$ and $C\in \mathcal I$. We  use the same derivations as in \S~\ref{sec-optimal} in order to determine the decision policy and which prediction set is constructed at a given $\mu$. For simplicity of exposition, we describe the practical procedure using the same notation as in \S~\ref{sec-optimal} for $ C^{\mu}, D^{\mu}$, despite the fact that these are now functions of  estimated (rather than true) conditional probabilities. 
The value of $\mu$ at which the policy is executed is then determined with the aid of the calibration sample, in order to have a finite sample FCR control guarantee. Specifics follow.

For any $x \in \mathcal{X}$, each candidate $C \in \mathcal{I}$ defines a line in $\mu$:
$$\hell_{x,C}(\mu) = w(C) \hat{\mP}(Y \in C \mid X = x) + \mu \left(\hat{\mP}(Y \in C \mid X = x) - (1 - \alpha)\right).$$
The upper envelope $ \max_{C \in \mathcal{I}} \hell_{x,C}(\mu)$ determines $C^{\mu}(x)$ and $D^{\mu}(x)$:
\begin{equation} \label{eq-C-mu-procedure}
C^{\mu}(x) \in \arg\max_{C \in \mathcal{I}}\hell_{x,C}(\mu)  
\end{equation} 
and 
\begin{equation} \label{eq-D-mu-procedure}
D^{\mu}(x) = \mI\left\lbrace \max_{C \in \mathcal{I}} \hell_{x,C}(\mu) > 0 \right\rbrace.
\end{equation}
In \eqref{eq-C-mu-procedure} ties are broken in favor of smaller weight. 
The tie breaking rule, as well as the strict inequality in \eqref{eq-D-mu-procedure},  are driven by the constraint to guarantee $FCR\leq \alpha$, see proof of our main theorem \ref{thm-FCR-control}. This is  in contrast to the choice of tie-breaking in the oracle setting (where the ties in \eqref{eq-C-mu} are broken in favor of larger weights,  
and the $\geq$ in \eqref{eq-D-mu} is driven by the requirement to maximize power).
Thus, for the examples  in Figure~\ref{fig-envelope}, considering again the examples in Figure~\ref{fig-envelope} but this time with estimated probabilities: in the left panel,  $C^{\mu}(x)$ is $\{1\}$ for $\mu \in [0, \frac{1}{3})$ and $\{1,2\}$ for $\mu \ge \frac{1}{3}$, and $D^{\mu}(x) = \mI\{\mu < 4\}$; in the right panel,  $C^{\mu}(x)$ is $\{1\}$ for $\mu \in [0, \frac{9}{10})$ and $\{1,2\}$ for $\mu \ge \frac{9}{10}$. 

 In the event  that multiple prediction sets  share the same minimal weight in the tie-breaking rule above, we resolve the additional ties in the estimated conditional coverage probability as in the oracle setting via a secondary rule.  The only requirement for this secondary tie-breaking  rule, specifically for prediction sets that have the same estimated conditional coverage probability and weight,  is that it  must be systematic, so that if $C_1\neq C_2$ but $\hell_{x,C_1}(\cdot) = \hell_{x,C_2}(\cdot)$,  exactly one is uniquely selected  for the line segment forming the upper envelope.  For concreteness, this secondary rule selects the prediction set with the smallest index in a lexicographically ordered $\mathcal I$.

The monotonic behavior in Proposition \ref{prop-monotoneD} carries over to the practical setting where probabilities are estimated. 
 See Proposition \ref{prop-monotoneD-procedure} for a formal statement.  So the decision indicator $D^\mu(X_i)$ is non-increasing in $\mu$. In addition, items 4 and 5 in Proposition \ref{prop-monotoneD-procedure} state that $w(C^{\mu}(X_i))\hat{\mP}(Y\in C^{\mu}(X_i) \mid X=X_i)D^{\mu}(X_i)$  and  
    $(1 - \hat{\mP}(Y \in C^{\mu}(X_i) \mid X = X_i) -\alpha)D^{\mu}(X_i)$ are non-increasing in $\mu$. But unlike the oracle setting in \S~\ref{sec-optimal}, the power and constraint functionals in this practical context are not the simple expectations in terms of $w(C^{\mu}(X_i))\hat{\mP}(Y\in C^{\mu}(X_i) \mid X=X_i)D^{\mu}(X_i)$  and  
    $(1 - \hat{\mP}(Y \in C^{\mu}(X_i) \mid X = X_i) -\alpha)D^{\mu}(X_i)$. Because only approximate probabilities are available,  finite sample FCR control cannot be guaranteed by relying solely on the approximated oracle rule. Instead, we utilize the calibration sample to achieve finite sample FCR control.  
The false coverage proportion (FCP) for a given $\mu$ is estimated using the calibration sample by
\begin{equation} \label{eq-FCP-mu-procedure}
\widehat{FCP}^{\mu} = \frac{\frac{1}{1+n} \left(1 + \sum_{i=1}^n \mI(Y_i\notin C^{\mu}(X_i))D^{\mu}(X_i)\right)}{\frac{1}{m} \left(1\lor\sum_{i=1}^m D^{\mu}(X_{n+i})\right)}.
\end{equation}
The procedure finds the smallest $\mu$ such that $\widehat{FCP}^{\mu} \le \alpha$, denoted $\mu_{\alpha}$, and for each $i \in [m]$, reports $C^{\mu_{\alpha}}(X_{n+i})$ whenever $D^{\mu_{\alpha}}(X_{n+i}) = 1$. If no such $\mu$ exists, no prediction sets are produced.

The procedure has the property that  $D^{\mu}(x)=1$ for all $\mu\geq 0$ if there exists an informative prediction set with estimated conditional coverage probability at least $1-\alpha$. 
Using the same reasoning as in \S~\ref{sec-optimal}, this occurs because the existence of such a set ensures that the upper envelope remains non-negative for all $\mu\geq 0,$ which in turn keeps the decision indicator at one. 
Therefore, when constructing prediction sets at 
$\mu_\alpha$ such that $\widehat{\mathrm{FCP}}^{\mu_\alpha}\le\alpha$, then a prediction set will be constructed for all examples  satisfying   
    $\max_{C \in \mathcal{I}} \hat{\mP}(Y_{n+i} \in C | X = X_{n+i})\geq 1-\alpha, $ $i\in [m]$.

Henceforth we denote the lines and upper envelope for $i \in [n+m]$ by  $$\ell_{i,C}(\mu) = \hell_{X_i,C}(\mu) \quad \textrm{and} \quad \mathcal{U}_i(\mu) = \max_{C \in \mathcal{I}} \ell_{i,C}.$$ 
The procedure relies on finding a finite set of candidates for $\mu_{\alpha}$ and allows reducing the collection of informative sets considered for each example. 
The implementation relies on two consequences of the upper-envelope geometry. First, since $\mathcal{U}_i$ is piecewise linear, $C^{\mu}(X_i)$ and $D^{\mu}(X_i)$ are piecewise constant in $\mu$, changing only at \textit{breakpoints} where the envelope switches from one line to another, or \textit{zero crossings} where $\mathcal{U}_i$ crosses $0$.  Therefore, it suffices to search for $\mu_{\alpha}$ over a finite set of candidate $\mu$ values that contains all breakpoints and zero crossings across all $i \in [n+m]$. Second, for each example $i \in [n+m]$ we may  restrict the collection of candidate informative sets to a subset $\mathcal{I}_i \subseteq \mathcal{I}$ that contains all sets appearing on the upper envelope for $X_i$.
The algorithmic details are as follows.

\begin{definition}\label{def-Algorithm}
The oracle-guided optimized selection procedure \infoOSP\  constructs prediction sets as follows. 
\begin{enumerate}
	\item For each $i \in [n+m]$, choose a subset $\mathcal{I}_i \subseteq \mathcal{I}$ of candidates to be the prediction set for $X_i$. See Section~\ref{subsec-computing-Ii} for details.
	\item Compute a set $M$ of candidate $\mu$ values where the value of $\widehat{FCP}^{\mu}$ as defined in Step~3(c) can decrease. See Section~\ref{subsec-computing-M} for details.
	\item For each $\mu \in M$, sorted in increasing order:
	\begin{enumerate}
		\item 
        Compute the best prediction set $C^{\mu}(X_i)$ using   \eqref{eq-C-mu-procedure}, i.e., $C^{\mu}(X_i)\in \arg\max_{C\in \mathcal I_i}\ell_{i,C}(X_i)
        ,\,\forall i\in [n+m]$, with ties broken in favor of smaller $w(C)$.
		\item 
        Compute the decision indicator $D^{\mu}(X_i)$ using   \eqref{eq-D-mu-procedure}, i.e., $D^{\mu}(X_i) = \mI\{\ell_{i, C^{\mu}(X_i)}(X_i)>0\},\,\forall i\in [n+m]$.  
		\item Let $\widehat{FCP}^{\mu} = \frac{\frac{1}{1+n} \left(1 + \sum_{i=1}^n \mI(Y_i\notin C^{\mu}(X_i))D^{\mu}(X_i)\right)}{\frac{1}{m} \left(1\lor\sum_{i=1}^m D^{\mu}(X_{n+i})\right)}$.
		\item If $\widehat{FCP}^{\mu} \leq \alpha$, denote $\mu_{\alpha}:=\mu$, report $\left\{ C^{\mu_\alpha}(X_{n+i}) : i \in [m], D^{\mu_\alpha}(X_{n+i}) = 1 \right\}$, and stop. 
	\end{enumerate}
	\item If $\widehat{FCP}^{\mu} > \alpha$ for all $\mu \in M$, then no prediction sets are constructed.
\end{enumerate}
\end{definition}

\begin{remark}\label{rem-calibrationOnly}
We also consider an oracle-guided, calibration-only variant. For this variant,  in Definition~\ref{def-Algorithm}, the steps up to Step~3(c) are carried out for $i\in[n]$ only , and Step~3(c) is replaced by
\[\widehat{FCP}^{\mu} = \frac{1+\sum_{i=1}^n \mI\{Y_i\notin C^{\mu}(X_i)\}D^{\mu}(X_i)}{1+\sum_{i=1}^n D^{\mu}(X_i)}.\]
Next, we find $\mu_{\alpha} = \min\{\mu: \widehat{FCP}^{\mu}\leq \alpha \}$, and for every new example $X$ we construct $C^{\mu_{\alpha}}(X)$ only if $D^{\mu_{\alpha}}(X)=1.$
This variant is appropriate in applications where test examples arrive sequentially over time and a decision must be made immediately for each example upon arrival. In such settings, a construction rule is learned once using the calibration sample and then applied to all future test points.
In our numerical experiments, the performance of the calibration-only variant is very similar to that of \infoOSP. However, unlike \infoOSP, for which we provide finite-sample theoretical FCR guarantees, we do not currently have a theoretical guarantee for the calibration-only variant.
\end{remark}

\subsection{Expressing \infoOSP\ as a threshold selection algorithm}
Note that Step 3(d) is motivated by the fact that we expect to  have the nestedness property: $C^{\mu_1}(X_i)\subseteq C^{\mu_2}(X_i)$ for all $\mu_1<\mu_2$.  Consequently,  if $Y_i\notin C^{\mu_2}(X_i)$ then $Y_i\notin C^{\mu_1}(X_i)$) for all $\mu_1<\mu_2$. 
This assumption, formalized next, 
 is sufficient to guarantee finite sample FDR control; see \S~\ref{subsec-finite-sample-theory} for details.

\begin{ass}\label{ass-martingale}
For any $x \in \mathcal{X}$ and any $\mu_1 < \mu_2$, $C^{\mu_1}(x) \subseteq C^{\mu_2}(x)$.
\end{ass}

A sufficient condition for Assumption~\ref{ass-martingale} to hold is that for any $x \in \mathcal{X}$ and for any $C_1, C_2 \in \mathcal{I}$, if $C_1 \in \arg\max_{C \in \mathcal{I}, w(C) = w(C_1)} \hat{\mP}(Y \in C | X = x)$, $C_2 \in \arg\max_{C \in \mathcal{I}, w(C) = w(C_2)} \hat{\mP}(Y \in C | X = x)$, and $w(C_1) > w(C_2)$, then $C_1 \subset C_2$. This sufficient condition holds in many settings of interest when  $w(C) = f(|C|)$ is a non-increasing function of the set cardinality. 
In particular, this structure is satisfied in all settings considered in \cite{GazinHellerMarandonRoquain2025} for classification outcomes. This includes cases where prediction sets are informative only if they do not exceed  a certain cardinality or if they exclude a specific range of outcomes. In these settings, the prediction sets that determine the lines of the upper envelope consists only of the classes with largest estimated probabilities for a given cardinality. A proof that this is enough to  ensure that Assumption \ref{ass-martingale} is satisfied is given in Appendix ~\ref{sec:auxiliaryresults}. Interestingly, nestedness of the candidate sets in $\mathcal I$ (i.e., for all  $C\in \mathcal I$ we have that $C'\subset C$ implies  $C'\subseteq \mathcal I$) is not sufficient on its own to satisfy Assumption \ref{ass-martingale}, even when the weights depend solely on cardinality. A counter example illustrating this is provided in Appendix~\ref{sec:auxiliaryresults}.

If we restrict ourselves to informative selection that satisfies Assumption \ref{ass-martingale}, then we can describe \infoOSP\ using the key statistics in the following proposition. 
\begin{prop}\label{prop-keystatistics}
Under Assumption \ref{ass-martingale}, the following key selection statistics are well-defined, 
with the convention that the minimum of an empty set is $\infty$:
\begin{eqnarray}
    && \tilde{\mu}_i : = \min \{\mu: Y_i\in C^{\mu}(X_i) \textrm{ or } D^{\mu}(X_i)=0\}, \ i\in [n] \nonumber \\
    && \hat{\mu}_{n+i}:=\min \{\mu:  D^{\mu}(X_{n+i})=0\}, \ i\in[m]. \label{eq-keystatistics}
\end{eqnarray}
\end{prop}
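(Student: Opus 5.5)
The claim to verify is that each of the sets in \eqref{eq-keystatistics} is either empty (minimum $\infty$ by convention) or an up-set in $\mu$ that is closed on the left, i.e.\ of the form $[a,\infty)$ with $a\ge 0$. Then the minimum is attained and the statistic is well defined. Well-definedness also requires that ``$\min$'' is a genuine minimum rather than an infimum. I will therefore establish two things for each fixed $x=X_i$: \emph{monotonicity}, meaning the defining event, once true at some $\mu_1$, remains true for all $\mu_2>\mu_1$; and \emph{left-closedness}, meaning the event holds at its infimum.

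For monotonicity I use the practical analogue of Proposition~\ref{prop-monotoneD}, namely Proposition~\ref{prop-monotoneD-procedure}. Item 3 there shows that $D^{\mu}(x)$ is non-increasing in $\mu$, so $\{\mu: D^{\mu}(X_{n+i})=0\}$ is an up-set. For the calibration statistic, suppose $Y_i\in C^{\mu_1}(X_i)$ or $D^{\mu_1}(X_i)=0$, and take $\mu_2>\mu_1$. If $D^{\mu_1}(X_i)=0$, monotonicity of $D$ gives $D^{\mu_2}(X_i)=0$. If $Y_i\in C^{\mu_1}(X_i)$, Assumption~\ref{ass-martingale} gives $C^{\mu_1}(X_i)\subseteq C^{\mu_2}(X_i)$, so $Y_i\in C^{\mu_2}(X_i)$. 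Hence the union event is also an up-set.

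For left-closedness I use the piecewise-linear geometry. The function $\mathcal U_i(\mu)=\max_{C}\hell_{X_i,C}(\mu)$ is a finite maximum of lines, hence continuous. Since $D^{\mu}=\mI\{\mathcal U_i(\mu)>0\}$ uses a strict inequality, the set $\{\mu:\mathcal U_i(\mu)\le 0\}$ is closed. Because it is also an up-set, it has the form $[a,\infty)$, so $\hat\mu_{n+i}$ is attained.

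For $C^{\mu}(X_i)$, tie-breaking in favour of smaller weight (and then lexicographically) at breakpoints makes $C^{\mu}$ right-continuous with the switch occurring \emph{at} the breakpoint. The earlier discussion of Figure~\ref{fig-envelope} shows exactly this, e.g.\ $\{1\}$ on $[0,\tfrac13)$ and $\{1,2\}$ on $[\tfrac13,\infty)$. I will formalise this as follows. At a breakpoint $\mu_b$, every line attaining the envelope there also attains it on some interval just to the right of $\mu_b$ or only at $\mu_b$. By Lemma~\ref{lem-envelope-properties}, the line that continues to the right of $\mu_b$ has the larger slope and the smaller intercept. Among tied lines at $\mu_b$, a smaller intercept $w(C)\hat{\mP}(Y\in C\mid X=x)$ paired with a larger $\hat{\mP}$ forces a smaller $w(C)$. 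So the tie-break selects the right-hand set, and $C^{\mu}(x)$ is constant on intervals $[\mu_b,\mu_{b+1})$. Consequently $\{\mu: Y_i\in C^{\mu}(X_i)\}$ is a finite union of left-closed intervals, and by monotonicity it is of the form $[a,\infty)$. The union of two sets of the form $[a,\infty)$ is again of that form.

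The main obstacle is this right-continuity argument for $C^{\mu}$ at breakpoints, particularly with ties among several lines. I would handle it by comparing tied lines through slopes. At $\mu_b$ the tied lines have equal values, so the one with the largest slope dominates immediately to the right. Equal values together with a larger slope force a smaller intercept. Any remaining exact ties, where the lines are identical, are resolved consistently by the systematic secondary rule.
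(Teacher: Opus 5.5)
Your proof is correct and follows the paper's argument. The defining events are monotone in $\mu$ (item 3 of Proposition~\ref{prop-monotoneD-procedure} together with Assumption~\ref{ass-martingale}), and $\mu\mapsto C^{\mu}(x)$ and $\mu\mapsto D^{\mu}(x)$ are right-continuous because of the smaller-weight tie-breaking, so each infimum is attained. The only difference is that you re-derive that right-continuity, which is the content of Proposition~\ref{prop-right-continuous}, instead of citing it: you handle ties among several lines slightly more carefully and use closedness of $\{\mu:\mathcal U_i(\mu)\le 0\}$ for $D$.
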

See proof in \S~\ref{proof-prop-keystatistics}.

Computing the statistics defined in Proposition \ref{prop-keystatistics} is the main algorithmic challenge. With these statistics  we can describe the method as a familiar threshold-selection rule. Specifically, note that 
 $\tilde{\mu}_i>\mu$ if and only if $Y_i\notin C^{\mu}(X_i)$ and $D^{\mu}(X_i)=1$, and that $\hat{\mu}_{n+i}>\mu$ if and only if $D^{\mu}(X_{n+i})=1$. Hence,  Step 3(c) can be written in terms of the key statistics as  $$\widehat{FCP}^{\mu}  = \frac{\frac{1}{n+1} \left(1 + \sum_{i=1}^n \mI(\tilde{\mu}_i>\mu)\right)}{\frac{1}{m} \left(1\lor\sum_{i=1}^m \mI(\hat{\mu}_{n+i}>\mu)\right)}.$$ The equivalence is summarized in the following proposition.  
\begin{prop}\label{prop-equivalent procedures}
Under Assumption \ref{ass-martingale}, the \infoOSP\ procedure described in Definition \ref{def-Algorithm} is equivalent to the threshold-selection rule defined by the following steps: 
\begin{enumerate}
\item Compute  the  quantities in \eqref{eq-keystatistics}.
    \item Sort $(\tilde{\mu}_i)_{i\in[n]}$ in increasing order; denote by $(\tilde{\mu}_{(i)})_{i\in[n]}$ the sorted values. 
    \item Let $i_{\alpha} = \min \left\lbrace i\in [n]: \frac{\frac{1}{n+1} \left(1 + \sum_{j=1}^n \mI(\tilde{\mu}_j>\tilde{\mu}_{(i)})\right)}{\frac{1}{m} \left(1\lor\sum_{j=1}^m \mI(\hat{\mu}_{n+j}>\tilde{\mu}_{(i)})\right)}\leq \alpha \right \rbrace$ if the set is non-empty, and $n+1$ otherwise. 
    \item If $i_{\alpha}<n+1$, denote $\mu_{\alpha}= \tilde{\mu}_{(i_{\alpha})} $ and construct prediction sets $\left\{ C^{\mu_\alpha}(X_{n+i}) : i \in [m], \hat{\mu}_{n+i}>\mu_\alpha \right\}$; otherwise, construct no prediction sets.  
\end{enumerate}
\end{prop}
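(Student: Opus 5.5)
The plan is to show that both procedures evaluate the same function of $\mu$, and that both stop at the same $\mu$ because the search grid $M$ and the sorted values $\tilde\mu_{(i)}$ contain the same relevant minimizer. The argument has three steps.

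\textbf{Step 1: rewrite $\widehat{FCP}^{\mu}$ with the key statistics.} Under Assumption~\ref{ass-martingale}, $C^{\mu}(X_i)$ is nested increasing in $\mu$. By Proposition~\ref{prop-monotoneD-procedure}, $D^{\mu}(X_i)$ is non-increasing. Hence, for $i\in[n]$, the event $\{Y_i\notin C^{\mu}(X_i), D^{\mu}(X_i)=1\}$ is non-increasing in $\mu$: once $Y_i$ enters the set it stays there, and once $D$ drops to $0$ it stays at $0$. So this event is exactly $\{\tilde\mu_i>\mu\}$. Here I use that the minima in \eqref{eq-keystatistics} are attained, which is Proposition~\ref{prop-keystatistics}, together with right-continuity coming from the tie-breaking toward smaller weight and the strict inequality in \eqref{eq-D-mu-procedure}. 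Likewise, $D^{\mu}(X_{n+i})=1$ if and only if $\hat\mu_{n+i}>\mu$. This gives the displayed formula for $\widehat{FCP}^{\mu}$ as a function $F(\mu)$ defined for every $\mu\ge 0$.

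\textbf{Step 2: locate the minimal feasible $\mu$.} Let $\mu^\dagger=\inf\{\mu\ge 0:F(\mu)\le\alpha\}$. The numerator $N(\mu)=1+\#\{i:\tilde\mu_i>\mu\}$ is a right-continuous step function that decreases only at the points $\tilde\mu_j$. The denominator $1\lor\#\{j:\hat\mu_{n+j}>\mu\}$ is non-increasing and right-continuous. Now take any $\mu$ with $F(\mu)\le\alpha$, and let $\tilde\mu_{(k)}$ be the largest $\tilde\mu$ value that is $\le\mu$; if there is none, set it to $0$. Then $N(\tilde\mu_{(k)})=N(\mu)$, while the denominator at $\tilde\mu_{(k)}$ is at least its value at $\mu$. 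Hence $F(\tilde\mu_{(k)})\le F(\mu)\le\alpha$. So the minimal feasible $\mu$ is attained at $0$ or at some $\tilde\mu_{(i)}$.

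The $\mu=0$ case needs care. If $F(0)\le\alpha$, then either some $\tilde\mu_i=0$, or no calibration point contributes at $\mu=0$. I will check that the grid $M$ of Definition~\ref{def-Algorithm} contains $0$ and every $\tilde\mu_i$. Since $M$ contains all breakpoints and zero crossings of all examples, and each $\tilde\mu_i$ is one of these by construction, this should hold. Conversely, any point of $M$ that is not a $\tilde\mu_j$ cannot be the first feasible point, by the argument above. Therefore the first $\mu\in M$ with $F\le\alpha$ equals $\tilde\mu_{(i_\alpha)}$, or $0$ in the degenerate case. In that degenerate case I need to match conventions, possibly by noting $\tilde\mu_{(1)}$ yields the same selection. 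Infeasibility also coincides in both procedures, which gives the correspondence of $i_\alpha=n+1$ with step~4 of Definition~\ref{def-Algorithm}.

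\textbf{Step 3: identical outputs.} At $\mu_\alpha$ the reported sets are $C^{\mu_\alpha}(X_{n+i})$ for $D^{\mu_\alpha}(X_{n+i})=1$, that is, for $\hat\mu_{n+i}>\mu_\alpha$. This is the same collection in both procedures.

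The main obstacle is Step~2's bookkeeping: showing that $M$ is rich enough, namely that it contains every $\tilde\mu_i$, and handling the boundary conventions (the minimum at $\mu=0$, the value $\infty$, and ties among the $\tilde\mu$). I would first try to prove that each finite $\tilde\mu_i$ is either a breakpoint or a zero crossing of $\mathcal U_i$, using right-continuity of the piecewise-constant maps.
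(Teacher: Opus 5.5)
Your proposal is correct and follows the paper's route. Both rewrite $\widehat{FCP}^{\mu}$ through $\tilde\mu_i$ and $\hat\mu_{n+i}$, then note that with a non-increasing denominator the ratio can only drop where the numerator drops, i.e.\ at the $\tilde\mu_i$, so the first feasible $\tilde\mu_{(i)}$ is $\mu_\alpha$. Your ``step back to the largest $\tilde\mu_{(k)}\le\mu$'' argument is a rigorous version of the paper's one-line justification.

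\textbf{The $\mu=0$ case.} The case you left open cannot occur, so you do not need the fallback about $\tilde\mu_{(1)}$. Since $\frac{1}{m}\bigl(1\lor\sum_{j}\mI(\hat\mu_{n+j}>\mu)\bigr)\le 1$, any $\mu$ with $\widehat{FCP}^{\mu}\le\alpha<1$ must satisfy $1+\#\{j:\tilde\mu_j>\mu\}<n+1$. Hence some $\tilde\mu_j\le\mu$ always exists, and feasibility at $\mu=0$ forces $\tilde\mu_{(1)}=0$.

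\textbf{The grid $M$.} Finite positive $\tilde\mu_i$ are indeed breakpoints or zero crossings of the calibration envelopes, so they lie in $M$. However, $\tilde\mu_i=0$ occurs whenever $Y_i\in C^{0}(X_i)$ or $D^0(X_i)=0$, and $0$ is not in general one of the Method~1 values. So $0\in M$ must be imposed as a convention; Method~2 effectively does this by initializing at $\mu=0$. The paper's proof avoids the issue by reading Definition~\ref{def-Algorithm} as returning the smallest $\mu\ge 0$ with $\widehat{FCP}^{\mu}\le\alpha$.
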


\begin{proof}
Since $D^{\mu}$ is non-increasing (see proposition \ref{prop-monotoneD-procedure}), as $\mu$ increases $\widehat{FCP}^{\mu}$ in Step 3(c) of Definition \ref{def-Algorithm} can decrease only when the numerator decreases (since when the numerator is constant, the denominator is non-increasing and thus $\widehat{FCP}^{\mu}$ is non-decreasing in $\mu$). By adding Assumption \ref{ass-martingale}, it follows necessarily that not only the denominator, but also the  numerator of $\widehat{FCP}^{\mu}$ decreases as  $\mu$ increases. Therefore, the candidate $\mu$'s are the smallest values for which each example contributes zero in the numerator, i.e.,  $(\tilde{\mu}_i)_{i\in[n]}$ (which necessarily exist by proposition \ref{prop-keystatistics}). By sorting these candidates, the smallest index by which the inequality in Step 3(d) is met is $i_{\alpha}$. Since step 3(c)   can be written in terms of the key statistics as  $\widehat{FCP}^{\mu}  = \frac{\frac{1}{n+1} \left(1 + \sum_{i=1}^n \mI(\tilde{\mu}_i>\mu)\right)}{\frac{1}{m} \left(1\lor\sum_{i=1}^m \mI(\hat{\mu}_{n+i}>\mu)\right)}$, it follows that  $\mu_{\alpha} = \tilde{\mu}_{(i_{\alpha})}, $ and both procedures report the same prediction sets.
\end{proof}

\subsection{Implementation details}\label{subsec-implementationDetails}
Herein we provide details for carrying out \infoOSP\ as well as the computational complexity of these steps. \subsubsection{Computing $\mathcal{I}_i$ (Step 1)} \label{subsec-computing-Ii}
Without additional assumption, $\mathcal{I}_i = \mathcal{I}$ is always a valid choice. When $\mathcal{I}$ is large, structural assumptions on $\mathcal{I}$ and $w$ may allow reducing $\mathcal{I}_i$.

Assume $w(C) = f(|C|)$ is a non-increasing function of the set cardinality, and that $\mathcal I$ is  defined by excluding certain classes and/or certain cardinalities. We denote this specific setting  as the {\it cardinality-based case}. Note that in this setting  we address  notions of  informativeness that are very natural, since interest often  lies only in a subset of classes, or only in prediction sets that are not too large.  Consider a non-excluded cardinality $j$ and any $i \in [n+m]$. All sets of size $j$ have the same weight $w(j)$, and their corresponding lines are $\ell_{i,C}(\mu) = \left(w(j) + \mu\right) \hat{\mP}(Y \in C \mid X = X_i) - \mu (1 - \alpha)$. Hence, out of the sets of size $j$, only the set $C$ that maximizes $\hat{\mP}(Y \in C \mid X = X_i)$ can appear on $\mathcal{U}_i(\mu)$. This set must consist of the top $j$ non-excluded classes ranked by $\hat{\mP}(Y = k \mid X = X_i)$.
Therefore we can construct $\mathcal{I}_i$ as follows. For any $i \in [n+m]$, sort the non-excluded classes by $\hat{\mP}(Y = k | X = X_i)$. For any non-excluded cardinality $j$, add to $\mathcal{I}_i$ the set with the top $j$ classes by this order. The resulting collection $\mathcal{I}_i$ has size at most~$K$.

Under other assumptions on $\mathcal{I}$ and $w$, constructing small $\mathcal{I}_i$ may require different techniques, such as dynamic programming. In general, however, without any structural constraints on $\mathcal{I}$ or $w$, a reduction may not be possible, as all $2^K$ candidates may appear on the upper envelope.

\subsubsection{Computing $M$ and iterating over candidates (Steps 2-3)} \label{subsec-computing-M}
We propose two methods for defining and computing $M$ and for iterating over its candidates. As shown in Section~\ref{subsec-computational-complexity}, Method~2 has a better asymptotic running time. However, Method~1 is more straightforward and easier to implement in code.

\paragraph{Method 1 - computing all intersections:} Let $M$ be the set of all intersections among the lines $\{\ell_{i,C}(\mu)\}_{i \in [n], C\in \mathcal{I}_i}$ as well as their zero crossings (see illustration in Figure \ref{fig-envelope} of all points where the lines intersect or cross zero). It is sufficient to include only the lines of the calibration points $i \in [n]$, since $\widehat{FCP}^{\mu}$ can decrease only at these points. To see this, note that the numerator of $\widehat{FCP}^{\mu}$ depends only on the calibration set, while the denominator is non-increasing in $\mu$ (as $D^{\mu}(X_{n+i})$ is non-increasing in $\mu$). The set $M$ is then
\begin{align*}
\hspace{-1cm} M &= \left\{\frac{w(C)\hat{\mP}(Y \in C \mid X = X_i)}{(1 - \alpha - \hat{\mP}(Y \in C \mid X = X_i))} : i \in [n],\, C \in \mathcal{I}_i\right\} \\
\hspace{-1cm} &\cup \left\{\frac{w(C_2) \hat{\mP}(Y \in C_2 \mid X = X_i) - w(C_1) \hat{\mP}(Y \in C_1 \mid X = X_i)}{\hat{\mP}(Y \in C_1 \mid X = X_i) - \hat{\mP}(Y \in C_2 \mid X = X_i)} : i \in [n],\, C_1,C_2 \in \mathcal{I}_i,\, w(C_1) < w(C_2)\right\}.
\end{align*}
Then for any $\mu \in M$, compute $C^{\mu}(X_i)$ and $D^{\mu}(X_i)$ by iterating over the sets in $\mathcal{I}_i$ for each $i \in [n+m]$.

\paragraph{Method 2 - traversing the upper envelopes:} We can explicitly compute the upper envelope $\mathcal{U}_i$ for each $i \in [n+m]$ using a standard divide and conquer algorithm  (see illustration in Figure \ref{fig-envelope}; a description of the algorithm is provided, for example, in Algorithm 6.2.1  of \citet{SharirAgarwal1995}
).
Let $M$ be the union of the breakpoints of all these upper envelopes. We can efficiently implement step (3) by traversing the upper envelopes and updating $C^{\mu}(X_i)$, $D^{\mu}(X_i)$, and $\widehat{FCP}^{\mu}$ incrementally. To do so, we maintain a minimum heap with the leftmost breakpoint that was not handled yet from each envelope. We initialize $C^0(X_i)$, $D^0(X_i)$, and $\widehat{FCP}^0$ for each $i \in [n+m]$ using the leftmost line segment in $\mathcal{U}_i$, and insert the first breakpoint of each envelope into the heap. We iteratively pick the next leftmost breakpoint. We update $C^{\mu}(X_i)$ and $D^{\mu}(X_i)$ for the index $i$ that contributed this breakpoint according to the adjacent line segment in $\mathcal{U}_i$. We incrementally update $\widehat{FCP}^{\mu}$ using the difference for index $i$. After handling a breakpoint, we remove it from the minimum heap and insert the following breakpoint from the same envelope.

\subsubsection{The computational complexity}\label{subsec-computational-complexity}
We analyze the running time of the \infoOSP\ procedure. Let $S = \max_{i \in [n+m]} |\mathcal{I}_i|$.

\paragraph{Step 1.}
If no reduction is performed and we take $\mathcal{I}_i = \mathcal{I}$, no computation is needed and $S = |\mathcal{I}|$. In the cardinality-based case, the implementation described in \S~\ref{subsec-computing-Ii} for constructing $\mathcal{I}_i$ requires sorting $K$ classes for each $i \in [n+m]$, taking $O(K\log(K)(n+m))$ time and yielding $S \leq K$.

\paragraph{Steps 2-3 via Method 1.}
For each $i \in [n]$, computing the pairwise intersections and zero crossings of the $|\mathcal{I}_i|$ lines takes $O(|\mathcal{I}_i|^2)$ time. Over all calibration points, this gives $|M| = O(S^2 n)$ candidates, computed in $O(S^2 n)$ time. Sorting $M$ takes $O(S^2 n \log(S^2 n))$. For each candidate $\mu \in M$, computing $C^{\mu}(X_i)$ for all $i \in [n+m]$ by iterating over $\mathcal{I}_i$ takes $O(S(n+m))$, and computing $D^{\mu}(X_i)$ and $\widehat{FCP}^{\mu}$ takes $O(n+m)$. The total running time of Steps~2-3 is therefore $O(S^3 n(n+m))$.

\paragraph{Steps 2-3 via Method 2.}
The upper envelope of $r$ lines can be computed in $O(r\log(r))$ time using a standard divide and conquer algorithm, and consists of at most $r$ segments (see, e.g., Theorem 6.1 for Algorithm 6.2.1 in \citet{SharirAgarwal1995}). Thus computing all $n+m$ upper envelopes takes $O(S\log(S)(n+m))$ time, and $|M| = O(S(n+m))$. At each iteration, the incremental update of $C^{\mu}(X_i)$, $D^{\mu}(X_i)$, and $\widehat{FCP}^{\mu}$ requires $O(1)$ operations, and updating the minimum heap requires $O(\log(n+m))$ operations. The total running time of Steps~2-3 is therefore $O\left(S(n+m)\left(\log(S) + \log(n+m)\right)\right)$.

\paragraph{Overall complexity.}
With $\mathcal{I}_i = \mathcal{I}$ for all $i \in [n+m]$, the running time is $O(|\mathcal{I}|^3 n(n+m))$ with Method~1, or $O(|\mathcal{I}|(n+m)(\log(|\mathcal{I}|) + \log(n+m)))$ with Method~2.
In the cardinality-based case, with the implementation described in \S~\ref{subsec-computing-Ii}, the total running time is $O\left(K^3 n(n+m)\right)$ with Method~1, or $O\left(K(n+m)(\log(K) + \log(n+m))\right)$ with Method~2.

\section{Finite sample theoretical guarantee}\label{subsec-finite-sample-theory}

Our theoretical result relies on the following lemma, which we present as a standalone statement since it may be of independent interest.
\begin{lem}\label{lemma-martingale}
Let $x\in \mathcal X$, $y\in \mathcal Y$, and $\mu \in [0, \infty)$.  Let $f_{\mu}(x,y)\in \{0,1\}$ and $g_{\mu}(x)\in \{0,1\}$ be  indicator functions satisfying the following conditions  for every $(x,y)\in \mathcal X \times \mathcal Y$: 
\begin{enumerate}
    \item $f_{\mu}(x,y)$ is non-increasing in $\mu$ and bounded from above by $g_{\mu}(x)$ on $[0,\infty)$.
    \item $\mu \mapsto f_{\mu}(x,y)$ and $\mu \mapsto g_{\mu}(x)$ are right-continuous on $[0, \infty)$.
    \item $f_{\infty}(x,y) = g_{\infty}(x) = 0$.
\end{enumerate} 
Let $\{(X_i,Y_i)\}_{i=1}^{n+m}$ be iid random variables with common distribution $P_{XY}$. Then
	$$err: = \mE\left[\frac{\sum_{i=1}^m f_{\mu_\alpha}(X_{n+i},Y_{n+i})}{1 \lor \sum_{i=1}^m g_{\mu_\alpha}(X_{n+i})}\right] \leq \alpha$$
	where
	$$\mu_\alpha = \min\left\lbrace \mu\geq 0 : \frac{\frac{1}{n+1}\left(\sum_{i=1}^n f_{\mu}(X_i,Y_i) + 1\right)}{\frac{1}{m}\left(1 \lor \sum_{i=1}^m g_{\mu}(X_{n+i})\right)} \leq \alpha \right\rbrace,$$
	or $\mu_\alpha = \infty$ if no such $\mu$ exists.
\end{lem}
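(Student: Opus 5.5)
The plan is a leave-one-out/exchangeability argument. It reduces the FCR-type ratio to a statistic that is symmetric in the $n+1$ exchangeable points consisting of the calibration sample plus one test point. Write $Z_j=(X_j,Y_j)$, $V(\mu)=\sum_{j=1}^n f_\mu(Z_j)$ and $R(\mu)=\sum_{i=1}^m g_\mu(X_{n+i})$.

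\emph{Well-definedness of $\mu_\alpha$.} First I check that the minimum defining $\mu_\alpha$ is attained whenever the set is non-empty. The map $\mu\mapsto \frac{(V(\mu)+1)/(n+1)}{(1\lor R(\mu))/m}$ is a ratio of right-continuous step functions taking finitely many values. Hence the feasible set is a finite union of left-closed intervals, so its infimum belongs to it.

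\emph{Reduction to a per-test-point bound.} On $\{\mu_\alpha<\infty\}$ the defining inequality gives
\[
\frac{1}{1\lor R(\mu_\alpha)}\le \frac{\alpha(n+1)}{m\,(1+V(\mu_\alpha))}.
\]
On $\{\mu_\alpha=\infty\}$ every $f_{\mu_\alpha}$ vanishes by condition~3, so those terms contribute zero. Therefore
\[
err\le \frac{\alpha(n+1)}{m}\sum_{i=1}^m \mE\left[\frac{f_{\mu_\alpha}(Z_{n+i})}{1+V(\mu_\alpha)}\right],
\]
and it suffices to show that each expectation on the right is at most $1/(n+1)$.

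\emph{Symmetric surrogate threshold.} Fix $i$ and introduce
\[
W_i(\mu)=V(\mu)+f_\mu(Z_{n+i}), \qquad R^{(i)}(\mu)=1+\sum_{j\neq i} g_\mu(X_{n+j}).
\]
Define $\mu^{(i)}$ as the smallest $\mu$ with $\frac{W_i(\mu)/(n+1)}{R^{(i)}(\mu)/m}\le\alpha$, or $\infty$ if there is none. Conditionally on $(Z_{n+j})_{j\ne i}$, the quantity $\mu^{(i)}$ is a symmetric function of the $n+1$ iid points $\{Z_1,\dots,Z_n,Z_{n+i}\}$.

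\emph{Main step: the two thresholds agree when the test point is counted.} I claim that on the event $\{f_{\mu_\alpha}(Z_{n+i})=1\}$ we have $\mu_\alpha=\mu^{(i)}$. The argument runs as follows.
\begin{itemize}
\item Since $f$ is non-increasing in $\mu$, the event forces $f_\mu(Z_{n+i})=1$ for all $\mu\le\mu_\alpha$.
\item Because $g\ge f$, it also forces $g_\mu(X_{n+i})=1$ on the same range.
\item Consequently, for every $\mu\le\mu_\alpha$ we get $W_i(\mu)=1+V(\mu)$ and $R^{(i)}(\mu)=1\lor R(\mu)$ (note $R(\mu)\ge1$ there).
\item So the two criteria coincide on $[0,\mu_\alpha]$, and both minima equal $\mu_\alpha$.
\end{itemize}
I expect this to be the delicate step. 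One must be careful with right-continuity at $\mu_\alpha$ itself, and must note that $\mu_\alpha<\infty$ on this event by condition~3.

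\emph{Exchangeability.} The claim gives
\[
\frac{f_{\mu_\alpha}(Z_{n+i})}{1+V(\mu_\alpha)}\le \frac{f_{\mu^{(i)}}(Z_{n+i})}{W_i(\mu^{(i)})},
\]
with the convention $0/0=0$. The $n+1$ points entering $\mu^{(i)}$ and $W_i$ are exchangeable, and $\mu^{(i)}$ and $W_i$ are symmetric in them. So the expectation of the right-hand side is unchanged if $Z_{n+i}$ is replaced by any of $Z_1,\dots,Z_n$. Averaging over the $n+1$ positions gives
\[
\mE\left[\frac{f_{\mu^{(i)}}(Z_{n+i})}{W_i(\mu^{(i)})}\right]=\frac{1}{n+1}\,\mE\big[\mI\{W_i(\mu^{(i)})>0\}\big]\le\frac{1}{n+1}.
\]
Summing over $i\in[m]$ and multiplying by $\alpha(n+1)/m$ yields $err\le\alpha$.
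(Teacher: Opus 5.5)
Your proof is correct, and after the common first step it takes a genuinely different route from the paper.

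Both arguments start with the same reduction. The defining inequality of $\mu_\alpha$ bounds $err$ by $\frac{\alpha(n+1)}{m}\,\mE\left[\sum_{i=1}^m f_{\mu_\alpha}(Z_{n+i})/(1+V(\mu_\alpha))\right]$.

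The paper then bounds this expectation by $m/(n+1)$ with a continuous-parameter martingale argument:
\begin{itemize}
\item it shows that $\mu\mapsto \sum_{i=1}^m f_\mu(Z_{n+i})/(1+V(\mu))$ is a right-continuous supermartingale for the filtration generated by the running counts, using conditional binomial thinning of the points that still have $f=1$ together with the Benjamini--Krieger--Yekutieli inverse-binomial bound;
\item it applies optional stopping (Karatzas--Shreve) at the stopping time $\mu_\alpha$;
\item it uses the inverse-binomial bound once more at $\mu=0$.
\end{itemize}

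You instead use a leave-one-out swap. On $\{f_{\mu_\alpha}(Z_{n+i})=1\}$, monotonicity of $f$ and $f\le g$ make the original criterion agree on $[0,\mu_\alpha]$ with the criterion that counts test point $i$ among the calibration points. Hence $\mu_\alpha=\mu^{(i)}$, a threshold symmetric in $(Z_1,\dots,Z_n,Z_{n+i})$, and exchangeability gives $1/(n+1)$ directly.

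\emph{What your route buys.}
\begin{itemize}
\item It is more elementary: no continuous-time martingale machinery, no need to check the conditional binomial structure given $\mathcal F_{\mu_1}$ or the hypotheses of the optional sampling theorem, and no inverse-binomial lemma.
\item It only uses invariance of the joint law under swapping $Z_{n+i}$ with a calibration point, so it extends to exchangeable rather than iid data.
\item It makes transparent the link to BH on conformal $p$-values that the paper mentions in its discussion.
\end{itemize}

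\emph{What the paper's route buys.} The supermartingale bound holds at any stopping time of the count filtration, not only the first-crossing time. It would therefore cover other adapted threshold rules at which the estimated FCP is at most $\alpha$.

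Two minor points.
\begin{itemize}
\item In your well-definedness step, the lemma does not assume that $g_\mu$ is monotone. So $R(\mu)$ can have infinitely many jumps, and the feasible set need not be a finite union of intervals. Attainment of the minimum nevertheless follows directly from right-continuity: take $\mu_k\downarrow$ the infimum within the feasible set. This is what the paper does, and the same argument gives attainment for $\mu^{(i)}$.
\item The step you flag as delicate needs no right-continuity. The event $f_{\mu_\alpha}(Z_{n+i})=1$, together with monotonicity of $f$ and $f\le g$, already gives $f_\mu(Z_{n+i})=g_\mu(X_{n+i})=1$ on the closed interval $[0,\mu_\alpha]$.
\end{itemize}
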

See Appendix \ref{app-proof-lemma-martingale} for the proof.

 The next two propositions are necessary for proving the finite sample control of \infoOSP. The proof of Proposition \ref{prop-monotoneD-procedure}  is the same as that of Proposition \ref{prop-monotoneD}, and is therefore omitted. 
  The only differences are that here ties in $C^{\mu}(x)$ are broken in favor of smaller weights (whereas in \S~\ref{sec-optimal} they are broken in favor of larger weights), and the decision rule $D^{\mu}(x)$ selects when the upper envelope is strictly positive (whereas in \S~\ref{sec-optimal} it selects when the upper envelope is nonnegative).  These differences do not affect the monotonicity results.

\begin{prop} \label{prop-monotoneD-procedure}
For any $x \in \mathcal{X}$, let $C^{\mu}(x)$ and $D^{\mu}(x)$ be defined as in \eqref{eq-C-mu-procedure} and \eqref{eq-D-mu-procedure}, respectively.   As $\mu\geq 0$ is increasing: 
    \begin{enumerate}
        \item $\hat{\mP}(Y\in C^{\mu}(x) \mid X=x)$ is non-decreasing;
        \item $w(C^{\mu}(x))\hat{\mP}(Y\in C^{\mu}(x) \mid X=x)$ is non-increasing;
        \item $D^{\mu}(x)$ is non-increasing. 
        \item $w(C^{\mu}(x))\hat{\mP}(Y\in C^{\mu}(x) \mid X=x)D^{\mu}(x)$  is non-increasing. 
        \item  $(1 - \hat{\mP}(Y \in C^{\mu}(x) \mid X = x) -\alpha)D^{\mu}(x)$ is non-increasing. 
    \end{enumerate}
\end{prop}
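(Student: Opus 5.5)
The argument follows the proof of Proposition~\ref{prop-monotoneD}, with the oracle probabilities replaced by $\hat{\mP}$ and the modified tie-breaking and strict inequality handled explicitly. Fix $x$ and consider the finite family of lines $\hell_{x,C}$, $C\in\mathcal I$. The first step is a version of Lemma~\ref{lem-envelope-properties} adapted to the selection rule \eqref{eq-C-mu-procedure}. Take $\mu_1<\mu_2$ and write $C_1=C^{\mu_1}(x)$, $C_2=C^{\mu_2}(x)$, with $p_j=\hat{\mP}(Y\in C_j\mid X=x)$ and $a_j=w(C_j)p_j$. Maximality of $C_1$ at $\mu_1$ and of $C_2$ at $\mu_2$ gives $\hell_{x,C_1}(\mu_1)\ge \hell_{x,C_2}(\mu_1)$ and $\hell_{x,C_2}(\mu_2)\ge \hell_{x,C_1}(\mu_2)$. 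Adding the two inequalities yields $(\mu_2-\mu_1)(p_2-p_1)\ge 0$, so $p_2\ge p_1$. This is item 1, and it holds for any selection from the argmax, whatever the tie-breaking.

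For item 2, I will rewrite the first inequality as $a_1-a_2\ge \mu_1(p_2-p_1)\ge 0$. This gives item 2 directly whenever $p_2\ge p_1$ and $\mu_1\ge0$, again independently of tie-breaking. The only point to watch is that ties must not cause oscillation between sets. The pairwise argument above needs no tie-breaking at all, so this concern does not arise. I expect that this exchange argument is cleaner than working with breakpoints.

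For item 3, I will argue as in Lemma~\ref{lem-U-by-Tx}. If some $C$ has $\hat{\mP}(Y\in C\mid X=x)\ge 1-\alpha$, then $\hell_{x,C}$ has nonnegative slope and intercept $w(C)\hat{\mP}(Y\in C\mid X=x)$. Two sub-cases follow. If that probability is positive, the intercept is positive because $w>0$, so the envelope is strictly positive on $[0,\infty)$ and $D^\mu(x)\equiv1$. If the probability is zero, then $1-\alpha\le 0$, which is degenerate and can be set aside. If every $C$ has probability below $1-\alpha$, all lines are strictly decreasing, so the envelope is strictly decreasing and $\mI\{\mathcal U>0\}$ is non-increasing. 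This case split is the main place where the strict inequality in \eqref{eq-D-mu-procedure} must be checked, and it goes through.

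Item 4 follows because it is a product of two nonnegative non-increasing functions, namely items 2 and 3. For item 5, I use the same case split. When $D^\mu\equiv1$, the expression equals $1-\alpha-p(\mu)$, which is non-increasing by item 1. Otherwise each factor $1-\alpha-\hat{\mP}(Y\in C^\mu(x)\mid X=x)$ is positive and non-increasing, and $D^\mu$ is nonnegative and non-increasing, so their product is non-increasing.
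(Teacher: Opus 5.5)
Your proof is correct, and for items 1 and 2 it takes a different route from the paper. The paper omits the proof and refers back to Proposition~\ref{prop-monotoneD}. There, items 1 and 2 are read off the geometry of Lemma~\ref{lem-envelope-properties}: the envelope is convex, so the lines supporting its successive segments have non-decreasing slopes and non-increasing intercepts. You instead compare the maximizers at $\mu_1<\mu_2$ directly and add the two optimality inequalities. This gives $(\mu_2-\mu_1)(p_2-p_1)\ge 0$ and then $a_1-a_2\ge\mu_1(p_2-p_1)\ge 0$.

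Your exchange argument is more elementary and holds for every selection from the argmax. It therefore makes rigorous the paper's remark that switching to smaller-weight tie-breaking ``does not affect the monotonicity results.'' It also avoids having to check which supporting line is selected at a breakpoint, a point the segment-based argument leaves implicit. The paper's geometric route pays off elsewhere. The same picture of ordered segments, with the last one of maximal slope, is reused for Proposition~\ref{lem:oracle_mularge}, for right-continuity in Proposition~\ref{prop-right-continuous} (where tie-breaking genuinely matters), and for the envelope traversal of Method~2.

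For items 3--5 you follow the paper's case split. You observe that a set with $\hat{\mP}(Y\in C\mid X=x)\ge 1-\alpha$ gives a line with strictly positive intercept, hence $\mathcal U_x(\mu)>0$ for all $\mu\ge 0$. This is exactly what the strict inequality in \eqref{eq-D-mu-procedure} requires, whereas Lemma~\ref{lem-U-by-Tx} as stated only gives $\mathcal U_x(\mu)\ge 0$. For $\alpha\in(0,1)$ your zero-probability sub-case is vacuous.
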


\begin{prop} \label{prop-right-continuous}
For any $x \in \mathcal{X}$, $\mu \mapsto C^{\mu}(x)$ and $\mu \mapsto D^{\mu}(x)$ are right-continuous on $[0,\infty)$.
\end{prop}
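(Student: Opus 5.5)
Fix $x\in\mathcal X$ and write $p_C=\hat{\mP}(Y\in C\mid X=x)$, $a_C=w(C)p_C$ and $b_C=p_C-(1-\alpha)$, so that $\hell_{x,C}(\mu)=a_C+\mu b_C$. The plan is to show that, at every $\mu_0\ge 0$, the selected set is locally constant to the right: there is $\epsilon>0$ with $C^{\mu}(x)=C^{\mu_0}(x)$ for all $\mu\in[\mu_0,\mu_0+\epsilon)$. Right-continuity of $D^{\mu}(x)$ will then follow from this together with an open-set argument for the strict inequality in \eqref{eq-D-mu-procedure}. Since $\mathcal I$ is finite, both steps reduce to comparing finitely many affine functions on a short interval to the right of $\mu_0$.

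For $C^{\mu}(x)$, let $A(\mu_0)=\arg\max_{C\in\mathcal I}\hell_{x,C}(\mu_0)$. Every line outside $A(\mu_0)$ lies strictly below $\mathcal U(\mu_0)=\max_C\hell_{x,C}(\mu_0)$, so by continuity and finiteness there is $\epsilon_1>0$ such that only lines in $A(\mu_0)$ can attain the maximum on $[\mu_0,\mu_0+\epsilon_1)$. All lines in $A(\mu_0)$ pass through the same point at $\mu_0$, so for $\mu>\mu_0$ the maximizers among them are exactly those with the largest slope $b^*=\max_{C\in A(\mu_0)}b_C$. Hence for $\mu\in(\mu_0,\mu_0+\epsilon_1)$ the argmax is the fixed set $A^+=\{C\in A(\mu_0): b_C=b^*\}$, which does not depend on $\mu$. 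The tie-breaking rule (smallest weight, then lexicographic) then picks a single element $C^+$ of $A^+$, constant on $(\mu_0,\mu_0+\epsilon_1)$.

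The main obstacle is showing that $C^{\mu_0}(x)=C^+$, i.e., that tie-breaking at the breakpoint itself agrees with the selection just to its right; this is why the paper breaks ties in favor of smaller weight. For $C\in A(\mu_0)$, equality of values at $\mu_0$ gives $a_C+\mu_0 b_C=\mathcal U(\mu_0)$. Writing $b_C=p_C-(1-\alpha)$ and $a_C=w(C)p_C$, this reads $p_C(w(C)+\mu_0)=\mathcal U(\mu_0)+\mu_0(1-\alpha)$, so
\[
p_C=\frac{\mathcal U(\mu_0)+\mu_0(1-\alpha)}{w(C)+\mu_0}.
\]
I would handle the degenerate case $\mathcal U(\mu_0)+\mu_0(1-\alpha)=0$ separately: there all $p_C=0$ and all slopes coincide, so $A^+=A(\mu_0)$ trivially. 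Otherwise the right-hand side is positive, since $p_C\ge 0$ and $w(C)+\mu_0>0$, and it is strictly decreasing in $w(C)$. Thus within $A(\mu_0)$, smaller weight means larger $p_C$, hence larger slope $b_C$. The smallest-weight elements of $A(\mu_0)$ are therefore exactly the elements of $A^+$, and they all share the same $w$ and $p$. The secondary lexicographic rule then picks the same set $C^+$ at $\mu_0$ as on $(\mu_0,\mu_0+\epsilon_1)$. So $C^{\mu}(x)=C^{\mu_0}(x)$ on $[\mu_0,\mu_0+\epsilon_1)$.

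For $D^{\mu}(x)=\mI\{\mathcal U(\mu)>0\}$, note that $\mathcal U$ is a maximum of finitely many affine functions and hence continuous. If $\mathcal U(\mu_0)>0$, then $\mathcal U>0$ on a right neighbourhood of $\mu_0$. If $\mathcal U(\mu_0)\le 0$, then $D^{\mu_0}(x)=0$, and since $D^{\mu}(x)$ is non-increasing in $\mu$ by Proposition~\ref{prop-monotoneD-procedure}, we get $D^{\mu}(x)=0$ for all $\mu\ge\mu_0$. In both cases $D^{\mu}(x)$ is right-continuous at $\mu_0$, which completes the proof.
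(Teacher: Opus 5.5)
Your proof is correct and follows essentially the same route as the paper. At a tie point, the line that dominates to the right has the larger slope and hence the larger $\hat{\mP}(Y\in C\mid X=x)$; because the tied lines meet at a common point, this forces a smaller weight, so the smallest-weight tie-breaking rule already selects the right-hand maximizer, and right-continuity of $D^{\mu}(x)$ then follows from continuity of the envelope together with monotonicity. Your version is more careful than the paper's: it handles multi-way ties via $p_C=(\mathcal U(\mu_0)+\mu_0(1-\alpha))/(w(C)+\mu_0)$, covers $\mu_0=0$, and treats the degenerate case of coinciding lines.
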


\begin{proof}
The mapping $\mu \mapsto C^{\mu}(x)$ is piecewise-constant with a finite number of breakpoints, where ties are resolved by prioritizing the smallest weight and, if necessary, the smallest index in $\mathcal I$. Consider a breakpoint $\mu'$ at which two lines $\hell_{x, C_1}(\mu)$ and $\hell_{x, C_2}(\mu)$ intersect, where $C_1$ is the maximizer for $\mu < \mu'$ and $C_2$ is the maximizer for $\mu > \mu'$. Therefore, the line $\hell_{x, C_2}(\mu)$ has a steeper slope, and $\hat{\mP}(Y \in C_2 | X = x) > \hat{\mP}(Y \in C_1 | X = x)$. Since the lines intersect at $\mu'>0$, it implies $w(C_2) < w(C_1)$. By the tie breaking rule of preferring the set with smaller weight, $C^{\mu'}(x) = C_2$. This shows that $\mu \mapsto C^{\mu}(x)$ is right-continuous.

Since $\mu \mapsto C^{\mu}(x)$ is right-continuous and each of the lines $\{\hell_{x, C}(\mu)\}_{C \in \mathcal{I}}$ is continuous, it follows that $\mu \mapsto \hell_{x, C^{\mu}(x)}(\mu)$ is right-continuous. By Proposition~\ref{prop-monotoneD-procedure}, $D^{\mu}(x)$ is non-increasing in $\mu$. A non-increasing indicator that thresholds a right-continuous function is itself right-continuous, therefore, $\mu \mapsto D^{\mu}(x) = \mI\{\hell_{x, C^{\mu}(x)}(\mu) > 0\}$ is right-continuous.
\end{proof}

For our finite sample FCR control of procedure \infoOSP\, we need to limit our collection of informative prediction sets $\mathcal I$ to satisfy Assumption \ref{ass-martingale}.

\begin{thm}\label{thm-FCR-control}
	If Assumption~\ref{ass-martingale} is satisfied, then the \infoOSP\ procedure satisfies
	$$FCR:=\mE\left[\frac{\sum_{i=1}^m\mI\left\lbrace Y_{n+i}\notin C^{\mu_\alpha}(X_{n+i})\right\rbrace D^{\mu_\alpha}(X_{n+i})}{1\lor\sum_{i=1}^m D^{\mu_\alpha}(X_{n+i})}\right] \leq \alpha,$$
	where
	$$\mu_\alpha = \min\left\lbrace \mu\geq 0\mid \frac{\frac{1}{n+1}\left(\sum_{i=1}^n \mI\left\lbrace Y_i\notin C^{\mu}(X_i)\right\rbrace D^{\mu}(X_i)+1\right)}{\frac{1}{m}\left(1\lor \sum_{i=1}^m D^{\mu}(X_{n+i})\right)} \leq \alpha \right\rbrace$$
	and we set $\mu_\alpha = \infty$ with $D^{\infty} \equiv 0$ if no such $\mu$ exists.
\end{thm}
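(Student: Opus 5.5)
The plan is to deduce Theorem~\ref{thm-FCR-control} directly from Lemma~\ref{lemma-martingale}, using the choices
\[
f_{\mu}(x,y) := \mI\{y\notin C^{\mu}(x)\}\,D^{\mu}(x), \qquad g_{\mu}(x) := D^{\mu}(x),
\]
for $\mu\in[0,\infty)$, together with $f_\infty \equiv g_\infty \equiv 0$, which matches the convention $D^\infty\equiv 0$. With these choices the threshold $\mu_\alpha$ of the lemma is literally the $\mu_\alpha$ in the theorem, and the quantity $err$ is literally the FCR. The estimated probabilities are computed from training data that is independent of the calibration and test data. We therefore condition on $\mathcal D_{train}$, so that $C^\mu$ and $D^\mu$ are fixed deterministic maps and the sample $(X_i,Y_i)_{i\in[n+m]}$ is iid. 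Averaging over $\mathcal D_{train}$ at the end then gives the claim.

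It remains to check the three hypotheses of the lemma.

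\textbf{Bound $f_\mu\le g_\mu$.} This is immediate, since $f_\mu$ is $D^\mu$ multiplied by an indicator.

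\textbf{Monotonicity of $f_\mu$ in $\mu$.} By item 3 of Proposition~\ref{prop-monotoneD-procedure}, $D^\mu(x)$ is non-increasing. By Assumption~\ref{ass-martingale}, $\mu_1<\mu_2$ gives $C^{\mu_1}(x)\subseteq C^{\mu_2}(x)$. Hence $y\notin C^{\mu_2}(x)$ implies $y\notin C^{\mu_1}(x)$, so $\mI\{y\notin C^{\mu}(x)\}$ is non-increasing. A product of two non-increasing $\{0,1\}$-valued functions is non-increasing, so $f_\mu$ is non-increasing.

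\textbf{Right-continuity.} By Proposition~\ref{prop-right-continuous}, $\mu\mapsto C^\mu(x)$ is piecewise constant and right-continuous, so $\mI\{y\notin C^\mu(x)\}$ is right-continuous in $\mu$. Also $D^\mu(x)$ is right-continuous by the same proposition. A product of right-continuous piecewise-constant functions is right-continuous.

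\textbf{Value at infinity.} Condition 3 holds by the definition at $\mu=\infty$. One might want the limit as $\mu\to\infty$ to agree with this value, but the lemma only asks for the value at $\infty$ to be $0$. That value is used only when no finite $\mu$ satisfies the criterion, in which case nothing is reported.

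The main point requiring care is that $\mu_\alpha$ in the lemma is defined as a minimum. We need it to be attained, and to coincide with the search in Definition~\ref{def-Algorithm}. Right-continuity, together with the fact that there are finitely many breakpoints for each $x$ (and a finite sample), makes the map $\mu\mapsto\widehat{FCP}^\mu$ piecewise constant and right-continuous. Therefore the set $\{\mu:\widehat{FCP}^\mu\le\alpha\}$ is a finite union of left-closed intervals, and its infimum is attained. Proposition~\ref{prop-equivalent procedures} already shows that this minimum equals the $\mu_\alpha$ output by the algorithm, since $\widehat{FCP}^\mu$ can decrease only at the points $\tilde\mu_i$.

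With the hypotheses verified, Lemma~\ref{lemma-martingale} gives $FCR\le\alpha$ conditional on the training data, and hence unconditionally. The substantive difficulty, namely the exchangeability/martingale argument, is contained in Lemma~\ref{lemma-martingale}. In this reduction, the step that genuinely needs an assumption is the monotonicity of $f_\mu$: that is exactly where Assumption~\ref{ass-martingale} enters, and it would fail without it.
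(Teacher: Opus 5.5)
Your proposal is correct and matches the paper's proof: the same choice $f_\mu(x,y)=\mI\{y\notin C^\mu(x)\}D^\mu(x)$, $g_\mu=D^\mu$, $f_\infty=g_\infty=0$, with monotonicity from Assumption~\ref{ass-martingale} and Proposition~\ref{prop-monotoneD-procedure}, right-continuity from Proposition~\ref{prop-right-continuous}, and the conclusion from Lemma~\ref{lemma-martingale}. Your extra remarks are consistent with the paper: it leaves conditioning on $\mathcal D_{train}$ implicit, and it handles attainment of the minimum inside the proof of the lemma.
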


\begin{proof}
Let $f_{\mu}(x,y) = \mI\{y \notin C^{\mu}(x)\} D^{\mu}(x)$ and $g_{\mu}(x) = D^{\mu}(x)$ for any $\mu \in [0, \infty)$, and $f_{\infty}(x,y) = g_{\infty}(x) = 0$.
Let $x,y \in \mathcal{X} \times \mathcal{Y}$. The conditions of Lemma~\ref{lemma-martingale} are satisfied:

\begin{enumerate}
\item The inequality $f_{\mu}(x,y) \le g_{\mu}(x)$ is immediate. Let $\mu_1 < \mu_2$. By Assumption~\ref{ass-martingale}, $C^{\mu_1}(x) \subseteq C^{\mu_2}(x)$, which implies that $\mI\{y \notin C^{\mu_1}(x)\} \geq \mI\{y \notin C^{\mu_2}(x)\}$. Thus, $\mI\{y \notin C^{\mu}(x)\}$ is non-increasing in $\mu$. From Proposition~\ref{prop-monotoneD-procedure}, $D^{\mu}(x)$ is also non-increasing in~$\mu$. Therefore, $f_{\mu}(x,y)$ is non-increasing in $\mu$.

\item By Proposition~\ref{prop-right-continuous}, $\mu \mapsto C^{\mu}(x)$ and $\mu \mapsto D^{\mu}(x)$ are right-continuous on $[0,\infty)$. It follows that $\mu \mapsto \mI\{y \notin C^{\mu}(x)\}$ is also right-continuous, and consequently $\mu \mapsto f_{\mu}(x,y)$ and $\mu \mapsto g_{\mu}(x)$ are right-continuous on $[0,\infty)$.

\item $f_{\infty}(x,y) = g_{\infty}(x) = 0$ by definition.
\end{enumerate}
The result now follows by applying Lemma~\ref{lemma-martingale}.
\end{proof}

\section{Settings that coincide with existing procedures}\label{sec-exisitingprocedures}
\subsection{Classification with abstention}
For classification, the interest is in identifying class membership so  $\mathcal I = \{\{1\}, \ldots, \{K\} \}.$ For this specific $\mathcal I$,  the resulting procedure has a simpler form that coincides with that of \cite{FSRzhao2023} for $w(\{k\})=1 \ \forall k \  \in [K]$. 
The oracle-guided decision rule using \eqref{eq-C-mu-procedure} and \eqref{eq-D-mu-procedure} reduce to
\begin{eqnarray}
C^{\mu}(X_i) &&= \arg\max_{C \in \mathcal{I}} \left(1 + \mu\right) \hat{\mP}(Y \in C \mid X = X_i) = \arg\max_{C\in \mathcal{I}} \hat{\mP}(Y \in C \mid X = X_i) \nonumber\\
&& = \arg\max_{k \in [K]} \hat{\mP}(Y = k \mid X = X_I).
\nonumber \\ 
D^{\mu}(X_i) &&=  \mI\left\lbrace (1+\mu) \hat{\mP}(Y \in C^{\mu}(X_i) \mid X = X_i) - \mu(1-\alpha) > 0\right\rbrace\nonumber\\&&= \mI\left\lbrace \hat{\mP}(Y \in C^{\mu}(X_i) \mid X = X_i) > \frac{\mu(1-\alpha)}{1+\mu}\right\rbrace 
= \mI\left\lbrace \max_{k \in [K]} \hat{\mP}(Y = k \mid X = X_i) > t(\mu)\right\rbrace.\nonumber
\end{eqnarray}
where $t(\mu) := \frac{\mu(1-\alpha)}{1+\mu} \in [0,1-\alpha)$ is monotone increasing in $\mu$. Therefore \infoOSP\ for classification can be written as follows.
\begin{definition}\label{def-proc-classification}
The oracle-guided optimized selection procedure for classification is defined as follows \citep{FSRzhao2023}:
  \begin{enumerate}
\item For any $i \in [n+m]$, let $\hat{Y}_i = \arg\max_{k \in [K]} \hat{\mP}(Y = k \mid X = X_i)$.
\item Let $\widehat{FCP}(t) = \frac{\frac{1}{n+1} \left(1+\sum_{i=1}^n \mI\left\{Y_i \neq \hat{Y}_i\right\} \mI\left\{\hat{\mP}(Y = \hat{Y}_i \mid X = X_i) > t\right\} \right)}{\frac{1}{m} \left(1\lor \sum_{i=1}^m \mI\left\{\hat{\mP}(Y = \hat{Y}_{n+i} \mid X = X_{n+i}) > t\right\}\right)}$. 
\item Let $t_\alpha = \min\left\{t \in [0,1] : \widehat{FCP}(t) \leq \alpha\right\}$.
\item Report $\hat{Y}_{n+i}$ for any $i \in [m]$ with $\hat{\mP}(Y = \hat{Y}_{n+i} \mid X = X_{n+i}) > t_\alpha$.
\end{enumerate}
\end{definition}

The classification with abstention procedure in Definition \ref{def-proc-classification} can also be viewed through the lens of multiple testing conformal $p$-values. Let $\hat Y(x)=\arg\max_{k\in[K]}\hat\mP(Y=k\mid X=x)$ and define $s(x):=\hat\mP\!\left(Y=\hat Y(x)\mid X=x\right)$. For each test point $i\in[m]$, consider the (random) null hypothesis $H_i:\; Y_{n+i}\neq \hat Y(X_{n+i})$. Define the score
\[
S(x,y):=\mI\{y\neq \hat Y(x)\}\,s(x),
\]
which equals $0$ when $H_i$ is false (the prediction is correct) and equals $s(x)$ when $H_i$ is true (the prediction is incorrect). A valid conformal $p$-value for testing $H_i$ is
\[
p_i \;:=\; \frac{1}{n+1}\Bigl(1+\sum_{j=1}^n \mI\{S(X_j,Y_j)\ge s(X_{n+i})\}\Bigr)
\;=\; \frac{1}{n+1}\Bigl(1+\sum_{j=1}^n \mI\{Y_j\neq \hat Y(X_j)\}\,\mI\{s(X_j)\ge s(X_{n+i})\}\Bigr).
\]
Applying the BH procedure to $p_1,\ldots,p_m$ guarantees FDR control, which is the FCR control in this context. Moreover, 
it can be shown  that
\[
\{i\in[m]: s(X_{n+i})>t_\alpha\}=\{i\in[m]: p_i\le \hat k\alpha/m\},
\]
where $\hat k$ is the number of rejections by BH and $t_\alpha$ is the threshold from Definition \ref{def-proc-classification}. So  applying BH to $p_1,\ldots,p_m$  leads to reporting the same outcomes (i.e.,  $\hat Y(X_{n+i})$ for the rejected null hypotheses) as in \infoOSP\ for classification.

Our framework is more general for classification outcomes, since it   handles general informative families $\mathcal I$ and weights $w(C), $ with proven finite-sample FCR control in Theorem \ref{thm-FCR-control}.
For example,  the analyst may be willing to abstain only if the prediction sets are of small size, say size 2. For such more general settings, it is not straightforward to cast  \infoOSP\  as a procedure that applies BH to conformal $p$-values.

\begin{remark}[Relation to \texttt{InfoSP}] For non-trivial classification with $K=2$, the \texttt{InfoSP} procedure \citep{GazinHellerMarandonRoquain2025} coincides with the classification-with-abstention procedure above. For $K>2$, however, the two procedures differ. The \texttt{InfoSP} procedure classifies the example if and only if exactly one class has estimated membership probability above the threshold, and all others are below it. Moreover, the level $\alpha |\mathcal S|/m$ used for the final selection threshold may be too small to select examples whose largest estimated membership probability exceeds $1-\alpha$.  By contrast, in order to maximize the number of correctly classified examples, it is only necessary to threshold the largest estimated membership probability, which makes intuitive sense. Moreover, examples whose  largest estimated membership probability exceeds $1-\alpha$ are always selected. 
\end{remark}

\subsection{Novelty detection}\label{subsec-noveltydetection}
For novelty detection, denoting non-novel examples as coming from class zero and novel examples as coming from class one, the interest is in detecting the novel examples, i.e., $\mathcal I = \{1\}.$
In this setting, the ``calibration" examples are all non-novel examples. The resulting procedure has a simpler form since  $C^{\mu}(X_i) = \{1\} \ \forall \mu\geq 0$.  The oracle-guided decision rule  \eqref{eq-D-mu-procedure} reduces to
\begin{eqnarray}
D^{\mu}(X_i) =  \mI\left\lbrace (1+\mu) \hat{\mP}(Y =1 \mid X = X_i) - \mu(1-\alpha) > 0\right\rbrace = \mI\left\lbrace \hat{\mP}(Y=1 \mid X = X_i) >  t(\mu)\right\rbrace,\nonumber
\end{eqnarray}
where $t(\mu) := \frac{\mu(1-\alpha)}{1+\mu} \in [0,1-\alpha)$ is monotone increasing in $\mu$. Therefore \infoOSP\ for novelty detection can be written as follows.
\begin{definition}\label{def-proc-noveltydetection}
The oracle-guided optimized selection procedure for novelty detection:
  \begin{enumerate}
\item For  $i \in [n+m]$, compute $ \hat{\mP}(Y = 1 \mid X = X_i)$. Note that $Y_i=0$ for $i\in[n]$. 
\item Let $\widehat{FCP}(t) = \frac{\frac{1}{n+1} \left(1+\sum_{i=1}^n  \mI\left\{\hat{\mP}(Y = 1 \mid X = X_i) > t\right\} \right)}{\frac{1}{m}\left(1\lor \sum_{i=1}^m \mI\left\{\hat{\mP}(Y =1 \mid X = X_{n+i}) > t\right\}\right)}$. 
\item Let $t_\alpha = \min\left\{t \in [0,1] : \widehat{FCP}(t) \leq \alpha\right\}$.
\item Report as novelty any $i \in [m]$ with $\hat{\mP}(Y = 1 \mid X = X_{n+i}) > t_\alpha$.
\end{enumerate}
\end{definition}

\cite{mary2022semisupervised} showed that the set of novelties is equivalent to the discoveries reported by applying the level $\alpha$ BH procedure on the conformal $p$-values $p_1,\ldots,p_m$, where $$p_i = \frac{1+\sum_{j=1}^n \mI\left(\hat{\mP}(Y = 1 \mid X = X_j)\geq \hat{\mP}(Y = 1 \mid X = X_{n+i})\right)}{n+1}.$$
So \infoOSP\ yields the procedure for novelty detection using conformal $p$-values and the BH procedure (shown to have finite sample FDR control in \citealt{bates2023testing,marandon2024adaptive}), with the natural test statistic that estimates the probability of the example being a novelty, as recommended in \cite{marandon2024adaptive}. See also  the Neyman-Pearson argument for optimality  in \S~2.5 of \cite{Jin2023selection}.

\section{Simulations in settings that result in novel procedures}\label{sec-simulations}

We provide numerical experiments to assess the benefits of our new approach. Compared to the naive approach of selecting the informative examples following the standard construction of $1-\alpha$ prediction sets via the classic split conformal method, the advantage is clear: the FCR of the naive approach can be much higher than the nominal level $\alpha$. In addition, we show that  \infoOSP\ has better  power  than  the competitors from \cite{GazinHellerMarandonRoquain2025} that adjust for informative selection and also have proven FCR control:  \texttt{infoSP} and   \texttt{infoSCOP}. 

We also evaluate numerically the performance of the variant of \infoOSP\ in Remark \ref{rem-calibrationOnly}, which has the benefit of allowing for the examples to arrive online, although it does not have proven finite sample FCR control. We denote this procedure as \infoOSP-calOnly. Interestingly, its FCR and power are remarkably close to those of \infoOSP\ in our numerical experiments.

Our experiments are restricted to classification outcomes, and we consider  two notions of informativeness of interest: non-trivial selection, and exclusion of the null class. We consider the  setting in which the prediction algorithm was trained on examples that come from the same distribution as  the calibration and test examples, as well as the more challenging setting in which there is a label shift, where the prediction algorithm is trained on examples that differ in their joint distribution from the calibration and test examples. The difference is only in the distribution of the labels; given the label, the conditional distribution of $X$ is the same for all training, calibration and test examples. The oracle conditional class probabilities for the calibration and test data are no longer the same as the oracle conditional class probabilities for the training data, but they can be derived from them  using the ratios of relative class frequencies. Accordingly, these ratios can be estimated so that the estimated conditional class probabilities for the calibration and test data take the label shift into account. The details of this estimation procedure are in Appendix~\ref{app-vector-scaling}. This has been called vector scaling in the literature (see, e.g., \citet{guo2017calibration}.

Our main evaluation metrics are the FCR and the resolution-adjusted power,  the latter of  which rewards procedures for constructing more prediction sets that contain the true outcome and have a large weight.  We consider as the weight  the inverse cardinality $w(C) = \frac{1}{|C|}$ of the prediction set. For $N$ iterations, let $(X_{i}^{(j)}, Y_{i}^{(j)})_{i\in [n+m]}$ denote the data generated in iteration $j\in[N].$ For a procedure that selects examples and reports the prediction sets for the selected examples only, let   $D\left(X_{n+i}^{(j)}\right)\in \{0,1\}$ denote the decision to predict example $n+i$, and $C\left(X_{n+i}^{(j)}\right)$ the constructed prediction set. Then, for iteration  $j\in N$, the FCP and resolution-adjusted true coverage proportion (TCP) are, respectively, 
$$\frac{\sum_{i=1}^m \mI\left\lbrace Y_{n+i}^{(j)}\notin C\left(X_{n+i}^{(j)}\right)\right\rbrace D\left(X_{n+i}^{(j)}\right)}{\sum_{i=1}^m D\left(X_{n+i}^{(j)}\right)\lor 1}$$
and $$\sum_{i=1}^m\frac{1}{|C\left(X_{n+i}^{(j)}\right)|}\mI\left\lbrace Y_{n+i}^{(j)}\notin C\left(X_{n+i}^{(j)}\right)\right\rbrace D\left(X_{n+i}^{(j)}\right).$$

The FCR and resolution-adjusted power are estimated, respectively,  by
the averages of the above terms over the $N$ iterations.

For the five procedures, classic conformal, \texttt{infoSP},  \texttt{infoSCOP}, \infoOSP\, and  \infoOSP-calOnly, we compute the average FCP and the estimated resolution-adjusted power across 10,000 iterations for the data generations in \S~\ref{sim-Gaussian}, and for 1,000 iterations  for the data generations in \S~\ref{sim-CIFAR}.

\subsection{Classification experiments with bivariate normal mixtures }\label{sim-Gaussian}

 We consider a Gaussian mixture with $K=4$ components, where each component is bivariate normal. The means for the four components are $$(0,0), (SNR,0), (SNR,SNR), (0, SNR).$$ So the overlap between components decreases with increasing SNR. Let $\pi= (\pi_1,\pi_2,\pi_3, \pi_4)$ denote the relative frequencies of the four components. Figure \ref{fig-bvndatageneration} shows one data set generated with $n=m=500$, $SNR = 2$ and $\pi \in \{(0.25,0.25,0.25,0.25), (0.1,0.7,0.1,0.1) \}.$

\begin{figure}[!ht]
\centering
\setlength{\tabcolsep}{4pt}
\renewcommand{\arraystretch}{1.0}
\begin{tabular}{@{}cc@{}}
\includegraphics[width=0.95\linewidth,height=0.26\textheight,keepaspectratio]{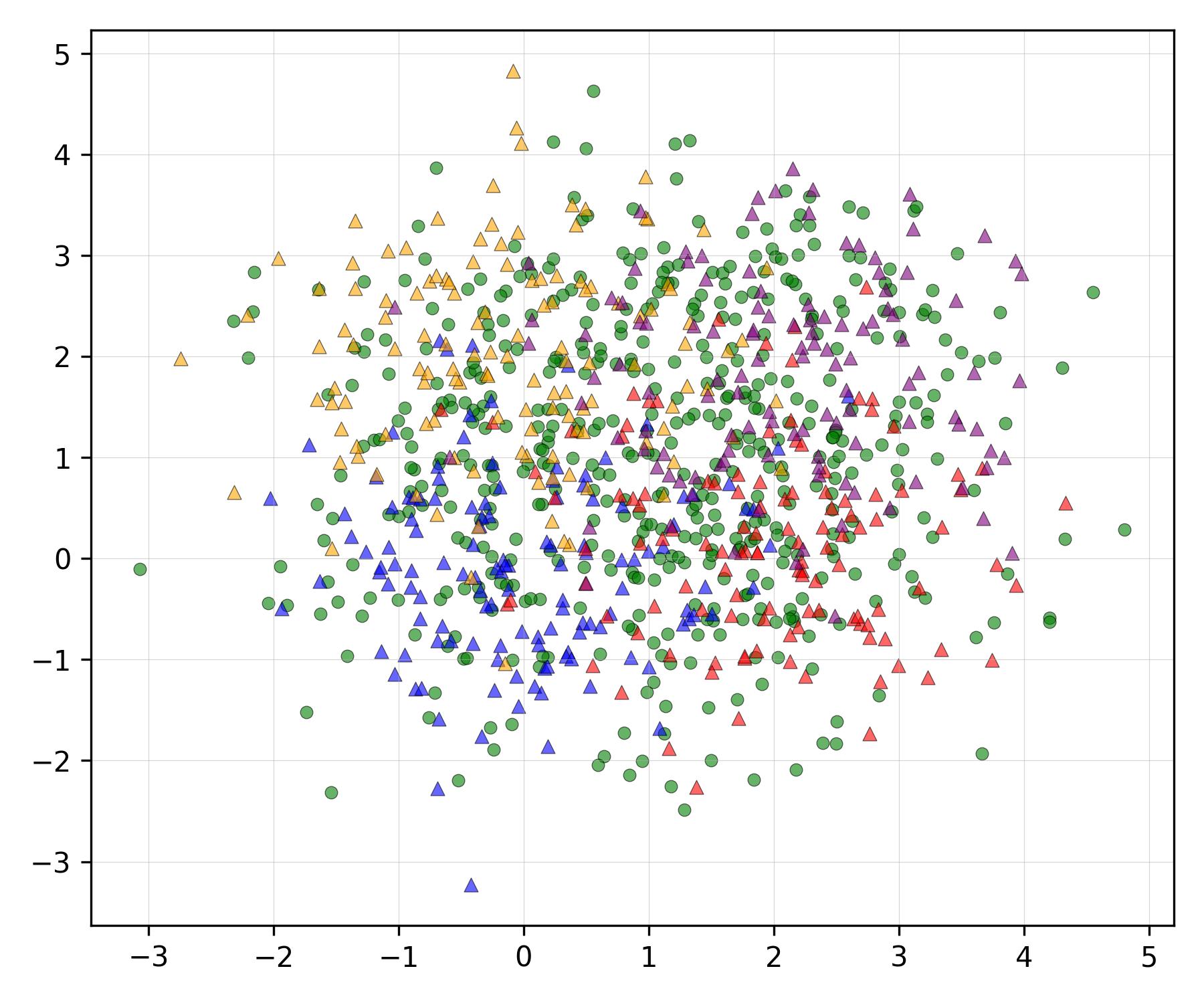} & 
\includegraphics[width=0.95\linewidth,height=0.26\textheight,keepaspectratio]{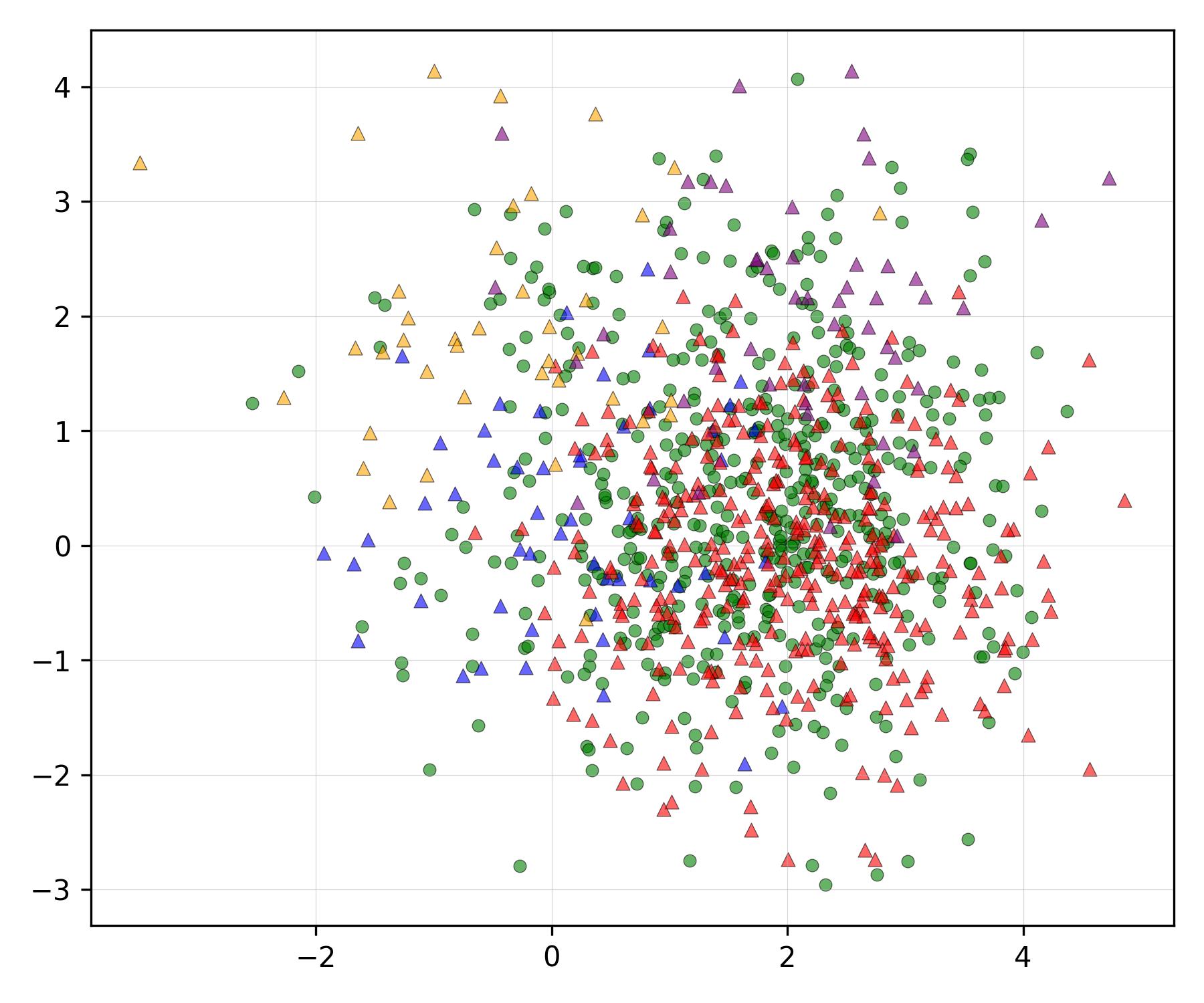}
  \end{tabular}
  \caption{A sample of $1{,}000$ iid examples from a $K=4$ bivariate normal mixture with  mixing weights $\pi=(0.25,0.25,0.25,0.25)$ in the left panel; $\pi=(0.1,0.7,0.1,0.1)$ in the right panel. The $n=500$ calibration observations are shown as triangles: blue with mean $(0,0)$, red with mean $(2,0)$, purple with mean $(2,2)$ and yellow with mean $(0,2)$. The $m=500$ test observations are shown as green circles, for which class membership is unobserved.}\label{fig-bvndatageneration}
\end{figure}

The estimated conditional class probabilities are obtained by training a logistic regression model on $10{,}000$ independent training samples for each SNR. For each SNR, this fixed classifier is used for all iterations and across all procedures.

We consider two forms of informative prediction. First,  non-trivial prediction sets, for which informative prediction sets are those of size at most 3, i.e., $\mathcal I = \{C: |C|\leq  3\}$. Second, excluding one class, specifically  excluding the second class which has prior probability $\pi_2$. Figure \ref{fig-bvndatageneration-results-K4} shows the results. 

Key observations regarding the FCR  are as follows.  The naive approach, classic conformal, fails to control the FCR at the nominal level across a broad range  of weak signals.    \infoOSP\ and \infoOSP-calOnly have FCR at most the nominal level $\alpha = 0.05$, with value approximately $\alpha$ for a wide range of SNR values, suggesting it is non-conservative. In contrast, the FCR of the procedures in \cite{GazinHellerMarandonRoquain2025} tends to be lower, especially in the setting of excluding a null class. 

Key observations regarding the (resolution-adjusted) power are as follows. The procedures of \cite{GazinHellerMarandonRoquain2025} can be substantially  less powerful, especially when informativeness is defined by class exclusion.  Interestingly, at high SNRs classic conformal also exhibits lower power than  \infoOSP\ in the setting of excluding a null class.  This is due to the fact that for classic conformal (and all the competing methods), we use the  standard nonconformity scores  for classification considered  in \cite{GazinHellerMarandonRoquain2025}: $s(x,y) = \mP(Y\neq y\mid X=x), $ which are known to be sub-optimal. Other choices are possible and may improve the competitors' power, notably the  cumulative scores proposed in 
\cite{Romano2020}.

\begin{figure}[!ht]
\centering
\setlength{\tabcolsep}{4pt}
\renewcommand{\arraystretch}{1.0}
\begin{tabular}{@{}cccc@{}}
\multicolumn{4}{c}{\includegraphics[width=0.95\linewidth,height=0.14\textheight,keepaspectratio]{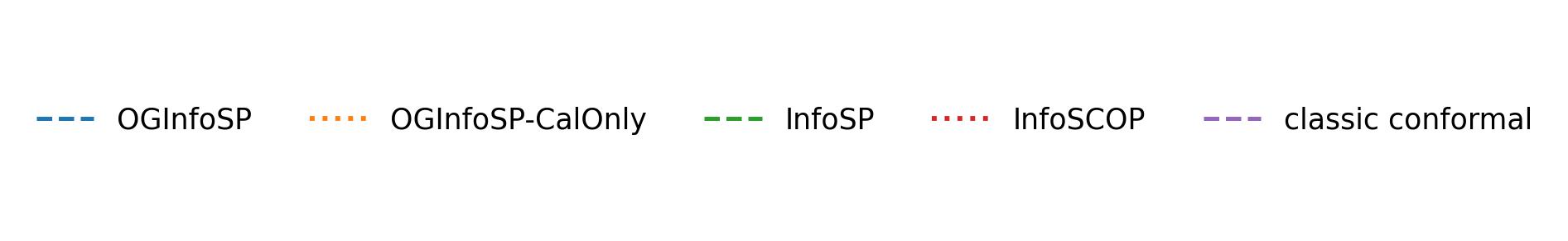}}\\[-1pt]

\includegraphics[width=0.24\linewidth, keepaspectratio]{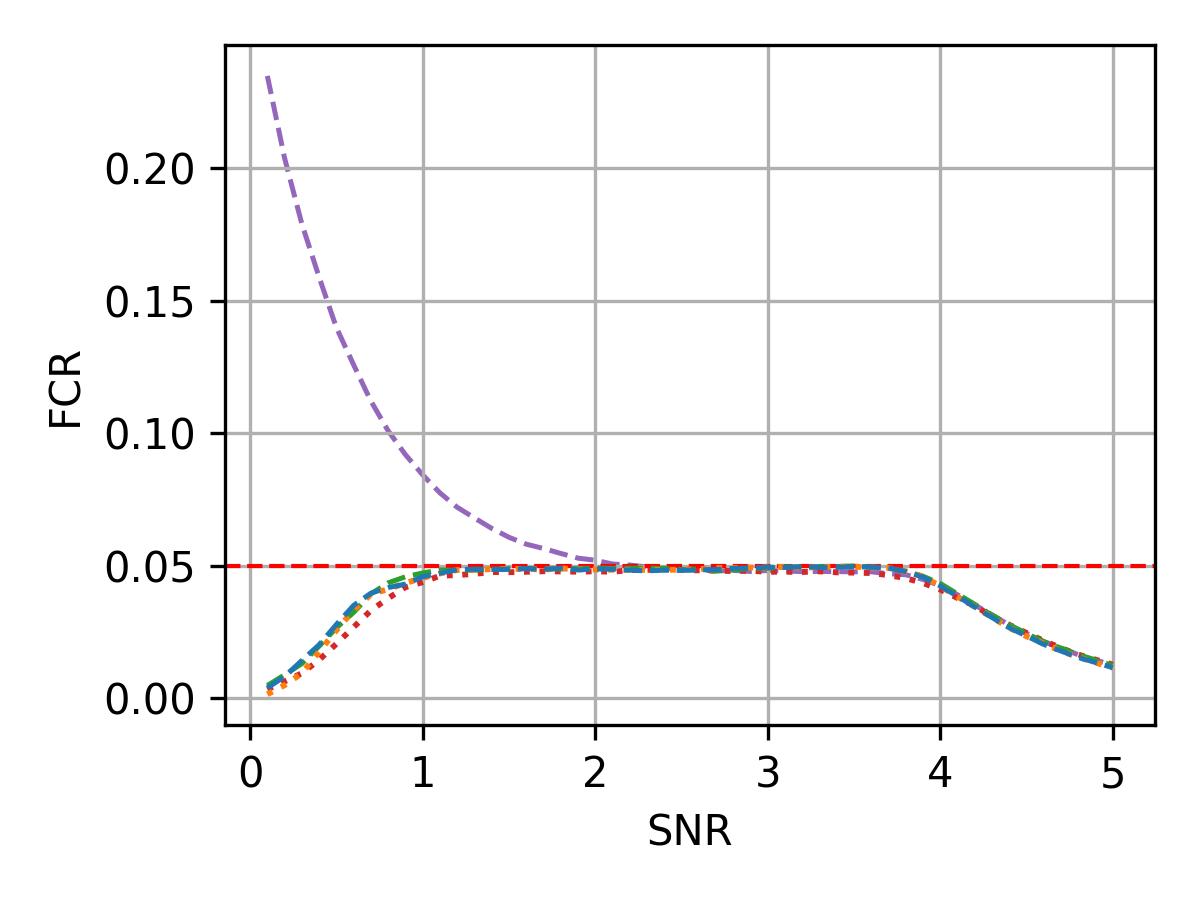} &

\includegraphics[width=0.24\linewidth,keepaspectratio]{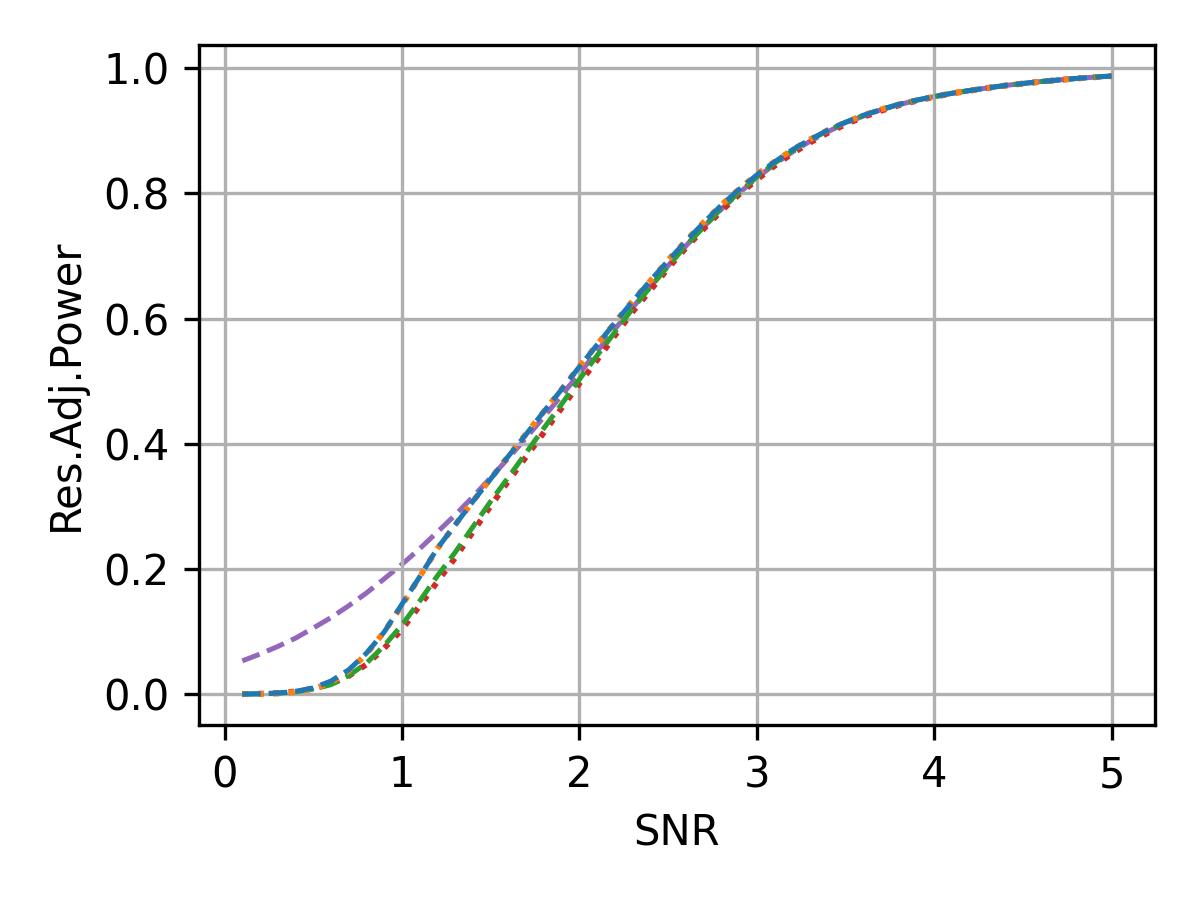} &

\includegraphics[width=0.24\linewidth,keepaspectratio]{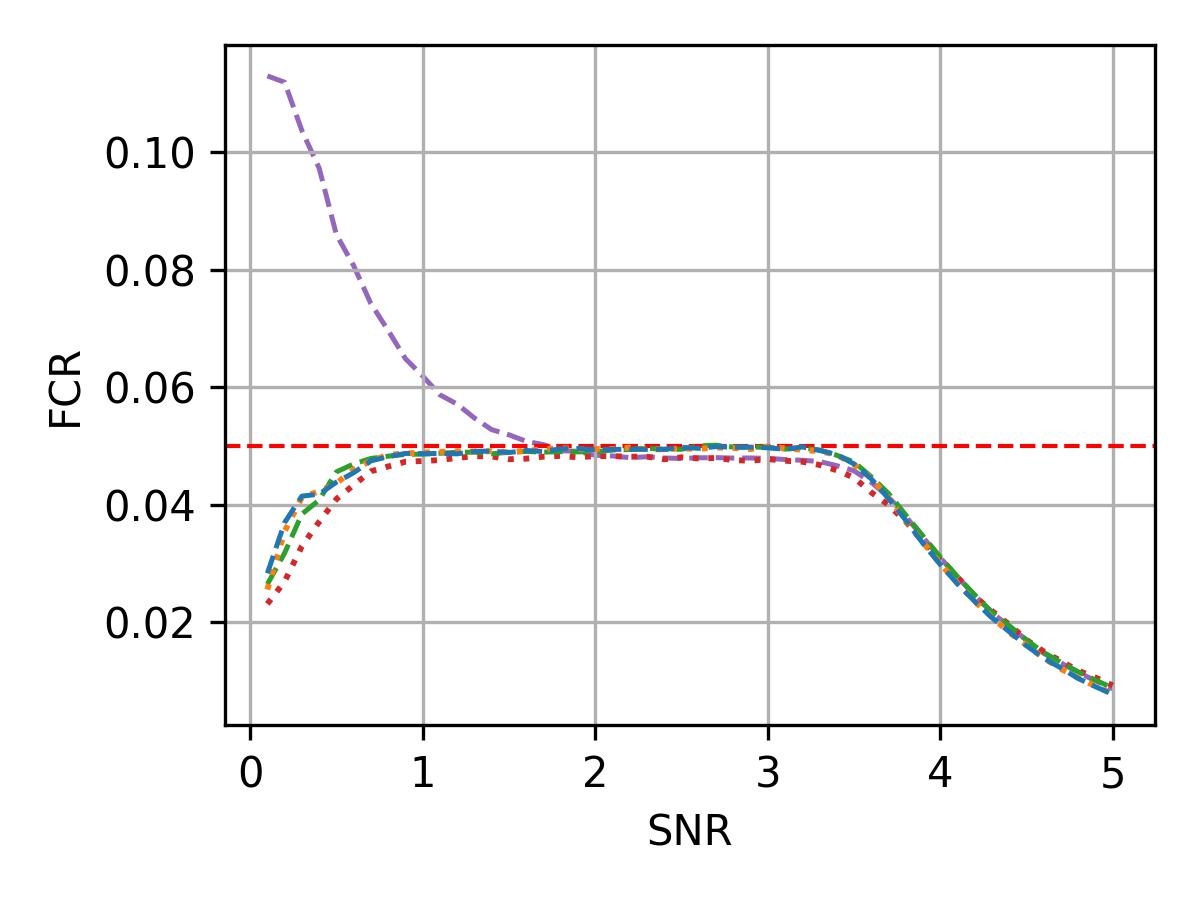} &

\includegraphics[width=0.24\linewidth,keepaspectratio]{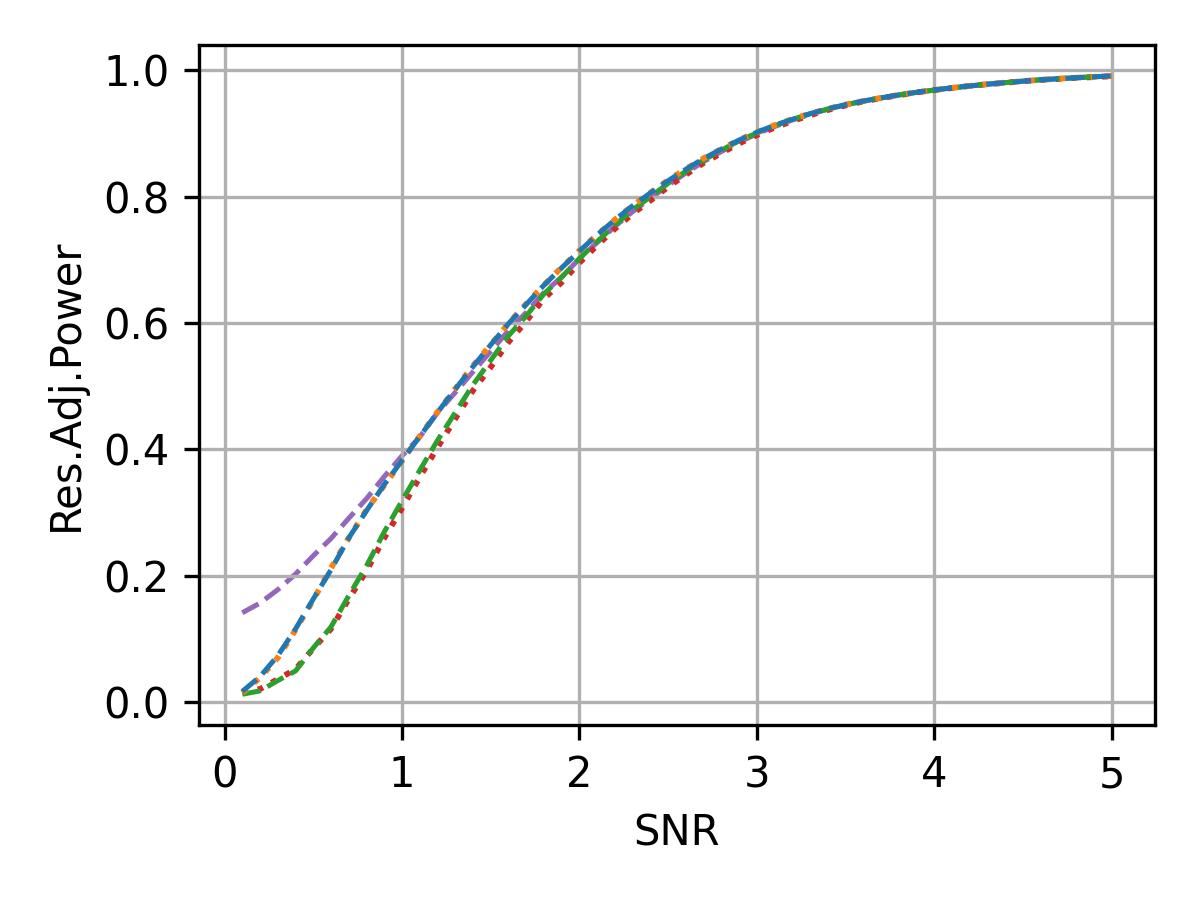} 

\\[-1pt]
\includegraphics[width=0.24\linewidth,keepaspectratio]{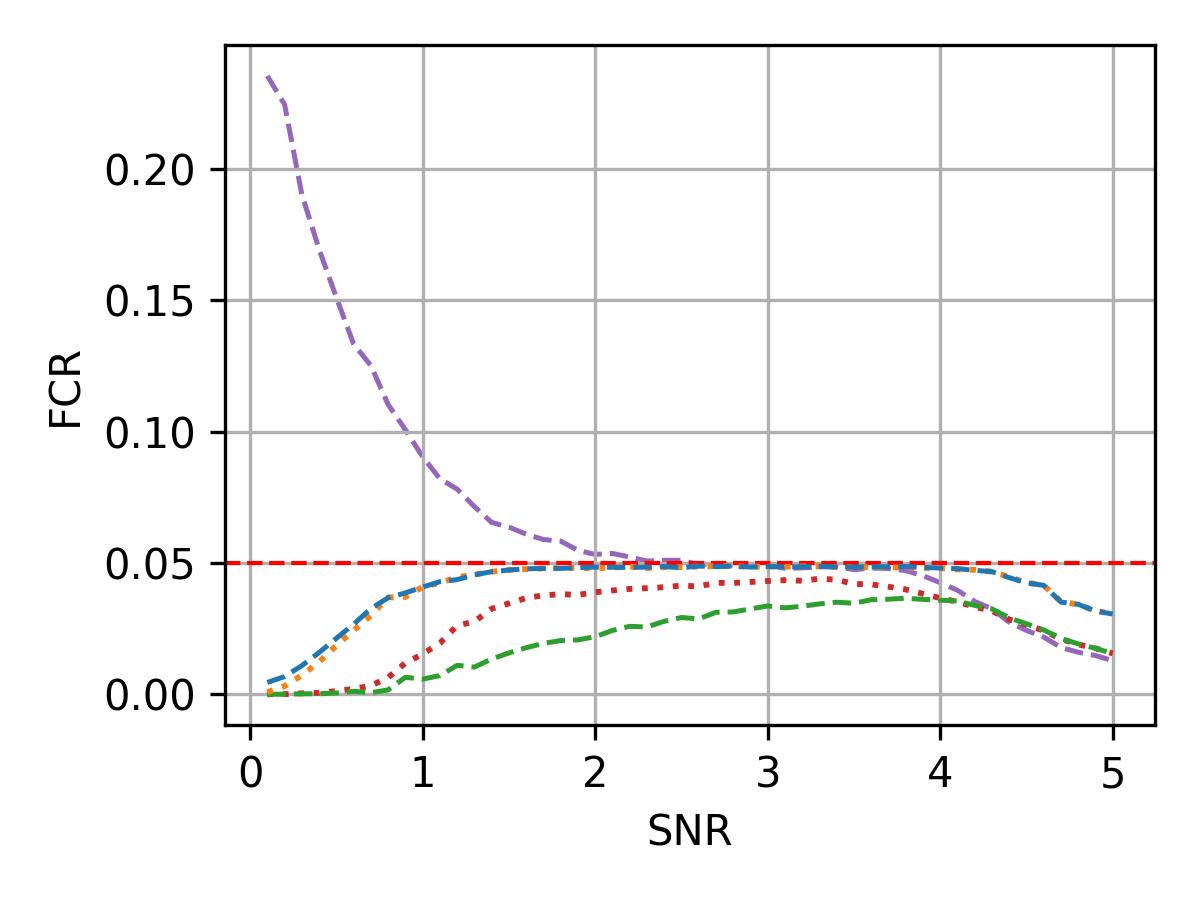}
&

\includegraphics[width=0.24\linewidth,keepaspectratio]{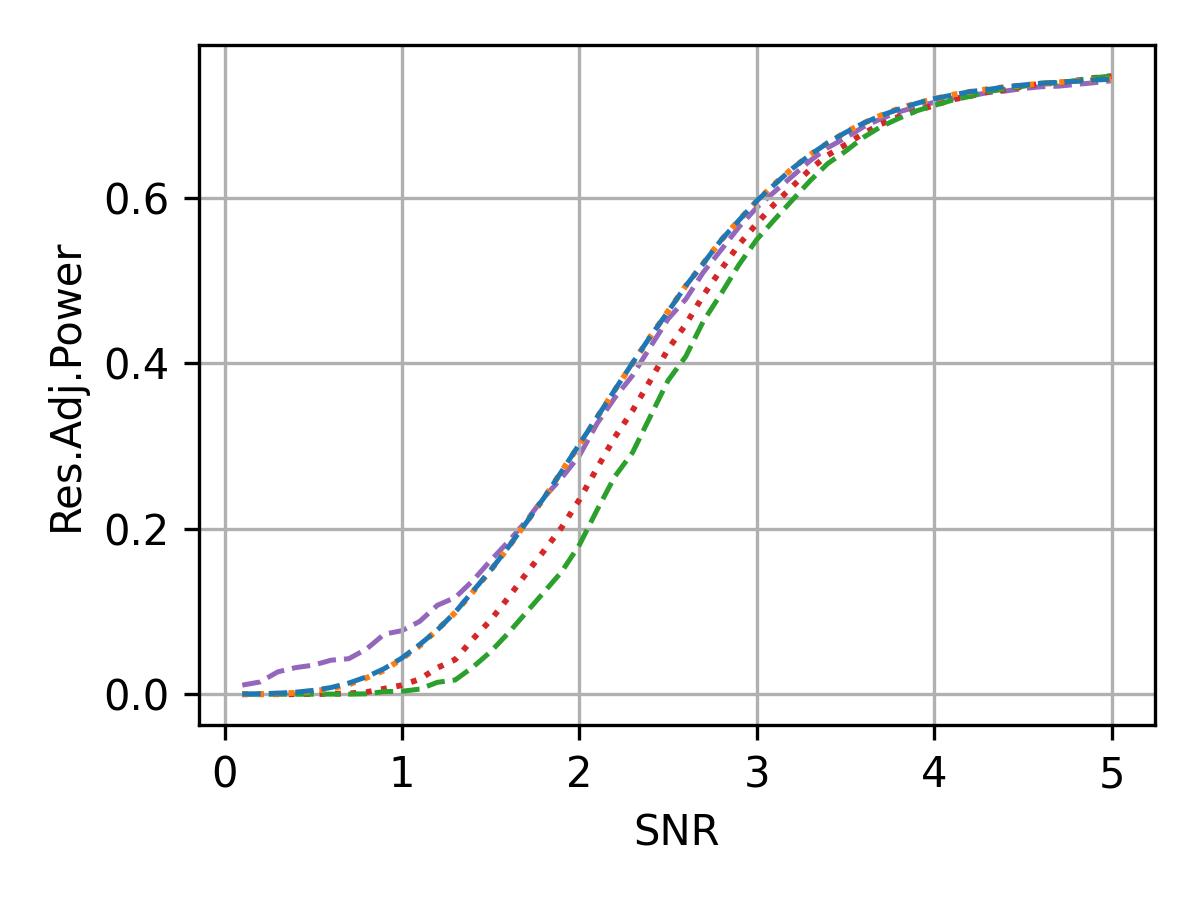}
&

\includegraphics[width=0.24\linewidth,keepaspectratio]{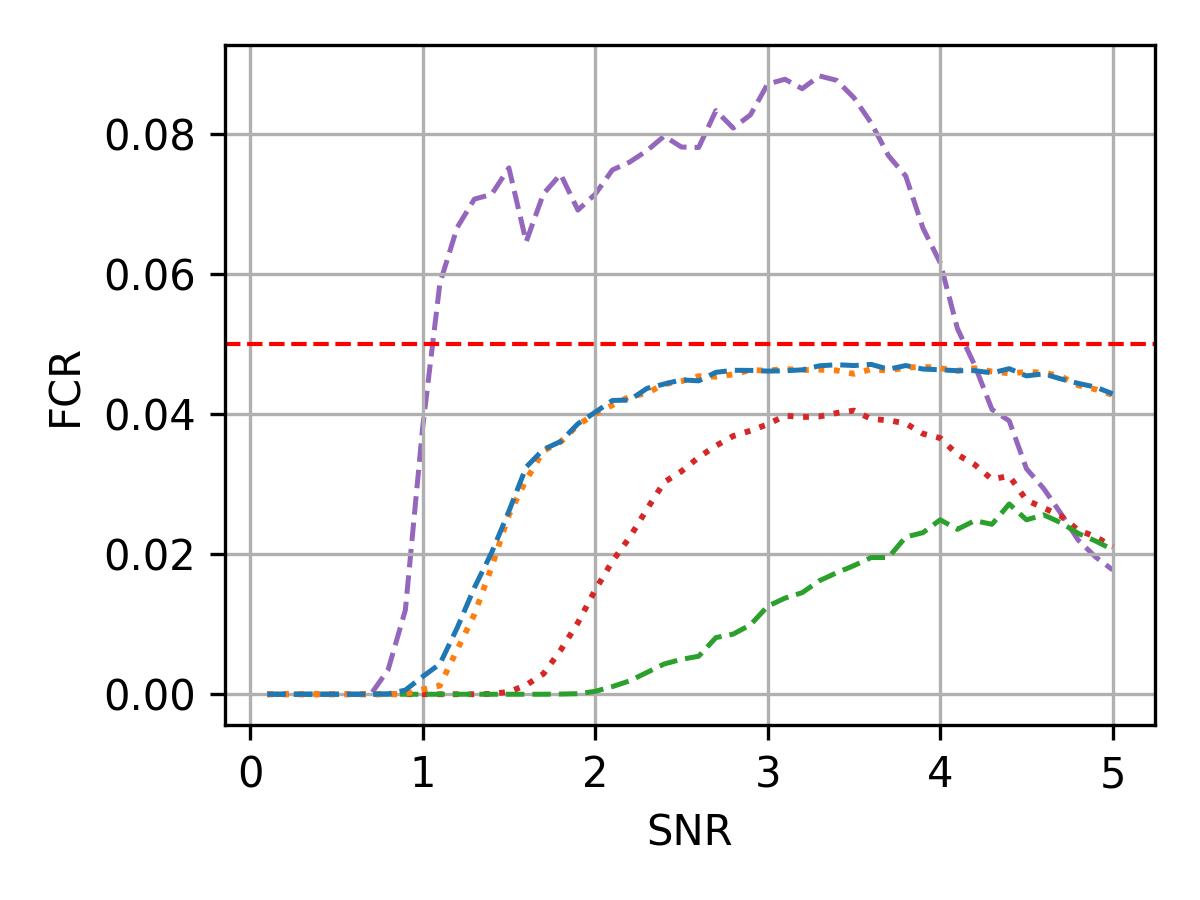}
&

\includegraphics[width=0.24\linewidth,keepaspectratio]{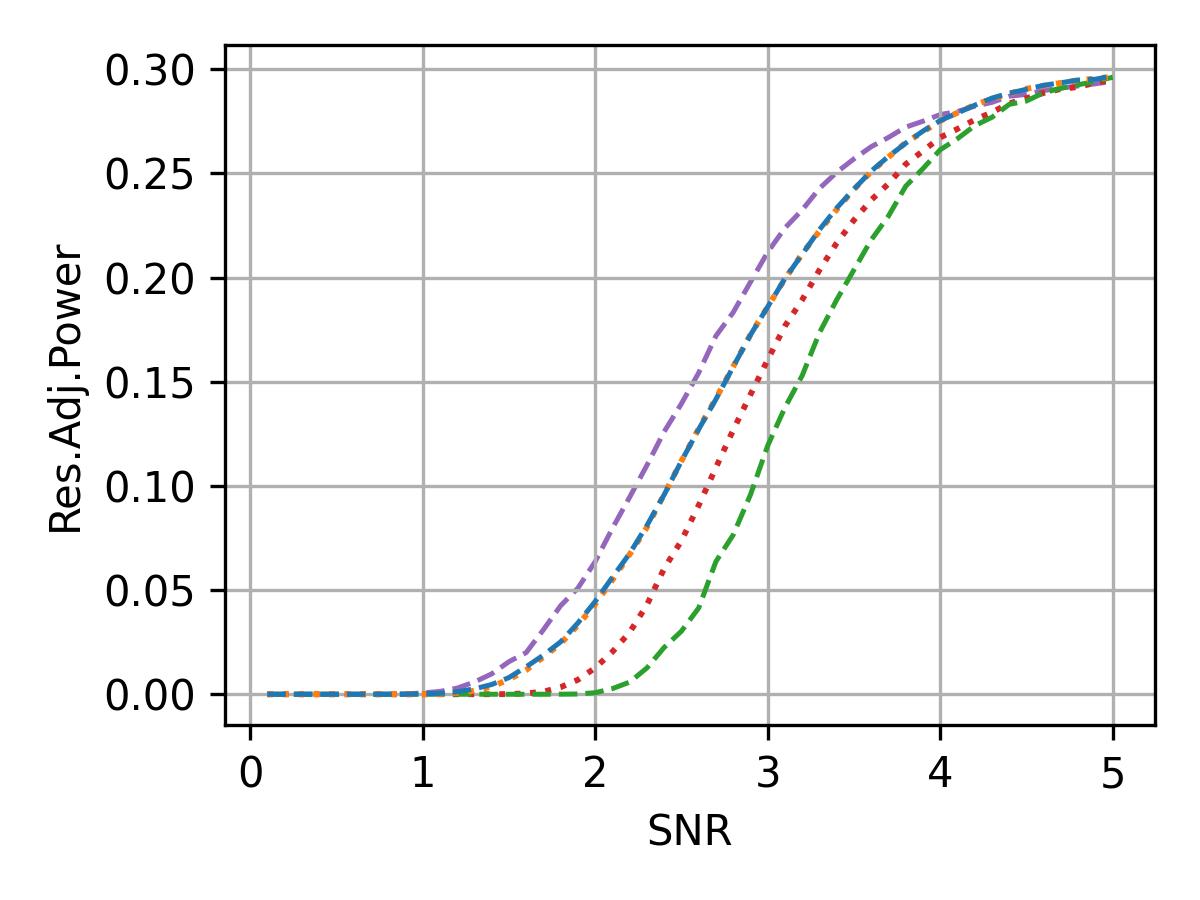}
\end{tabular}
\caption{ For the $K=4$ bivariate normal mixture data generation described in \S~\ref{sim-Gaussian}, with $\pi = (0.25,0.25,0.25,0.25)$ in columns 1 and 2 and $\pi =(0.1,0.7,0.1,0.1)$ in columns 3 and 4, n=m=500, 
we compare two selection goals: non-trivial prediction sets (row 1) and exclusion of one class (row 2). We report the FCR (columns 1 and 3) and resolution-adjusted power (columns 2 and 4) for \infoOSP\ and \infoOSP-calOnly,  alongside the naive baseline (classic conformal) and the two methods of \cite{GazinHellerMarandonRoquain2025}. Results are based on $10{,}000$ simulated datasets. The horizontal line in the FCR plots is the target FCR level. In row 2, the excluded class has a prior probability of 0.25 in the left panels and 0.7 in the right panels; consequently, the resolution-adjusted power is at most 0.75 and 0.3, respectively. 
}\label{fig-bvndatageneration-results-K4}
\end{figure}

Figure \ref{fig-bvndatageneration-results-K4-VS} shows the results in a setting with label shift. Clearly, when the training examples do not come from the same joint distribution as the calibration and test examples, it is important to apply a correction step for the estimated conditional class probabilities. Without this correction,   the power of \infoOSP\ and \infoOSP-calOnly can be low  (as they fail to approach the oracle policy even with very large calibration  and training samples). By utilizing 20\% of the calibration sample to estimate the label shift, there is a substantial improvement in power.

\begin{figure}[!ht]
\centering
\setlength{\tabcolsep}{4pt}
\renewcommand{\arraystretch}{1.0}
\begin{tabular}{@{}cc@{}}
\multicolumn{2}{c}{\includegraphics[width=0.95\linewidth,height=0.14\textheight,keepaspectratio]{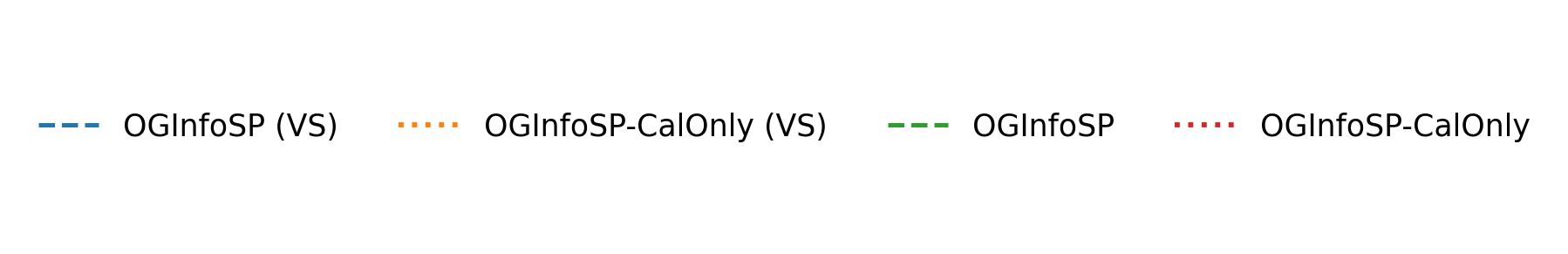}}\\[-1pt]
\textbf{Non-trivial} & \textbf{Excluding 1}\\[-1pt]

\includegraphics[width=0.49\linewidth,height=0.18\textheight,keepaspectratio]{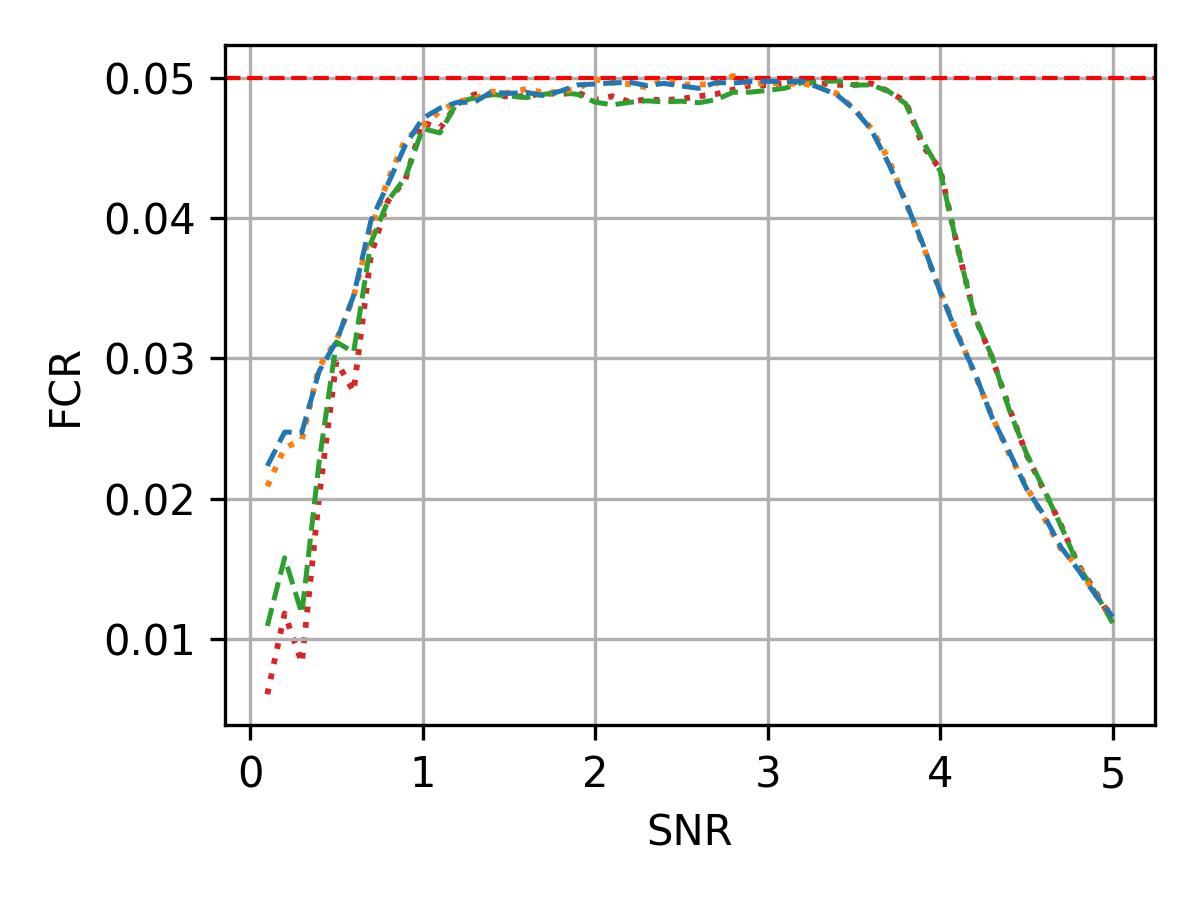} &
\includegraphics[width=0.49\linewidth,height=0.18\textheight,keepaspectratio]{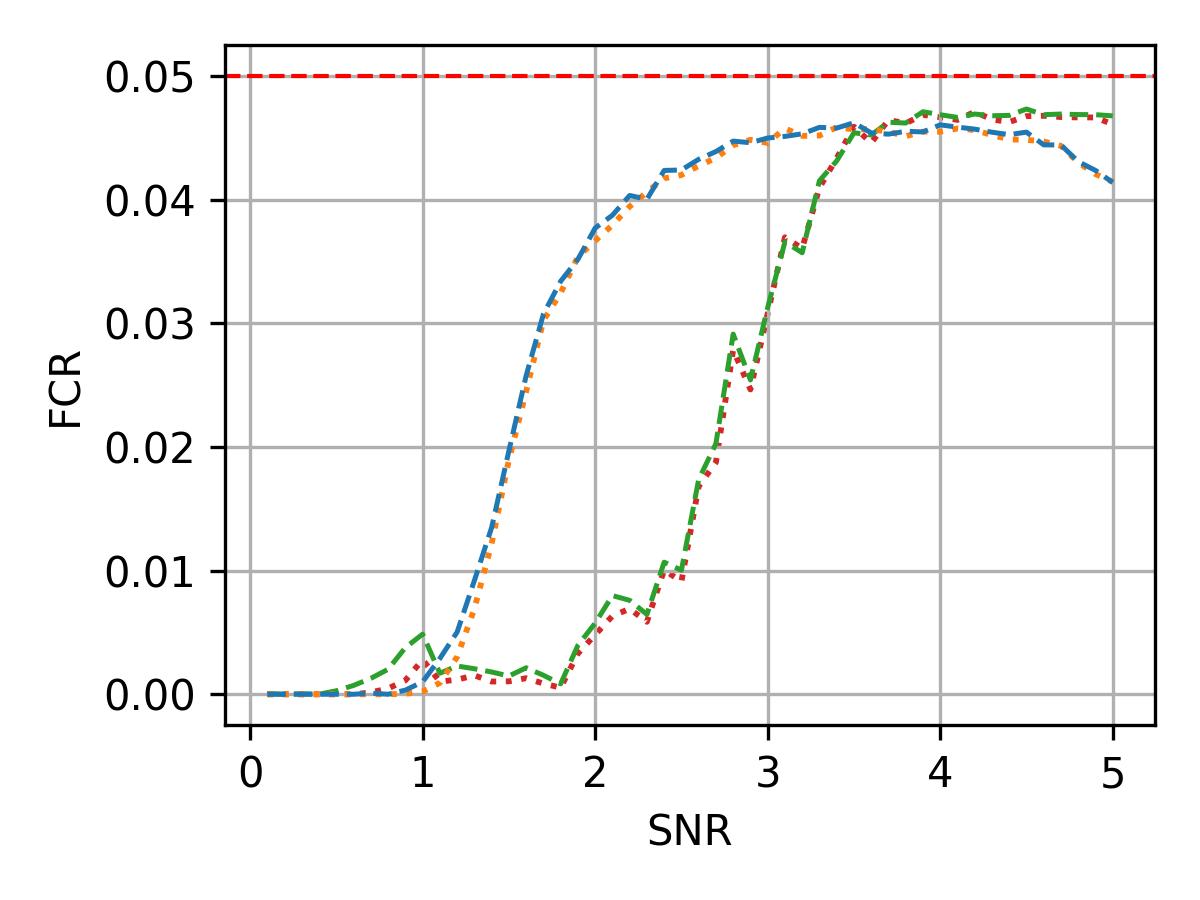}\\[-1pt]

\includegraphics[width=0.49\linewidth,height=0.18\textheight,keepaspectratio]{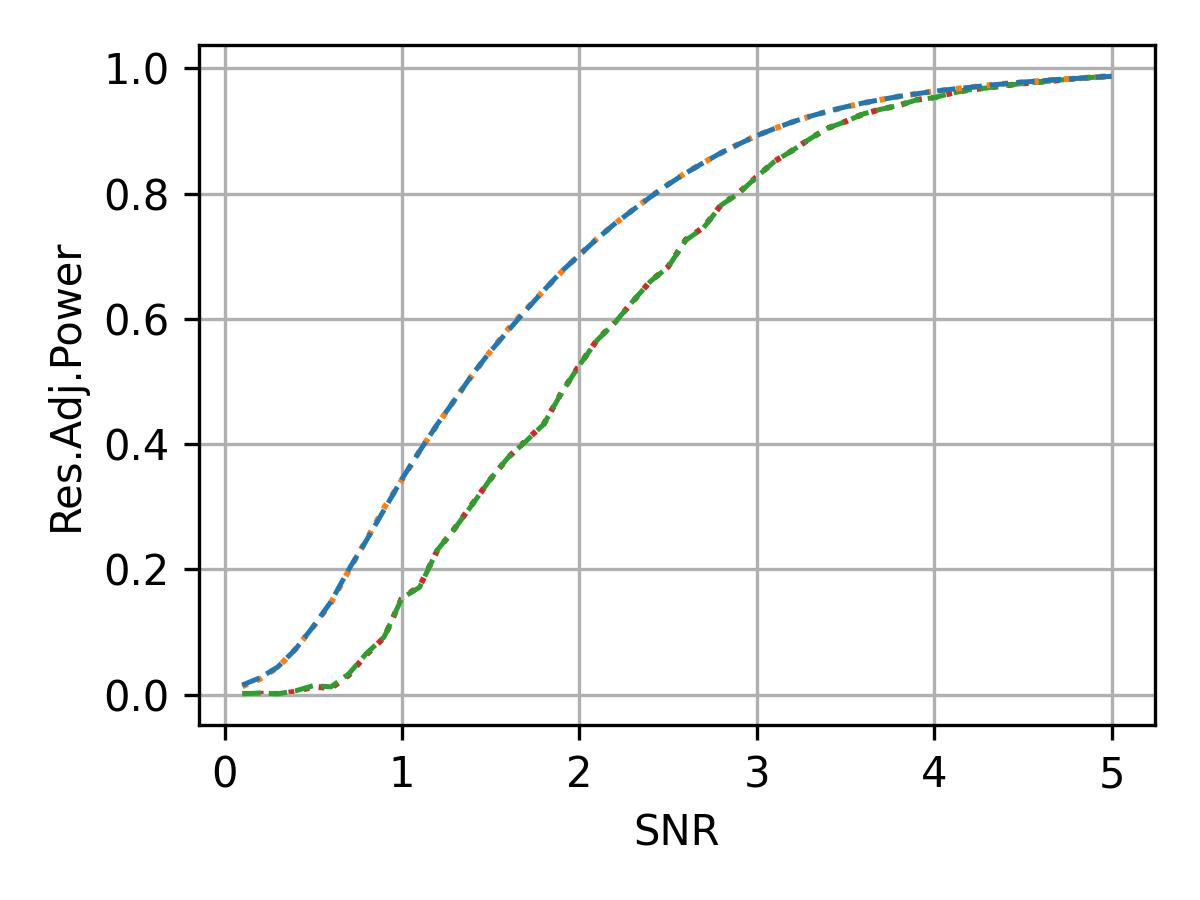} &
\includegraphics[width=0.49\linewidth,height=0.18\textheight,keepaspectratio]{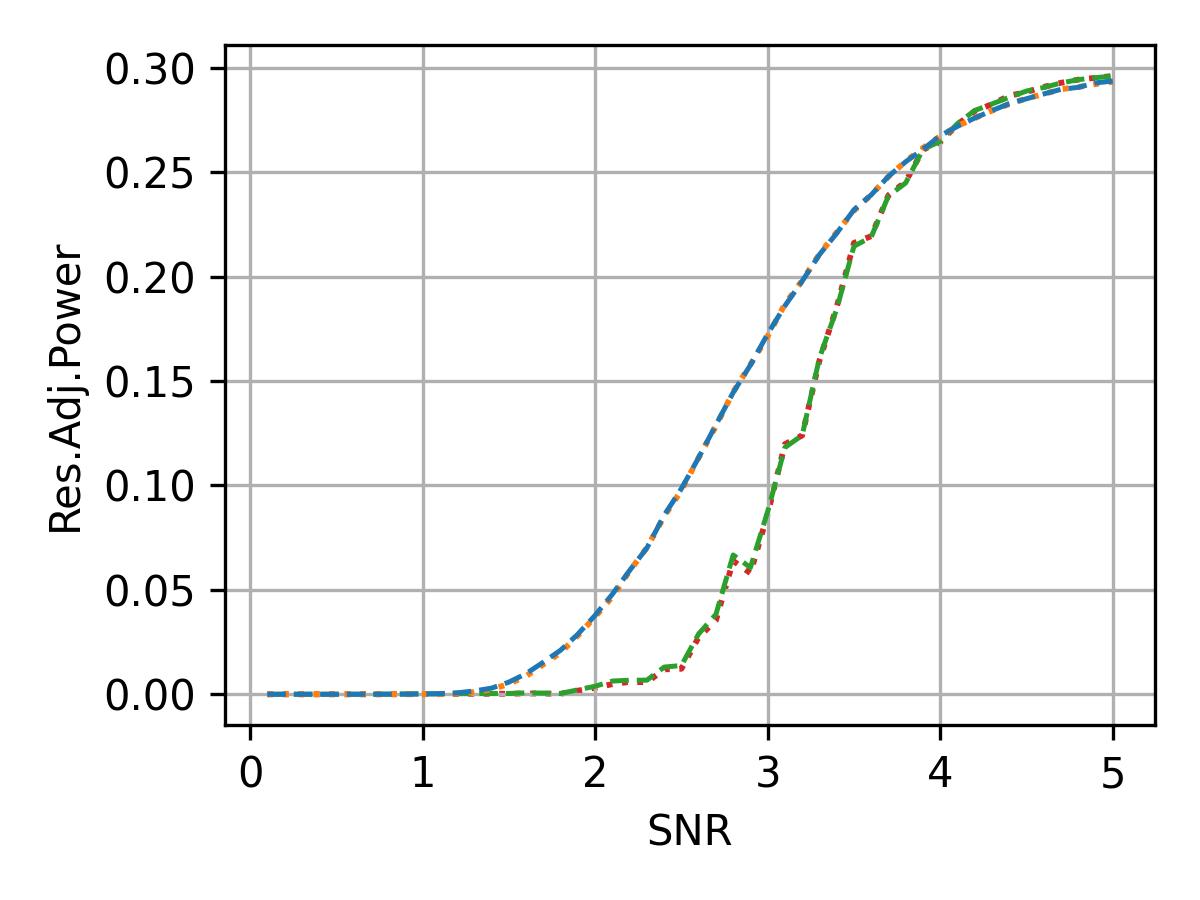}\\[-1pt]
\end{tabular}
\caption{For the $K=4$ bivariate normal mixture described in \S~\ref{sim-Gaussian}:   logistic regression was trained on $10{,}000$ samples with  class probabilities $\pi = (0.25,0.25,0.25,0.25);$ the calibration   and test examples, $n=m=500$ in total, were sampled from the  Gaussian mixture with shifted class probabilities $\pi = (0.1,0.7,0.1,0.1). $ We report the FCR (row 1) and resolution-adjusted power (row 2) for $\infoOSP$, $\infoOSP$-calOnly, and their label-shift corrected versions,  \infoOSP(VS), \infoOSP-calOnly(VS). For the (VS) methods, 100 calibration samples were utilized for estimated the shift, leaving 400 calibration points to be used following the vector-scaling correction detailed in   \S~\ref{app-vector-scaling}. 
}\label{fig-bvndatageneration-results-K4-VS}
\end{figure}

\subsection{Classification experiments on CIFAR10}\label{sim-CIFAR}

As in \cite{GazinHellerMarandonRoquain2025},  we illustrate the performance of our methods on the  CIFAR-10 image datasets \citep{krizhevsky2009learning}. We restrict our analysis to three classes:  birds, cats, and dogs.

We train a convolutional neural network (CNN) on these three classes (bird, cat, dog) from the CIFAR-10 dataset, using all 15,000 available training images (5,000 per class). The network consists of 6 convolutional layers (organized in 3 blocks of 2), batch normalization, max-pooling, global average pooling, dropout, and a fully connected output layer, trained using the Adam optimizer with a learning rate of 0.001 and cross-entropy loss with label smoothing of 0.05. The softmax outputs are used as class probability estimates. In each simulation iteration, calibration and test samples are drawn with replacement from the 3,000 available test images (1,000 per class) according to a specified class distribution.

Figure~\ref{fig:simulations_cifar} shows the higher power of \infoOSP\ and \infoOSP-calOnly in comparison to the methods in \cite{GazinHellerMarandonRoquain2025}. Although the power improvement is modest in comparison to \texttt{InfoSCOP}, our methods are arguably more attractive since they  do not have any tuning parameters (specifically, \texttt{InfoSCOP} starts by splitting the calibration sample and applying an initial selection step; the splitting fraction and initial selection method may vary; here, the splitting fraction was 1/2 and the initial selection was the level $\alpha$ BH procedure).   Figure~\ref{fig:simulations_cifar_VS} shows the substantial power improvement achieved in the label shift setting. 

\begin{figure}[!ht]
	\centering
	\begin{subfigure}[b]{0.85\textwidth}
		\centering
		\includegraphics[width=0.4\textwidth]{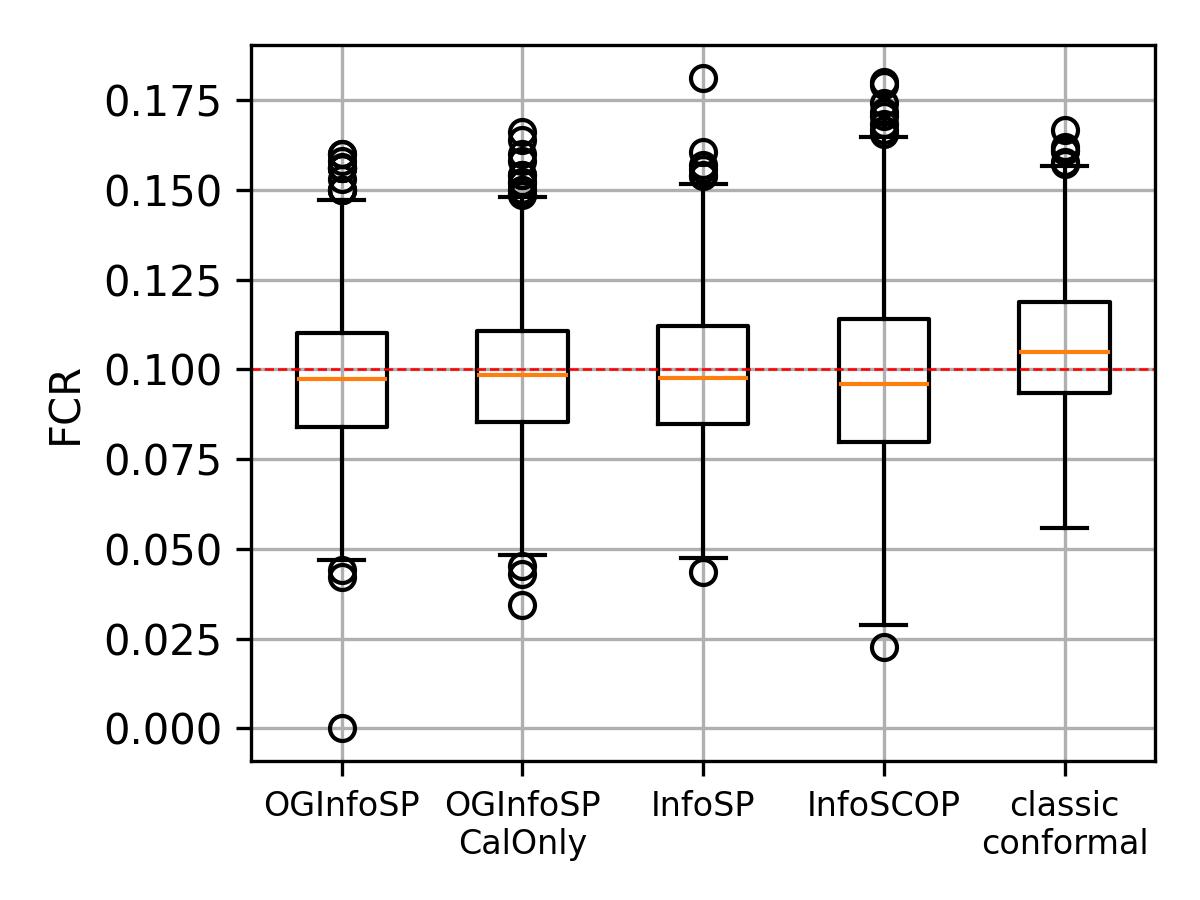}
		\includegraphics[width=0.4\textwidth]{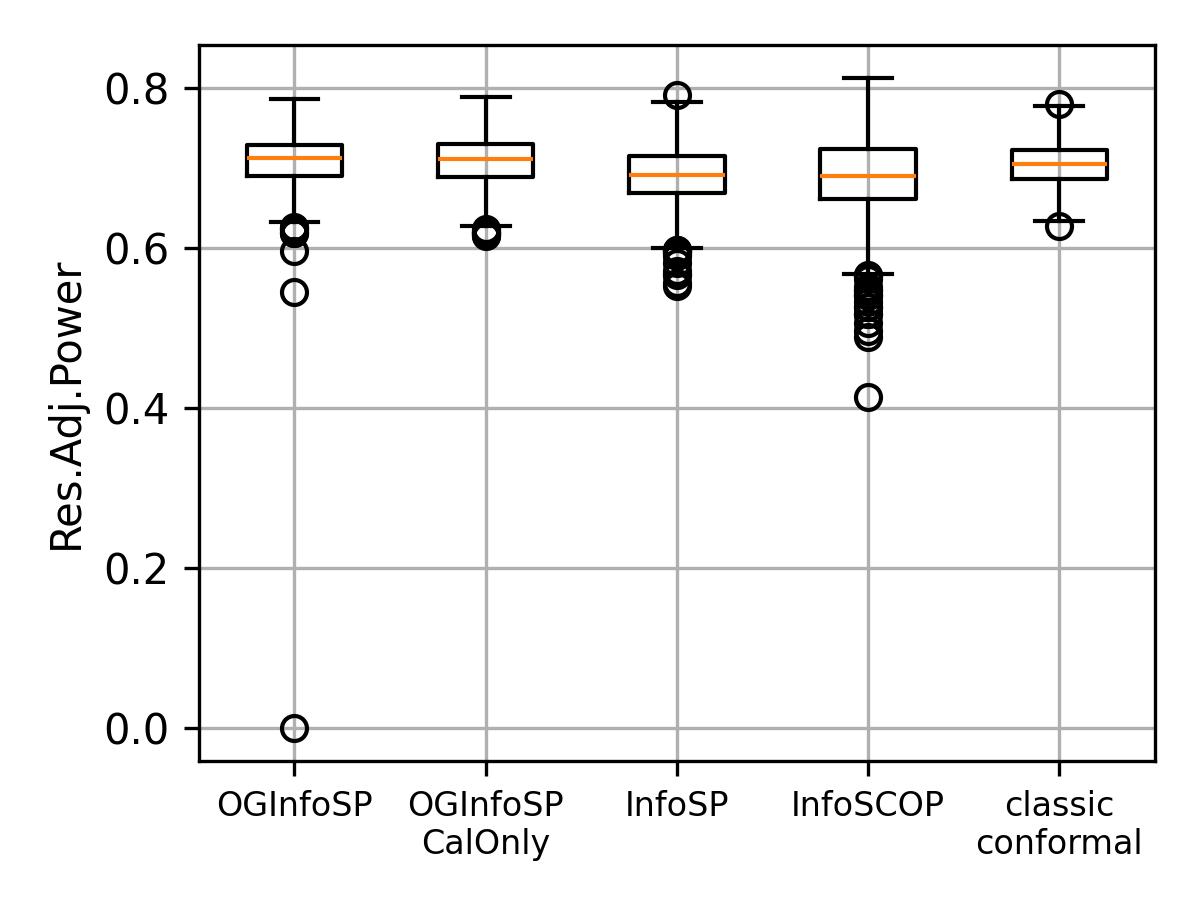}
		\caption{Non-trivial}
	\end{subfigure}
	\begin{subfigure}[b]{0.85\textwidth}
		\centering
		\includegraphics[width=0.4\textwidth]{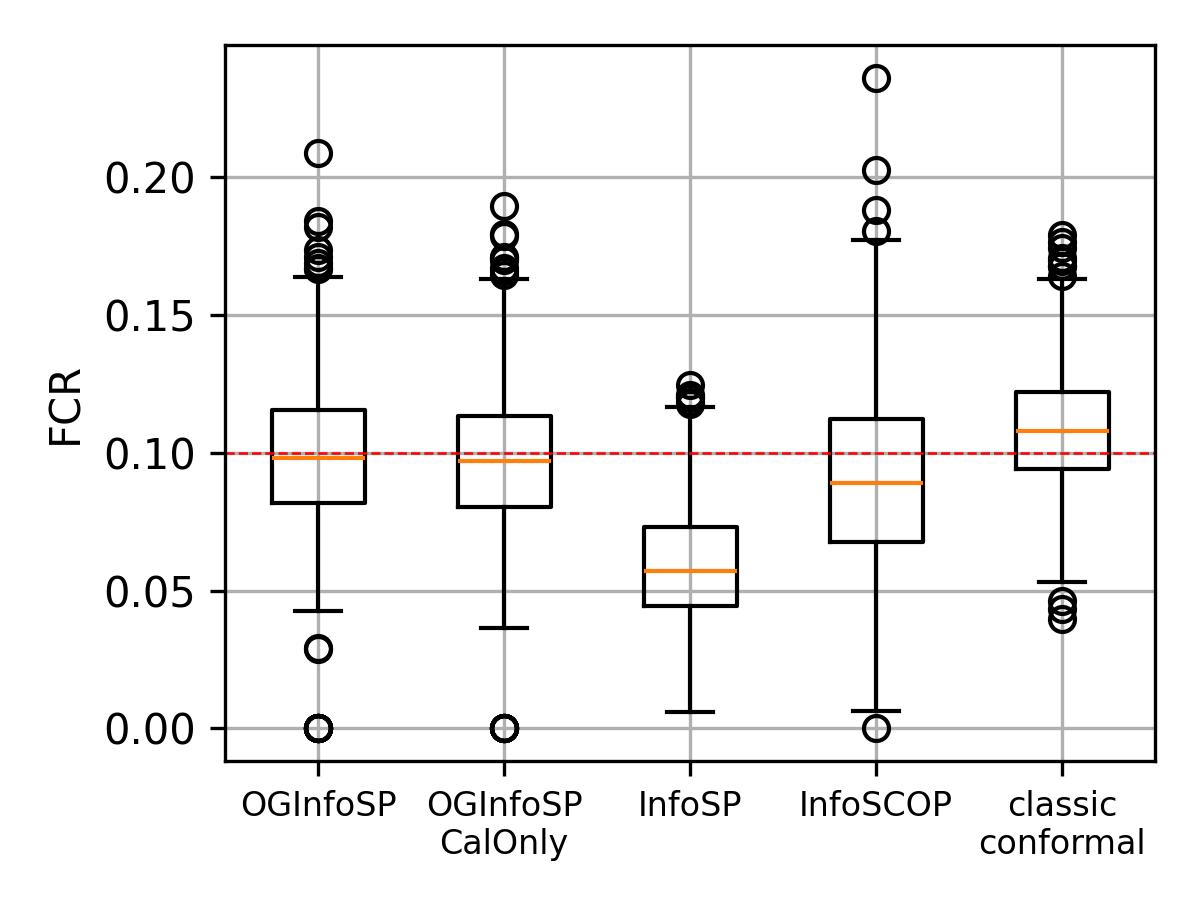}
		\includegraphics[width=0.4\textwidth]{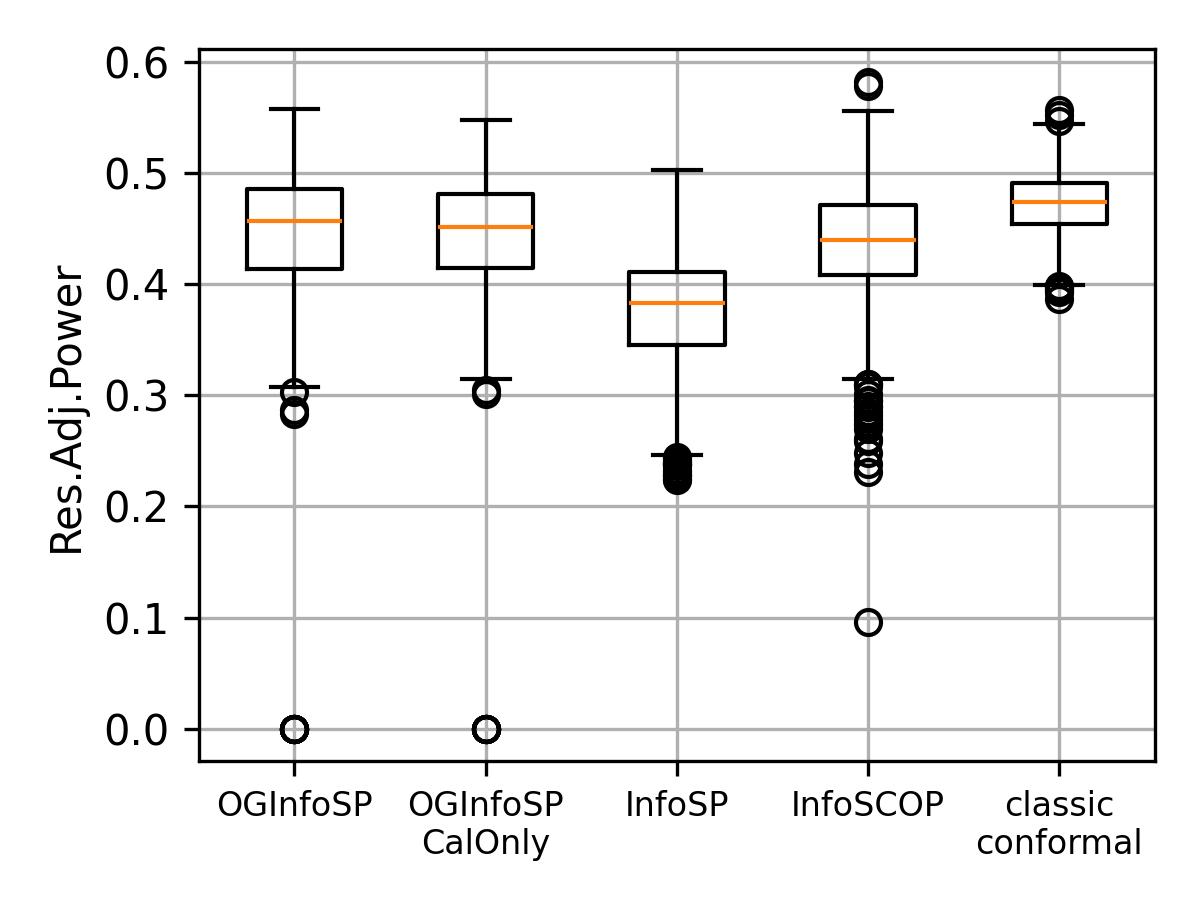}
		\caption{Exclude one}
	\end{subfigure}
	
	\caption{For the CIFAR-10 data described in \S~\ref{sim-CIFAR}, the FCP (column 1) and resolution-adjusted TCP (column 2) in the setting of non-trivial classification (row 1) and exclusion of the cat class (column 2). The sample sizes are $m = n = 500$, $\alpha = 0.05$. Based on  $1{,}000$ iterations}
	\label{fig:simulations_cifar}
\end{figure}

\begin{figure}[!ht]
	\centering
	\begin{subfigure}[b]{0.85\textwidth}
		\centering
		\includegraphics[width=0.4\textwidth]{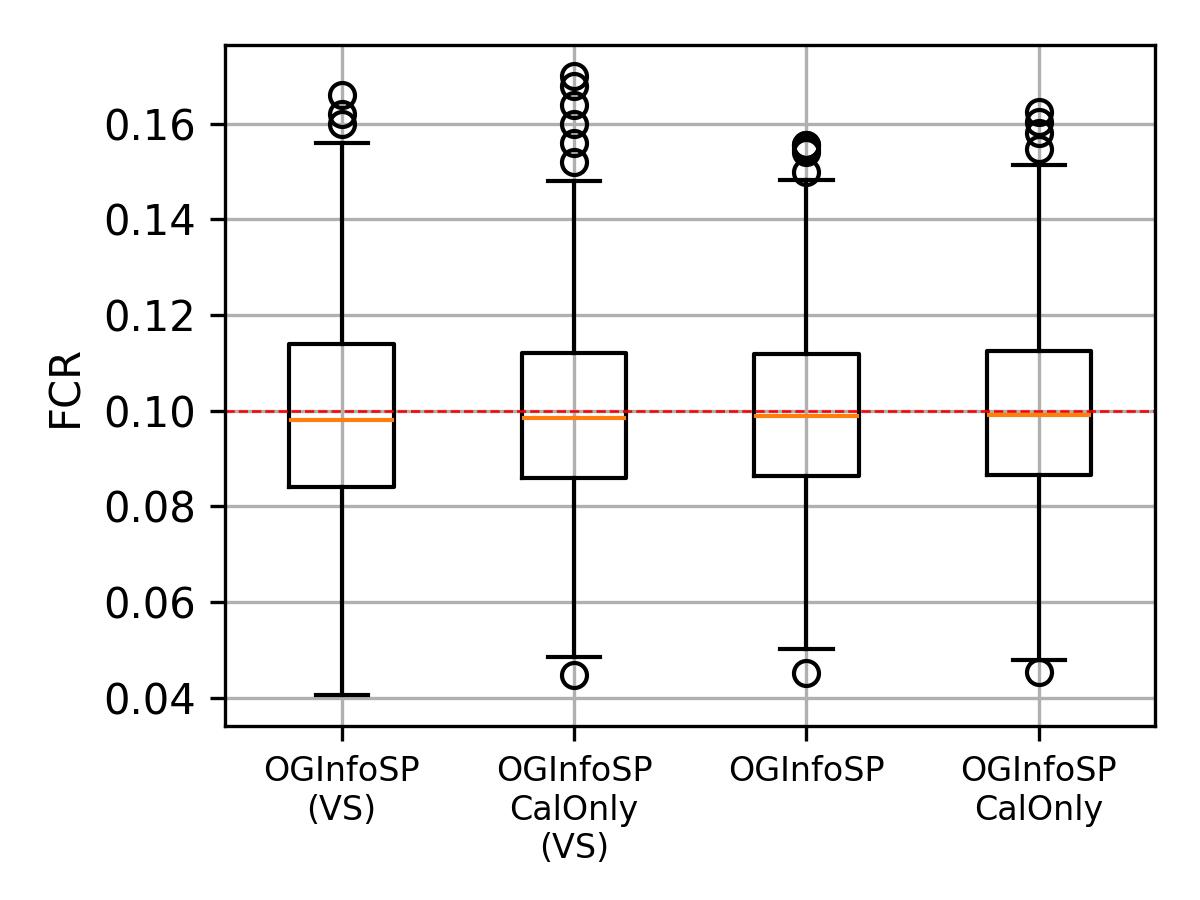}
		\includegraphics[width=0.4\textwidth]{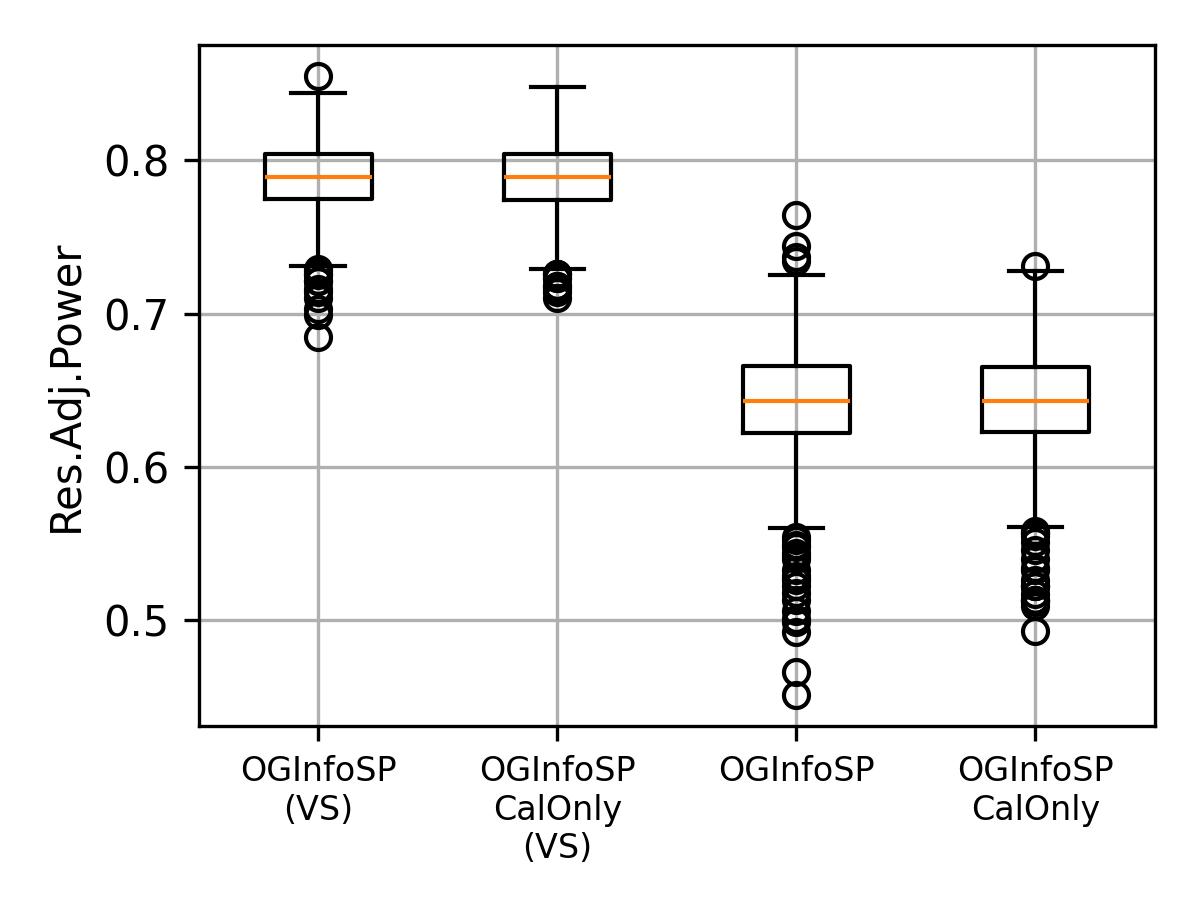}
		\caption{Non-trivial}
	\end{subfigure}
	\begin{subfigure}[b]{0.85\textwidth}
		\centering
		\includegraphics[width=0.4\textwidth]{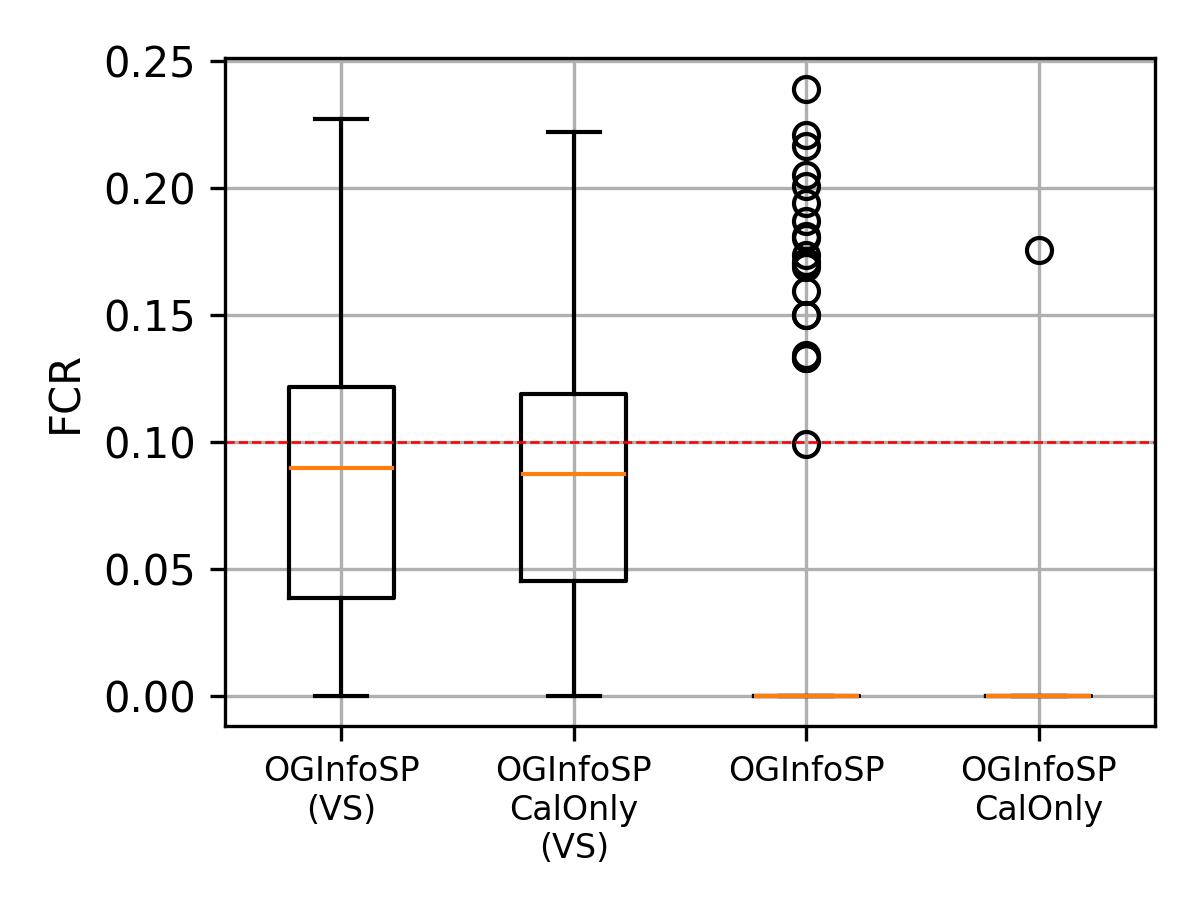}
		\includegraphics[width=0.4\textwidth]{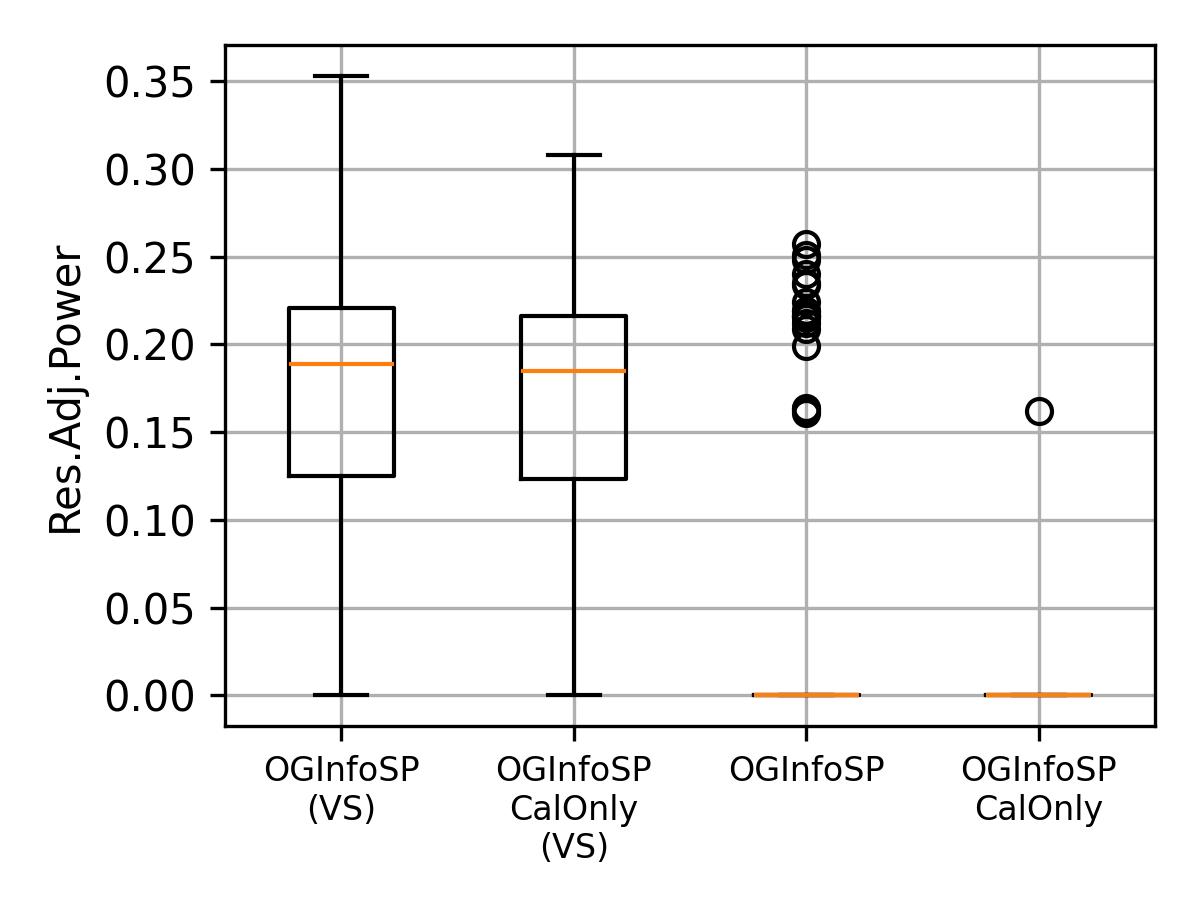}
		\caption{Exclude 1}
	\end{subfigure}
	
	\caption{For the CIFAR-10 data described in \S~\ref{sim-CIFAR}: the classifier was trained as detailed in \S~\ref{sim-CIFAR} using equal class probabilities, but the calibration and test samples had class probabilities (0.2, 0.6, 0.2).  We report  the FCP (column 1) and  resolution-adjusted TCP  (column 2) in the setting of non-trivial classification (row 1) and exclusion of the cat class (column 2) for $\infoOSP$, \infoOSP-calOnly, and their label-shift corrected versions,  \infoOSP(VS), \infoOSP-calOnly(VS). For the (VS) methods, 100 calibration samples were utilized for estimated the shift, leaving 400 calibration points to be used with the vector-scaling correction detailed in   \S~\ref{app-vector-scaling}.  The sample sizes are $m = n = 500$, $\alpha = 0.05$. Based on 1{,}000 iterations.}
	\label{fig:simulations_cifar_VS}
\end{figure}

\section{Discussion and Future work}\label{sec-discussion}

We introduced an oracle-guided framework for selecting informative conformal prediction sets while controlling the false coverage rate on the selected sample. By viewing the selection and prediction-set construction problem through the geometry of an upper envelope of affine functions, we found  a simple oracle policy that led  to a practical data-driven procedure, \infoOSP, with finite-sample FCR control. In simulations, we illustrated the favorable power properties of \infoOSP\ in classification settings where interest lies either in non-trivial prediction sets or in excluding a particular class.

The finite sample FCR guarantee of \infoOSP\ hinges on Lemma \ref{lemma-martingale}. An alternative formulation of this Lemma can be the following. 
Let $h_{xy}:\mathcal X\times\mathcal Y \rightarrow\in \mR$ and $h_x: X\rightarrow  \mR$ two (measurable) score functions such that $h_{xy}(X,Y)\leq h_x(X)$ for all $(X,Y)\in \mathcal X\times\mathcal Y$;  let $\{(X_i,Y_i)\}_{i=1}^{n+m}$ be iid random variables.  Then 
	$$err: = \mE\left[\frac{\sum_{i=1}^m \mI(h_{xy}(X_{n+i},Y_{n+i})> \mu_\alpha)}{1 \lor \sum_{i=1}^m \mI(h_x(X_{n+i})> \mu_\alpha)}\right] \leq \alpha$$
	where
	$$\mu_\alpha = \min\left\lbrace \mu\geq 0 : \frac{\frac{1}{n+1}\left(\sum_{i=1}^n \mI(h_{xy}(X_i,Y_i)> \mu) + 1\right)}{\frac{1}{m}\left(1 \lor \sum_{i=1}^m \mI(h_x(X_{n+i})> \mu)\right)} \leq \alpha \right\rbrace,$$
	or $\mu_\alpha = \infty$ if no such $\mu$ exists. This follows directly from Lemma \ref{lemma-martingale}, with $f_{\mu}(x,y) = \mI(h_{xy}(x,y)>\mu)$ and $g_{\mu}(x) = \mI(h_x(x)>\mu).$
Expressing Lemma \ref{lemma-martingale} in this equivalent form provides a connection to the application of the BH procedure on statistics that resemble conformal $p$-values. Unlike previous results \citep{bates2023testing, Jin2023selection}, which utilize exchangeability under a null hypothesis, this formulation relies on the  dominance assumption  $h_{xy}(X,Y)\leq h_x(X)$ for all $(X,Y)\in \mathcal X\times\mathcal Y$. Specifically, using the equivalence between thresholding and the BH procedure pointed out in \cite{mary2022semisupervised}, the examples that satisfy $h_x(X_{n+i})> \mu_\alpha$, $i\in [m]$,  are those rejected by the BH procedure applied with  $p_i= (1+n)^{-1} \big(1+\sum_{j=1}^n\mI(h_{xy}(X_j,Y_j)\geq h_x(X_{n+i})\big)$, $i\in [m]$. As far as we know, this result is novel, and its usefulness  in other settings is an open question.

Our finite-sample guarantee is established under a restriction on the type of informative predictions that can be considered, formalized in Assumption \ref{ass-martingale}.  Broadly, the restriction is satisfied  when the notion of informativeness treats all outcomes symmetrically,  but it may be restrictive in more complex applications, see \S~\ref{sec:auxiliaryresults} for a counter-example. It would be valuable to assess the potential robustness of the procedure to violations of Assumption \ref{ass-martingale}, as well as to provide  broader classes of informativeness under which the FCR guarantee is still maintained.  

Our numerical investigations demonstrated that \infoOSP\ and \infoOSP-calOnly perform very similarly empirically. As noted in Remark \ref{rem-calibrationOnly}, \infoOSP-calOnly is specifically designed for online applications where test examples arrive sequentially and require immediate decisions: the threshold $\mu_{\alpha}$ is determined using only the calibration sample;  this threshold is then  applied as a fixed rule to every future test example upon arrival.  While empirically effective, the theoretical properties of this variant have yet to be established and remain a subject for future work. 

There are many additional interesting  directions for future work, including the following.  First, our numerical experiments focused on natural but relatively simple notions of informativeness. In many scientific applications, the informative sets may be constrained by additional structure. For example, in cell-type annotation, prediction sets may be required to respect a biological hierarchy or graph over cell types. It can be interesting  to study versions of \infoOSP\ in which \(\mathcal I\) is restricted to sets obeying graph-structured constraints, as in \cite{corbetta2025conformal}. 

Second, our framework currently assumes that the collection of candidate prediction sets \(\mathcal I\) is finite. This assumption is natural for classification, but less so for regression, where prediction sets are often intervals or unions of intervals indexed by continuously varying endpoints. Extending the framework to infinite collections of prediction sets is therefore an important next step. 

Third, the current procedure relies on estimated conditional probabilities \(\widehat{\mathbb P}(Y\in C\mid X=x)\). The quality of these estimates affects power, although the calibration step ensures FCR control. A natural direction is to study how estimation error influences the power of \infoOSP, and to develop diagnostic tools for identifying when the estimated probabilities are sufficiently reliable.

\section*{Acknowledgement} 
The authors thank Royi Jacobovic and Yosef Rinnot for fruitful discussions regarding Lemma \ref{lemma-martingale}. Israela Solomon and Ruth Heller acknowledge support from the Israel Science Foundation (ISF) grant 406/24.  
Etienne Roquain acknowledges grants ANR-21-CE23-0035 (ASCAI) and ANR-23-CE40-0018-01 (BACKUP) of the French National Research Agency ANR and the French National program SUN (project TENET).
Saharon Rosset acknowledges support from the Israel Science Foundation (ISF) grant 3250/24. 

\bibliographystyle{plainnat}
\bibliography{bibliography}

@InProceedings{weinstein20online,  title =  {Online Control of the False Coverage Rate and False Sign Rate},  author =       {Weinstein, Asaf and Ramdas, Aaditya},  booktitle =  {Proceedings of the 37th International Conference on Machine Learning},  pages =  {10193--10202},  year =  {2020},  editor =  {III, Hal Daumé and Singh, Aarti},  volume =  {119},  series =  {Proceedings of Machine Learning Research},  month =  {13--18 Jul},  publisher =    {PMLR},  url =  {https://proceedings.mlr.press/v119/weinstein20a.html}}

@article{weinstein20,
  title={{Selective sign-determining multiple confidence intervals with {FCR} control}},
  author={Weinstein, A. and Yekutieli, D.},
  journal={Statistica Sinica},
  volume={30},
  pages={531--555},
  year={2020}
}

@inproceedings{NEURIPS2024_6b99cfe8,
 author = {Wang, Xiaoning and Huo, Yuyang and Peng, Liuhua and Zou, Changliang},
 booktitle = {Advances in Neural Information Processing Systems},
 doi = {10.52202/079017-1867},
 editor = {A. Globerson and L. Mackey and D. Belgrave and A. Fan and U. Paquet and J. Tomczak and C. Zhang},
 pages = {58574--58609},
 publisher = {Curran Associates, Inc.},
 title = {Conformalized Multiple Testing after Data-dependent Selection},
 url = {https://proceedings.neurips.cc/paper_files/paper/2024/file/6b99cfe8f27a30d013f49a970aacd6e8-Paper-Conference.pdf},
 volume = {37},
 year = {2024}
}

@book{SharirAgarwal1995,
  author    = {Sharir, Micha and Agarwal, Pankaj K.},
  title     = {Davenport-Schinzel Sequences and Their Geometric Applications},
  publisher = {Cambridge University Press},
  address   = {Cambridge},
  year      = {1995},
  isbn      = {9780521470254}
}

@techreport{krizhevsky2009learning,
  title={Learning multiple layers of features from tiny images},
  author={Krizhevsky, Alex and Hinton, Geoffrey and others},
  year={2009},
  publisher={Toronto, ON, Canada}
}

@article{byrd1995limited,
  title={A limited memory algorithm for bound constrained optimization},
  author={Byrd, Richard H and Lu, Peihuang and Nocedal, Jorge and Zhu, Ciyou},
  journal={SIAM Journal on Scientific Computing},
  volume={16},
  number={5},
  pages={1190--1208},
  year={1995}
}

@inproceedings{guo2017calibration,
  title={On calibration of modern neural networks},
  author={Guo, Chuan and Pleiss, Geoff and Sun, Yu and Weinberger, Kilian Q},
  booktitle={International Conference on Machine Learning},
  pages={1321--1330},
  year={2017},
  organization={PMLR}
}

@book{KaratzasShreve1991,
  author    = {Karatzas, Ioannis and Shreve, Steven E.},
  title     = {Brownian Motion and Stochastic Calculus},
  publisher = {Springer-Verlag},
  year      = {1991},
  volume    = {113},
  series    = {Graduate Texts in Mathematics},
  address   = {New York},
  edition   = {2nd},
  doi       = {10.1007/978-1-4612-0949-2}
}

@InProceedings{corbetta2025conformal,
author="Corbetta, Daniela
and Finos, Livio
and Risso, Davide",
editor="Pollice, Alessio
and Mariani, Paolo",
title="Conformal Inference for Cell Type Prediction with Graph-Structured Constraints",
booktitle="Methodological and Applied Statistics and Demography II",
year="2025",
publisher="Springer Nature Switzerland",
address="Cham",
pages="549--554"
}

@article{jin2025confidence,
  title={Confidence on the focal: conformal prediction with selection-conditional coverage},
  author={Jin, Ying and Ren, Zhimei},
  journal={Journal of the Royal Statistical Society Series B: Statistical Methodology},
  volume={87},
  number={4},
  pages={1239--1259},
  year={2025},
  publisher={Oxford University Press},
  doi={10.1093/jrsssb/qkaf016}
}

@article{bates2023testing,
  title={Testing for outliers with conformal p-values},
  author={Bates, Stephen and Cand{\`e}s, Emmanuel and Lei, Lihua and Romano, Yaniv and Sesia, Matteo},
  journal={Ann. Statist.},
  volume={51},
  number={1},
  pages={149--178},
  year={2023},
  publisher={Institute of Mathematical Statistics}
}

@article{mary2022semisupervised,
  author  = {David Mary and Etienne Roquain},
  title   = {Semi-supervised multiple testing},
  journal = {Electronic Journal of Statistics},
  volume  = {16},
  number  = {2},
  pages   = {4240--4282},
  year    = {2022},
  doi     = {10.1214/22-EJS2050}
}

@article{marandon2024adaptive,
  title={Adaptive novelty detection with false discovery rate guarantee},
  author={Marandon, Ariane and Lei, Lihua and Mary, David and Roquain, Etienne},
  journal={The Annals of Statistics},
  volume={52},
  number={1},
  pages={157--183},
  year={2024},
  publisher={Institute of Mathematical Statistics}
}

@inproceedings{Romano2020,
 author = {Romano, Yaniv and Sesia, Matteo and Candes, Emmanuel},
 booktitle = {Advances in Neural Information Processing Systems},
 editor = {H. Larochelle and M. Ranzato and R. Hadsell and M.F. Balcan and H. Lin},
 pages = {26369--26382},
 publisher = {Curran Associates, Inc.},
 title = {Classification with Valid and Adaptive Coverage},
 url = {https://proceedings.neurips.cc/paper_files/paper/2020/file/244edd7e85dc81602b7615cd705545f5-Paper.pdf},
 volume = {33},
 year = {2020}
}

@article{Jin2023selection,
  author  = {Ying Jin and Emmanuel J. Candes},
  title   = {Selection by Prediction with Conformal p-values},
  journal = {Journal of Machine Learning Research},
  year    = {2023},
  volume  = {24},
  number  = {244},
  pages   = {1--41},
  url     = {http://jmlr.org/papers/v24/22-1176.html}
}

@article{bao2024selective,
  title={Selective conformal inference with false coverage-statement rate control},
  author={Bao, Yajie and Huo, Yuyang and Ren, Haojie and Zou, Changliang},
  journal={Biometrika},
  pages={asae010},
  year={2024},
  publisher={Oxford University Press}
}

@article{BH1995,
	author = {Benjamini, Yoav and Hochberg, Yosef},
	coden = {JSTBAJ},
	fjournal = {Journal of the Royal Statistical Society. Series B. Methodological},
	issn = {0035-9246},
	journal = {J. Roy. Statist. Soc. Ser. B},
	mrclass = {62J15},
	mrnumber = {MR1325392 (96d:62143)},
	number = {1},
	pages = {289--300},
	title = {Controlling the false discovery rate: a practical and powerful approach to multiple testing},
	volume = {57},
	year = {1995}}

@article{benjamini2005false,
  title={False discovery rate--adjusted multiple confidence intervals for selected parameters},
  author={Benjamini, Yoav and Yekutieli, Daniel},
  journal={Journal of the American Statistical Association},
  volume={100},
  number={469},
  pages={71--81},
  year={2005},
  publisher={Taylor \& Francis}
}

@article{GazinHellerMarandonRoquain2025,
  title = {Selecting informative conformal prediction sets with false coverage rate control},
  author = {Gazin, Ulysse and Heller, Ruth and Marandon, Ariane and Roquain, Etienne},
  journal = {Journal of the Royal Statistical Society: Series B (Statistical Methodology)},
  volume = {87},
  number = {4},
  pages = {909--929},
  year = {2025},
  doi = {10.1093/jrsssb/qkae120},
  url = {https://doi.org/10.1093/jrsssb/qkae120}
}

@article{BinLemma,
    author = {Benjamini, Yoav and Krieger, Abba M. and Yekutieli, Daniel},
    title = {Adaptive linear step-up procedures that control the false discovery rate},
    journal = {Biometrika},
    volume = {93},
    number = {3},
    pages = {491-507},
    year = {2006},
    month = {09},
    issn = {0006-3444},
    doi = {10.1093/biomet/93.3.491},
    url = {https://doi.org/10.1093/biomet/93.3.491},
    eprint = {https://academic.oup.com/biomet/article-pdf/93/3/491/1080958/933491.pdf},
}

@article{FSRzhao2023,
      title={Controlling FSR in Selective Classification}, 
      author={Guanlan Zhao and Zhonggen Su},
      year={2023},
      eprint={2311.03811},
      archivePrefix={arXiv},
      primaryClass={math.ST},
      url={https://arxiv.org/abs/2311.03811}, 
}

\appendix

\section{Proofs of Section \ref{sec-optimal}}\label{app-proofsOracle}
\subsection{Proof of Proposition \ref{prop-FCR-oracle}}\label{proof-prop-FCR-oracle}

    \begin{eqnarray}
&& FCR(D,C) = \mE\left[
\frac{
\sum_{i\in[m]}\mI\{Y_{n+i}\notin C(X_{n+i})\}D(X_{n+i})
}{
1\lor \sum_{i\in[m]}D(X_{n+i})
} 
\right]  \nonumber\\ 
&& = \mE\left[\mE\left[
\frac{
\sum_{i\in[m]}\mI\{Y_{n+i}\notin C(X_{n+i})\}D(X_{n+i})
}{
1\lor \sum_{i\in[m]}D(X_{n+i})
} 
\mid (D(X_{n+j}))_{j\in [m]}\right]\right]  \nonumber\\ 
&&  = \mE\left[
\frac{
\sum_{i\in[m]}\mP\left(Y_{n+i}\notin C(X_{n+i})\mid (D(X_{n+j}))_{j\in [m]}\right)D(X_{n+i})
}{
1\lor \sum_{i\in[m]}D(X_{n+i})
} 
\right]  \nonumber\\
&&= \mE\left[
\frac{
\sum_{i\in[m]}\mP\left(Y_{n+i}\notin C(X_{n+i})\mid D(X_{n+i})=1\right)D(X_{n+i})
}{
1\lor \sum_{i\in[m]}D(X_{n+i})
} 
\right] \nonumber \\
&&= \mP(Y\notin C(X) \mid D(X)=1) \mE\left[
\frac{
\sum_{i\in[m]}D(X_{n+i})
}{
1\lor \sum_{i\in[m]}D(X_{n+i})
} 
\right] \nonumber \\
&& =\mP(Y\notin C(X) \mid D(X)=1)\mP\left(\sum_{i=1}^m D(X_{n+i})>0\right) = mFCR(D,C)\mP\left(\sum_{i=1}^m D(X_{n+i})>0\right). \nonumber
    \end{eqnarray}
The second  equality use the law of iterated expectations, by conditioning on the selection indicators $(D(X_{n+i}))_{i\in [m]}$. 
The fourth equality follows from the fact that the examples are independent.  The fifth equality follows from the fact that the examples are identically distributed.  The last equality follows since  $\mP(Y\notin C(X) \mid D(X)=1) = \frac{
\mE\left[\mI\{Y\notin C(X)\}D(X)\right]
}{
\mE\left[D(X)\right]},$ which is exactly the definition of the mFCR.

\subsection{Proof of Lemma \ref{lem-envelope-properties}}\label{app-proof-lem-envelope-properties}
\begin{enumerate}
\item The epigraph of $\mathcal{U}$ is the intersection of the half planes above $\ell_1, \dots, \ell_r$. The intersection of convex sets is convex, thus $\mathcal{U}$ is convex.

\item Since $\mathcal{U}$ is convex, its slope cannot decrease from left to right. Hence the sequence of slopes of $\ell_{(1)}, \dots, \ell_{(s)}$ is non-decreasing.

\item For any $i \in [s-1]$, $\ell_{(i)}$ and $\ell_{(i+1)}$ intersect at some $\mu > 0$, and $\ell_{(i+1)}$ has larger slope than $\ell_{(i)}$. If $\ell_{(i+1)}$ had larger intercept than $\ell_{(i)}$, then it would lie strictly above $\ell_{(i)}$ for all $\mu \ge 0$, contradicting that $\ell_{(i)}$ appears in the upper envelope for some $\mu \ge 0$. Hence the intercept of $\ell_{(i+1)}$ is smaller than that of $\ell_{(i)}$.

\item Let $\ell$ be any one of the lines. If its slope were larger than the slope of $\ell_{(s)}$, then for sufficiently large $\mu$ we would have $\ell(\mu) > \ell_{(s)}(\mu)$, contradicting the fact that the last segment of the upper envelope is supported by $\ell_{(s)}$. Now suppose that $\ell$ has the same slope as $\ell_{(s)}$. If $\ell$ had larger intercept, then $\ell$ would lie strictly above $\ell_{(s)}$, and $\ell_{(s)}$ could not appear on the upper envelope.
\end{enumerate}
\hfill$\square$

\subsection{Proof that the tie breaking rule in \eqref{eq-C-mu} maximizes power}\label{app-oracle-tie-breaking}
 At a breakpoint,  multiple candidate line segments of the upper envelope share the same value $\ell_{x,C}(\mu)$. We  select the candidate with the  largest weight $w(C)$, as this is the line with largest intercept $w(C)\mP(Y\in C\mid X)$. This is formalized in the following Lemma. 
 Since the intercept is the candidate's contribution to power (which is the expectation of $w(C)\mP(Y\in C\mid X)D(X)$), this tie-breaking rule leads to the largest contribution to power.

\begin{lem}\label{lem:tie-breaking}
Suppose that the lines $\ell_{x,C_1}(\mu)$ and $\ell_{x,C_2}(\mu)$ intersect at some $\mu>0$. If $\ell_{x,C_1}$ has a larger intercept than $\ell_{x,C_2}$, then $w(C_1) > w(C_2)$.
\end{lem}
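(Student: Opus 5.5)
The plan is a direct algebraic comparison of the two lines. Write $p_j = \mP(Y\in C_j\mid X=x)$ for $j=1,2$. Then $\ell_{x,C_j}(\mu) = w(C_j)p_j + \mu\,(p_j-(1-\alpha))$, with intercept $a_j := w(C_j)p_j$ and slope $b_j := p_j-(1-\alpha)$.

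First I would record what the two hypotheses give. The intercept hypothesis is $a_1>a_2$, that is, $w(C_1)p_1 > w(C_2)p_2$. The lines intersect at some $\mu'>0$, so $a_1+\mu' b_1 = a_2+\mu' b_2$, which rearranges to $\mu'(b_2-b_1) = a_1-a_2 >0$. Since $\mu'>0$, this forces $b_2>b_1$, hence $p_2>p_1$. In words, the line with the larger intercept must have the smaller slope, because otherwise it would lie strictly above the other line on all of $[0,\infty)$. This is exactly the reasoning in item 3 of Lemma~\ref{lem-envelope-properties}.

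Next I would combine the two inequalities: $w(C_1)p_1 > w(C_2)p_2$ together with $p_2>p_1\ge 0$. Since $w(C_2)>0$, we get
\[
w(C_1)p_1 > w(C_2)p_2 > w(C_2)p_1 .
\]
Here $p_1>0$ is automatic, since $w(C_1)p_1 > w(C_2)p_2 \ge 0$. Dividing by $p_1$ then yields $w(C_1)>w(C_2)$, which is the claim.

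The only step needing care, and the one I would treat as the crux, is deducing strict slope ordering from the existence of an intersection at a strictly positive $\mu$. It relies on $\mu'>0$: at $\mu'=0$ the intercepts would be equal, contradicting the hypothesis, and for distinct parallel lines no intersection exists. I would also note explicitly that positivity of $w$ (since $w$ maps into $(0,B)$) and $p_1>0$ are what allow the final division and the strict inequality. Everything else is routine rearrangement.
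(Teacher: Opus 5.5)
Your proof is correct and follows the paper's argument. Both use the intersection at a positive $\mu$ to show that the larger-intercept line has the strictly smaller slope, so $p_1<p_2$, and then combine this with $w(C_1)p_1>w(C_2)p_2$ and the positivity of $w$. The only difference is cosmetic: you conclude directly by dividing by $p_1>0$, whereas the paper assumes $w(C_1)\le w(C_2)$ and derives a contradiction.
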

\begin{proof}
	Since $\ell_{x,C_1}$ has a larger intercept than $\ell_{x,C_2}$ and the two lines intersect at some $\mu > 0$, $\ell_{x,C_1}$ must have a smaller slope; otherwise, it would be strictly above $\ell_{x,C_2}$ for all $\mu \ge 0$. Thus $\mP(Y \in C_1 \mid X = x) < \mP(Y \in C_2 \mid X = x)$. Suppose, for contradiction, that $w(C_1) \le w(C_2)$. Then $w(C_1) \mP(Y \in C_1 \mid X = x) < w(C_2) \mP(Y \in C_2 \mid X = x)$, which contradicts the assumption that $\ell_{x,C_1}$ has the larger intercept. Therefore $w(C_1) > w(C_2)$.
\end{proof}

\subsection{Proof of Theorem \ref{thm-optimal-nonrandom}}\label{proof-thm-optimal-nonrandom}
If $mFCR(D^0, C^0) \leq \alpha$, then $\mu^*=0$, and $(D^0,C^0)$ is indeed an optimal solution, since it is a feasible solution that maximizes the resolution-adjusted power
$\Pi(D,C)=\mathcal{L}(D,C,0)$.

Henceforth suppose $mFCR(D^0, C^0)>\alpha$, which implies $G(D^0, C^0)>0$.
From Proposition~\ref{prop-decreasingG}, $G(D^{\mu}, C^{\mu})$ is non-increasing  in $\mu$ and $G(D^{\mu}, C^{\mu})<0$ for sufficiently large $\mu$.
It thus follows that if $G(D^{\mu}, C^{\mu})$ is continuous in $\mu$, then there exists a finite $\mu^*>0$ such that
$G(D^{\mu^*},C^{\mu^*})=0$.

To show that $G(D^{\mu}, C^{\mu})$ is continuous, it is enough to show that for fixed $\mu\geq 0$ and any sequence $\mu_n\to\mu$,
$\mP(Y\in C^{\mu_n}(X)\mid X)\to \mP(Y\in C^{\mu}(X)\mid X)$ and
$D^{\mu_n}(X)\to D^{\mu}(X)$ almost surely.
Then, by the dominated convergence theorem,
\[
\lim_{n\to\infty} G(D^{\mu_n},C^{\mu_n}) = \mE\!\left[ \lim_{n\to\infty} (1-\mP(Y\in C^{\mu_n}(X)\mid X)-\alpha)D^{\mu_n}(X) \right] = G(D^{\mu},C^{\mu}),
\]
so $G(D^{\mu}, C^{\mu})$ is continuous.

For any $C\in\mathcal I$, by Assumption~\ref{ass-probratio} the random variable
$\mP(Y\in C\mid X)$ has a continuous, non-atomic distribution.
For fixed $\mu$, $C^{\mu}(X)$ can switch between $C_1$ and $C_2$ only if
\[
\left(w(C_1)+\mu\right)\mP(Y\in C_1\mid X) = \left(w(C_2)+\mu\right)\mP(Y\in C_2\mid X),
\]
where $w(C_1)+\mu$ and $w(C_2)+\mu$ are constants.
By Assumption~\ref{ass-probratio},
\[
\mP\bigg(\!\left(w(C_1)+\mu\right) \mP(Y\in C_1\mid X) = \left(w(C_2)+\mu\right) \mP(Y\in C_2\mid X)\bigg)=0.
\]
Therefore, with probability one there are no ties at $\mu$ among any two candidate lines. Let $$\mathcal N_{\mu} = \cup_{C_1,C_2\in \mathcal I} \{x\in \mathcal X: \ell_{x,C_1}(\mu) = \ell_{x,C_2}(\mu) \}.$$ For $x\notin \mathcal N_{\mu},$ the maximizer $C^{\mu}(x)$ is unique. 
Since $\mathcal I$ is finite, this uniqueness implies a strictly positive gap at $\mu$ between   $\ell_{x,C^{\mu}(x)}(\mu) $  and the lines of all other prediction sets $C\in \mathcal I\setminus\{ C^{\mu}(x)\}$.  Because each line $\ell_{x,C}(\cdot)$ is continuous, there exists an open neighborhood around $\mu$ for which the gap persists. Consequently, the maximizer remains constant for all $\mu'$ within this neighborhood, meaning $ C^{\mu'}(x)= C^{\mu}(x)$. Any sequence $\mu_n\rightarrow \mu$ will eventually be entirely within this neighborhood, implying that $C^{\mu_n}(x) = C^{\mu}(x)$ for sufficiently large $n$.  It follows that  
 $\mP(Y\in C^{\mu_n}(x)\mid X=x)\rightarrow \mP(Y\in C^{\mu}(x)\mid X=x)$ as $\mu_n\rightarrow \mu$ for any $x\notin \mathcal N_{\mu}$. Since $X\notin \mathcal N_{\mu}$ with probability one, it follows that $\mP(Y\in C^{\mu_n}(X)\mid X)\rightarrow \mP(Y\in C^{\mu}(X)\mid X)$ almost surely as $\mu_n\rightarrow \mu$.  

By Proposition~\ref{prop-monotoneD}, $D^{\mu}(X)$ is monotone decreasing in $\mu$, so for any $X$ there is at most one value of $\mu$ at which $D^{\mu}(X)$ switches from $1$ to $0$. By continuity of the upper envelope, $\mathcal U_X(\mu_n)\rightarrow \mathcal U_X(\mu)$. Therefore, the  convergence of  $D^{\mu_n}(X) = \mI(\mathcal U_X(\mu_n)\geq 0)$ can only fail  on the event $\{  \mathcal U_X(\mu)=0 \}$. 
For  $ \mathcal U_X(\mu)=0$, there must exist $C\in\mathcal I$ such that
\[
\left(w(C)+\mu\right)\mP(Y\in C\mid X)-\mu(1-\alpha)=0,
\]
or equivalently
\[
\mP(Y\in C\mid X)=\frac{\mu(1-\alpha)}{w(C)+\mu}.
\]
By Assumption~\ref{ass-probratio}, this event has probability zero for each fixed $C$,
\[
\mP\!\left(\mP(Y\in C\mid X)=\frac{\mu(1-\alpha)}{w(C)+\mu}\right)=0.
\]
Since $\mathcal I$ is finite,   $\mP(\mathcal U_X(\mu)=0)\leq \sum_{C\in \mathcal I}\mP\!\left(\mP(Y\in C\mid X)=\frac{\mu(1-\alpha)}{w(C)+\mu}\right) =0$.  Let $\mathcal Z_{\mu} = \{ x\in \mathcal X: \mathcal U_x(\mu) =0\}$. For $x\notin \mathcal Z_{\mu}$, the continuity of  $\mathcal U_x(\mu)$ implies that its sign is locally constant. Thus, there exists a neighborhood of $\mu$ within which $D^{\mu'}(x) = D^{\mu}(x)$ for every $\mu'$ in the neighborhood. Any sequence $\mu_n\rightarrow \mu$ will eventually be entirely within this neighborhood, implying that $D^{\mu_n}(x)=D^{\mu}(x)$ for sufficiently large $n$ for any  $x\notin \mathcal Z_{\mu}.$ Since $X \notin \mathcal Z_{\mu}$ with probability one, it follows that $D^{\mu_n}(X)\rightarrow D^{\mu}(X)$
almost surely as $\mu_n\rightarrow \mu$. 

We conclude that $G(D^{\mu}, C^{\mu})$ is continuous and that there exists a finite $
\mu^*>0$ with $G(D^{\mu^*}, C^{\mu^*})=0$, and this $\mu^*$ equals
\[
\mu^*=\min\{\mu\ge 0:\; G(D^{\mu}, C^{\mu})\le 0\}.
\]

From Proposition~\ref{prop-f-leq-L},
\[
f^* \le \mathcal L(D^{\mu^*},C^{\mu^*},\mu^*) = \Pi(D^{\mu^*},C^{\mu^*})-\mu^*G(D^{\mu^*},C^{\mu^*}) = \Pi(D^{\mu^*},C^{\mu^*}).
\]
By definition of $f^*$ as the maximal achievable value among feasible solutions,
it follows that
$\Pi(D^{\mu^*},C^{\mu^*})=f^*$.
\hfill$\square$

\section{Proofs of Section \ref{sec-proc}}
\subsection{Proof of Proposition \ref{prop-keystatistics}}\label{proof-prop-keystatistics}
By right continuity in proposition \ref{prop-right-continuous}, since by Proposition \ref{prop-monotoneD-procedure} $D^{\mu}(X_i)$ decreases with $\mu$, the infimum of $\{\mu:  D^{\mu}(X_{n+i})=0\}$ is clearly achieved so the minimum $\hat{\mu}_{n+i}$ exists. By adding Assumption \ref{ass-martingale} and using the right continuity of $C^{\mu}(X_i)$ in proposition \ref{prop-right-continuous}, the infimum of $\{\mu: Y_i\in C^{\mu}(X_i) \textrm{ or } D^{\mu}(X_i)=0\}$ is also achieved so the minimum $\tilde{\mu}_i$ exists. $\square$
\subsection{Proof of Lemma \ref{lemma-martingale}}\label{app-proof-lemma-martingale}
We first note that $\mu_{\alpha}$ is well defined. As $f_{\mu}(x,y)$ and $g_{\mu}(x)$ are  right-continuous in $\mu$, $ \frac{\frac{1}{n+1}\left(\sum_{i=1}^n f_{\mu}(X_i,Y_i) + 1\right)}{\frac{1}{m}\left(1 \lor \sum_{i=1}^m g_{\mu}(X_{n+i})\right)}$ is right continuous in $\mu$. If the set $\left\lbrace \mu\geq 0 : \frac{\frac{1}{n+1}\left(\sum_{i=1}^n f_{\mu}(X_i,Y_i) + 1\right)}{\frac{1}{m}\left(1 \lor \sum_{i=1}^m g_{\mu}(X_{n+i})\right)} \leq \alpha \right\rbrace$ is not empty, its infimum is attained.

If $\mu_{\alpha} = \infty,$ then $err\leq \alpha.$ Otherwise, 
\begin{align*}
err&= \mE\left[\frac{n+1}{m} \cdot \frac{\frac{1}{n+1}(\sum_{i=1}^n f_{\mu_\alpha}(X_i,Y_i) + 1)}{\frac{1}{m}\left(1 \lor \sum_{i=1}^m g_{\mu_\alpha}(X_{n+i})\right)} \cdot \frac{\sum_{i=1}^m f_{\mu_\alpha}(X_{n+i},Y_{n+i})}{\sum_{i=1}^n f_{\mu_\alpha}(X_i,Y_i) + 1}\right] \\
&\leq \frac{(n+1)\alpha}{m} \mE\left[\frac{\sum_{i=1}^m f_{\mu_\alpha}(X_{n+i},Y_{n+i}) }{\sum_{i=1}^n f_{\mu_\alpha}(X_i,Y_i) + 1}\right],
\end{align*}
where the inequality follows from the definition of $\mu_{\alpha}$. Therefore it is enough to show that $$\mE\left[\frac{\sum_{i=1}^m f_{\mu_\alpha}(X_{n+i},Y_{n+i})}{\sum_{i=1}^n f_{\mu_\alpha}(X_i,Y_i) + 1}\right] \leq \frac{m}{n+1}.$$

We shall begin by showing that $\frac{\sum_{i=1}^m f_{\mu}(X_{n+i},Y_{n+i})}{\sum_{i=1}^n f_{\mu}(X_i,Y_i) + 1}$ defines a supermartingale
with respect to the filtration $\mathbb{F}=(\mathcal{F_{\mu}})_{\mu\geq0}$, where
$$
\mathcal{F}_{\mu} = \sigma\bigg(\bigg\lbrace
\sum_{i=1}^m f_{\mu'}(X_{n+i},Y_{n+i}),\
\sum_{i=1}^m g_{\mu'}(X_{n+i}),\
\sum_{i=1}^n f_{\mu'}(X_i,Y_i) : 0\leq\mu'\leq\mu \bigg\rbrace\bigg).
$$

For any $0 \leq \mu_1 < \mu_2$,
\begin{align}
&\mE\left[\frac{\sum_{i=1}^m f_{\mu_2}(X_{n+i},Y_{n+i})}{\sum_{i=1}^n f_{\mu_2}(X_i,Y_i) + 1} \mid \mathcal{F}_{\mu_1}\right] \nonumber \\
&\hspace{1em}= \mE\left[\sum_{i=1}^m f_{\mu_2}(X_{n+i},Y_{n+i}) \mid \mathcal{F}_{\mu_1}\right] \mE\left[\frac{1}{\sum_{i=1}^n f_{\mu_2}(X_i,Y_i) + 1} \mid \mathcal{F}_{\mu_1}\right] \label{eq(1)} \\
&\hspace{1em}\leq \frac{\left(\sum_{i=1}^m f_{\mu_1}(X_{n+i},Y_{n+i})\right) \mP(f_{\mu_2}(X,Y) = 1 \mid f_{\mu_1}(X,Y) = 1)}{\left(\sum_{i=1}^n f_{\mu_1}(X_i,Y_i) + 1\right) \mP(f_{\mu_2}(X,Y) = 1 \mid f_{\mu_1}(X,Y) = 1)} \label{ineq(2)} \\
&\hspace{1em}= \frac{\sum_{i=1}^m f_{\mu_1}(X_{n+i},Y_{n+i})}{\sum_{i=1}^n f_{\mu_1}(X_i,Y_i) + 1}. \nonumber
\end{align}

Equality \eqref{eq(1)} is due to the conditional independence of $\{(X_i, Y_i)\}_{i=1}^n$ and $\{(X_{n+i}, Y_{n+i})\}_{i=1}^m$ given $\mathcal{F}_{\mu_1}$. To see this, note that $\mathcal{F}_{\mu_1}$ is generated separately by $\sigma((X_i,Y_i): i\in [n] )$ and $\sigma((X_{n+i}, Y_{n+i}): i \in [m])$ that are independent of each other. Since $\sum_{i=1}^m f_{\mu_2}(X_{n+i},Y_{n+i})$ and the  first two sums  in $\mathcal F_{\mu_1}$ are measurable wrt  $\sigma((X_{n+i}, Y_{n+i}): i \in [m])$, and $\frac{1}{\sum_{i=1}^n f_{\mu_2}(X_i,Y_i) + 1}$ and the last sum in $\mathcal F_{\mu_1}$ are measurable wrt $\sigma((X_i,Y_i): i\in [n] )$, conditional independence follows.

We shall now provide our  justification for  inequality \eqref{ineq(2)}. 
First, note that conditioning on $\sum_{i=1}^n f_{\mu_1}(X_i,Y_i) = s$, exactly $s$ indices have $f_{\mu_1}(X_i,Y_i) = 1$. For the other $n-s$ indices, $f_{\mu_2}(X_i,Y_i) = f_{\mu_1}(X_i,Y_i) = 0$ by monotonicity.

Second, note that the distribution for a group of indicators $(f_{\mu_2}(X_i,Y_i))_{i\in \mathcal S}$ at $\mu_2>\mu_1$ is unchanged when we further condition  on the filtration $\mathcal{F}_{\mu_1}$ if these indicators have value one at $\mu_1$, i.e., for $\mathcal S\subseteq[n+m]$ that satisfies $f_{\mu_1}(X_i,Y_i)=1 \ \forall i\in \mathcal S$,  $$[(f_{\mu_2}(X_i,Y_i))_{i\in \mathcal S} \mid \ (f_{\mu_1}(X_j,Y_j))_{j\in [n+m]}]\stackrel{d}= [(f_{\mu_2}(X_i,Y_i))_{i\in \mathcal S} \mid \ \mathcal{F}_{\mu_1}, (f_{\mu_1}(X_j,Y_j))_{j\in [n+m]}].$$ This follows since by  monotonicity $f_{\mu_1}(X_i,Y_i) = 1$ implies $g_{\mu'}(X_i) = f_{\mu'}(X_i,Y_i) = 1$ for any $\mu' \leq \mu_1$; so the added information in $\mathcal{F}_{\mu_1}$ when  $f_{\mu_1}(X_i,Y_i)=1 \ \forall i\in \mathcal S$ is only about sums of indicators from examples independent of $(X_i,Y_i))_{i\in \mathcal S}$; therefore, conditioning on this additional information does not change the distribution of $(f_{\mu_2}(X_i,Y_i))_{i\in \mathcal S}$.

Since $(X_i,Y_i)_{i \in [n+m]}$ are iid and 
the only indicators in $\sum_{i=1}^n f_{\mu_2}(X_i,Y_i)$ that can be one are those with $f_{\mu_1}(X_i, Y_i)=1$, $i\in[n]$, it follows that
{\small 
\begin{equation}\label{eq-binom-cal}
\sum_{i=1}^n f_{\mu_2}(X_i,Y_i) \mid \mathcal{F}_{\mu_1}, (f_{\mu_1}(X_j,Y_j))_{j\in [n+m]} \sim B\left(\sum_{i=1}^n f_{\mu_1}(X_i,Y_i),\ \mP(f_{\mu_2}(X,Y)=1 | f_{\mu_1}(X,Y) = 1)\right),
\end{equation}
}
where $B(k,p)$ denotes the binomial distribution with parameters $k$ and $p$. 
Similarly,
{\small 
\begin{equation}\label{eq-binom-test}
\sum_{i=1}^m f_{\mu_2}(X_{n+i},Y_{n+i}) \mid \mathcal{F}_{\mu_1}, (f_{\mu_1}(X_j,Y_j))_{j\in [n+m]} \sim B\left(\sum_{i=1}^m f_{\mu_1}(X_{n+i},Y_{n+i}),\ \mP(f_{\mu_2}(X,Y)=1 \mid f_{\mu_1}(X,Y) = 1)\right).
\end{equation}
}
Since the binomial distribution in \eqref{eq-binom-cal} is the same for every $(f_{\mu_1}(X_j,Y_j))_{j\in [n+m]}$ with a fixed sum $\sum_{i=1}^n f_{\mu_1}(X_i,Y_i)$,   it follows also that 
$$\sum_{i=1}^n f_{\mu_2}(X_i,Y_i) \mid \mathcal{F}_{\mu_1} \sim B\left(\sum_{i=1}^n f_{\mu_1}(X_i,Y_i),\ \mP(f_{\mu_2}(X,Y)=1 | f_{\mu_1}(X,Y) = 1)\right).$$ Similarly, since the binomial distribution in \eqref{eq-binom-test} is the same for every $(f_{\mu_1}(X_j,Y_j))_{j\in [n+m]}$ with a fixed sum $\sum_{i=1}^m f_{\mu_1}(X_{n+i},Y_{n+i})$), it follows also that $$\sum_{i=1}^m f_{\mu_2}(X_{n+i},Y_{n+i}) \mid \mathcal{F}_{\mu_1} \sim B\left(\sum_{i=1}^m f_{\mu_1}(X_{n+i},Y_{n+i}),\ \mP(f_{\mu_2}(X,Y)=1 \mid f_{\mu_1}(X,Y) = 1)\right).$$

We use these binomial distributions to evaluate the conditional expectations in \eqref{eq(1)}: 
{\small 
\begin{align}
&& \mE\left[\sum_{i=1}^m f_{\mu_2}(X_{n+i},Y_{n+i}) \mid \mathcal{F}_{\mu_1}\right] = \left(\sum_{i=1}^m f_{\mu_1}(X_{n+i},Y_{n+i})\right) \mP(f_{\mu_2}(X,Y) =1\mid f_{\mu_1}(X,Y) = 1) \label{eq-num-expectation}  \\
&& 
\mE\left[\frac{1}{\sum_{i=1}^n f_{\mu_2}(X_i,Y_i) + 1} \mid \mathcal{F}_{\mu_1}\right] \leq \frac{1}{\left(\sum_{i=1}^n f_{\mu_1}(X_i,Y_i) + 1\right) \mP(f_{\mu_2}(X,Y)=1 \mid f_{\mu_1}(X,Y) = 1)} \label{eq-denom-expectation}
\end{align}
}
where the inequality follows from Lemma~1 of~\cite{BinLemma}, which states that if $Z \sim B(k-1, p)$, then $\mE[\frac{1}{Z+1}]=\frac 1{kp}\left(1-(1-p)^k\right) \leq \frac{1}{kp}$. Equality \eqref{eq-num-expectation} and inequality \eqref{eq-denom-expectation} together imply inequality \eqref{ineq(2)}.

We conclude that $\frac{\sum_{i=1}^m f_{\mu}(X_{n+i},Y_{n+i})}{\sum_{i=1}^n f_{\mu}(X_i,Y_i) + 1}$ is indeed a right-continuous supermartingale. Additionally, $\mu_\alpha$ is a stopping time since $\{\mu_\alpha \leq \mu\} = \left\{\exists 0 \leq \mu' \leq \mu \textrm{ s.t. } \frac{\frac{1}{n+1}\left(\sum_{i=1}^n f_{\mu'}(X_i,Y_i) + 1\right)}{\frac{1}{m}\left(1 \lor \sum_{i=1}^m g_{\mu'}(X_{n+i})\right)} \leq \alpha\right\} \in \mathcal{F}_\mu$ for all $\mu \geq 0$. Therefore, we can apply the Optional Stopping Theorem:

\begin{align}
&\hspace{-1em} \mE\left[\frac{\sum_{i=1}^m f_{\mu_\alpha}(X_{n+i},Y_{n+i})}{\sum_{i=1}^n f_{\mu_\alpha}(X_i,Y_i) + 1}\right] \nonumber \\
&\leq \mE\left[\frac{\sum_{i=1}^m f_0(X_{n+i},Y_{n+i})}{\sum_{i=1}^n f_0(X_i,Y_i) + 1}\right] \label{ineq(3)} \\
&= \mE\left[\sum_{i=1}^m f_0(X_{n+i},Y_{n+i})\right] \mE\left[\frac{1}{\sum_{i=1}^n f_0(X_i,Y_i) + 1}\right] \label{eq(4)} \\
&\leq \frac{m\mE\left[f_0(X,Y)\right]}{(n+1)\mE\left[f_0(X,Y)\right]} = \frac{m}{n+1} \label{ineq(5)}
\end{align}
Inequality~\eqref{ineq(3)} follows from the Optional Stopping Theorem (specifically, the application of Theorem 3.22 in \citet{KaratzasShreve1991}). Equality~\eqref{eq(4)} holds since $\{(X_i, Y_i)\}_{i=1}^n$ and $\{(X_{n+i}, Y_{n+i})\}_{i=1}^m$ are independent. For~\eqref{ineq(5)}, $\{f_0(X_i,Y_i)\}_{i=1}^{n+m}$ are iid, Bernoulli, so $\sum_{i=1}^m f_0(X_{n+i},Y_{n+i}) \sim B(m,\ \mE\left[f_0(X,Y)\right])$ and $\sum_{i=1}^n f_0(X_i,Y_i) \sim B(n,\ \mE\left[f_0(X,Y)\right])$. Then the inequality follows again from Lemma~1 of~\cite{BinLemma}.
\hfill$\square$

\section{Additional theoretical results}\label{app-theoretical-results}

\begin{prop}\label{prop-D-G}
 
 Let $C_{max}(X) = \arg\max_{C\in\mathcal{I}} \mP(Y\in C\mid X)$ (in case of a tie, prefer $C$ with larger weight $w(C)$),  $D_G(X) = \mI\left\lbrace T(X) \geq 1-\alpha\right\rbrace$, and $C_G(X) = C_{max}(X)$ is a feasible policy that minimizes~$G(D,C)$. If $\mP(T(X) > 1-\alpha) = 0$, then $(D_G, C_G)$ is an optimal solution to \eqref{eq-OMT-problem}.
\end{prop}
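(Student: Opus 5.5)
Three claims need to be established: (i) $(D_G,C_G)$ minimizes $G(D,C)$ over all policies; (ii) it is feasible, i.e. $G(D_G,C_G)\le 0$; (iii) when $\mP(T(X)>1-\alpha)=0$ it maximizes $\Pi$ among feasible policies. All three follow from pointwise arguments on the integrand of $G$ in \eqref{def-G(D,C)}, which is linear in $D$.

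For (i), we fix $x$ and minimize $(1-\mP(Y\in C\mid X=x)-\alpha)d$ over $d\in\{0,1\}$ and $C\in\mathcal I$. For either value of $d$, the best choice of $C$ maximizes $\mP(Y\in C\mid X=x)$, which gives $1-\alpha-T(x)$. It then remains to choose $d=1$ exactly when this quantity is $\le 0$, that is, when $T(x)\ge 1-\alpha$; ties at $T(x)=1-\alpha$ contribute $0$ either way. So the integrand of any $(D,C)$ is pointwise at least $(1-\alpha-T(x))\mI\{T(x)\ge 1-\alpha\}$, which is the integrand of $(D_G,C_G)$. Taking expectations gives $G(D,C)\ge G(D_G,C_G)$. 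Feasibility (ii) is immediate, since this integrand is $\le 0$ everywhere, so $G(D_G,C_G)\le 0$.

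For (iii), assume $\mP(T(X)>1-\alpha)=0$, and take any feasible $(D,C)$. Almost surely $1-\mP(Y\in C(X)\mid X)-\alpha\ge 1-\alpha-T(X)\ge 0$. The integrand of $G(D,C)$ is therefore nonnegative a.s., while $G(D,C)\le 0$ by feasibility, so the integrand vanishes a.s. Hence, almost surely, $D(X)=1$ forces $\mP(Y\in C(X)\mid X)=T(X)=1-\alpha$. In particular, $C(X)$ is a maximal-coverage set on $\{D(X)=1\}$ and $D(X)\le \mI\{T(X)\ge 1-\alpha\}=D_G(X)$ a.s.

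We then bound the power pointwise. On $\{D(X)=1\}$ we have $w(C(X))\mP(Y\in C(X)\mid X)\le w(C_{max}(X))\,T(X)$, because among sets of maximal coverage $C_{max}$ has the largest weight by its tie-breaking rule. Since $D\le D_G$ and the integrand is nonnegative, this gives $\Pi(D,C)\le \Pi(D_G,C_G)$, so $(D_G,C_G)$ is optimal.

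The main obstacle is the step in (iii) where the a.s. pointwise conclusion is extracted from a constraint on the expectation. This requires sign-definiteness of the integrand, which is exactly what the hypothesis $\mP(T(X)>1-\alpha)=0$ provides. The weight tie-breaking in $C_{max}$ is also essential, because on the event $T(X)=1-\alpha$ a competing maximal-coverage set of smaller weight would lose power. A small measurability caveat about choosing an argmax over the finite $\mathcal I$ is handled by the lexicographic tie-breaking.
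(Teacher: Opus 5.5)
Your proof is correct, and it takes a different route from the paper's.

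The paper argues by contrapositive. It takes a policy with $\Pi(D,C)>\Pi(D_G,C_G)$ and splits the power over $\{T(X)\ge 1-\alpha\}$ and $\{T(X)<1-\alpha\}$. It bounds the first piece by $\Pi(D_G,C_G)$ and the second by $B\,\mE[D(X)\mI\{T(X)<1-\alpha\}]$. From this it deduces positive selection mass on $\{T(X)<1-\alpha\}$, and hence $mFCR(D,C)>\alpha$.

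You instead use the constraint directly. Under $\mP(T(X)>1-\alpha)=0$ the integrand of $G$ is a.s. nonnegative, so $G\le 0$ forces it to vanish a.s. This gives both $D\le D_G$ and maximal coverage of $C(X)$ on $\{D(X)=1\}$. The power comparison then becomes pointwise through the weight tie-breaking.

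Your route buys two things.
\begin{itemize}
\item It verifies the embedded claims that $(D_G,C_G)$ is feasible and minimizes $G$, which the paper's proof does not address.
\item More substantively, it handles policies whose $C(X)$ is not of maximal coverage on $\{T(X)=1-\alpha\}$.
\end{itemize}

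The paper's bound of the $\{T(X)\ge 1-\alpha\}$ piece by $\Pi(D_G,C_G)$ tacitly needs $w(C(X))\mP(Y\in C(X)\mid X)\le w(C_{max}(X))T(X)$, and this can fail. Take class probabilities $(0.5,0.3,0.2)$ for every $x$, $\alpha=0.2$, $\mathcal I$ the sets of size $1$ or $2$, and $w(C)=1/|C|$. Then $T\equiv 0.8$ and $\Pi(D_G,C_G)=0.4$. The policy $D\equiv 1$, $C\equiv\{1\}$ has power $0.5$ and $\mE[D(X)\mI\{T(X)<1-\alpha\}]=0$, so the paper's displayed chain breaks. Its mFCR step also cannot rule this policy out, because the strict inequality there comes only from the $\{T(X)<1-\alpha\}$ part. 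The policy is indeed infeasible ($mFCR=0.5$), and your vanishing-integrand step excludes it immediately. As written, the paper's argument is really complete only for policies with $C=C_{max}$, while yours covers all $(D,C)$.
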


\begin{proof}
Consider some $D$ with $\Pi(D,C)>\Pi(D_G, C_G)$.
\begin{align*}
\Pi(D_G, C_G) &< \Pi(D,C) = \mE\left[w(C(X)) \mP(Y \in C(X) \mid X) D(X)\right] \\
&= \mE\left[w(C(X)) \mP(Y \in C(X) \mid X) D(X) \mI\{T(X) \geq 1-\alpha\}\right] \\
&\hspace{1em}+ \mE\left[w(C(X)) \mP(Y \in C(X) \mid X) D(X) \mI\{T(X) < 1-\alpha\}\right] \\
&\leq \Pi(D_G, C_G) + B \mE\left[D(X) \mI\{T(X) < 1-\alpha\}\right].
\end{align*}
Thus $\mE\left[D(X) \mI\{T(X) < 1-\alpha\}\right] > 0$.

For any $X$, we have $\mP(Y \notin C(X) \mid X) > \alpha$ if $T(X) < 1-\alpha$, and $\mP(Y \notin C(X) \mid X) \geq \alpha$ if $T(X) = 1-\alpha$. Hence,
\begin{align*}
&\mE\left[\mP(Y \notin C(X) \mid X) D(X)\right]\\
&\hspace{1em}= \mE\left[\mP(Y \notin C(X) \mid X) D(X) \mI\{T(X) = 1-\alpha\}\right]
+ \mE\left[\mP(Y \notin C(X) \mid X) D(X) \mI\{T(X) < 1-\alpha\}\right] \\
&\hspace{1em}> \alpha \mE\left[D(X) \mI\{T(X) = 1-\alpha\}\right]
+ \alpha \mE\left[D(X) \mI\{T(X) < 1-\alpha\}\right] \\
&\hspace{1em}= \alpha \mE\left[D(X)\right].
\end{align*}
Then,
$$mFCR(D,C) = \frac{\mE\left[\mP(Y \notin C(X) \mid X) D(X)\right]}{\mE\left[D(X)\right]} > \frac{\alpha \mE\left[D(X)\right]}{\mE\left[D(X)\right]} = \alpha.$$
Therefore any $D$ with $\Pi(D,C)>\Pi(D_G, C_G)$ is not feasible.
\end{proof}

If we allow a randomized policy rather than the fixed policy $(D^{\mu^*},C^{\mu^*})$ that was introduced in \S~\ref{sec-optimal}, then the optimal solution does not need Assumption \ref{ass-probratio} to be satisfied. The following theorem characterizes the possibly randomized oracle procedure.

\begin{thm}\label{thm-random-policy}
	If $\mP(T(X) > 1-\alpha) > 0$ and $\mu^* = \inf\{\mu: G(D^{\mu}, C^{\mu})\leq 0\}$, then $\mathcal{L}(D^{\mu^*}, C^{\mu^*}, \mu^*) = f^*$ and there exists an optimal random policy that achieves this power.
\end{thm}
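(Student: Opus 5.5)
The plan is to reduce the statement to weak duality plus complementary slackness for a two-point mixture of deterministic Lagrangian maximizers at $\mu^*$. Write $g(\mu):=G(D^{\mu},C^{\mu})$ and let $(D,C)$ range over randomized policies, so that $\Pi$ and $G$ are expectations of the randomized quantities. Both $\Pi$ and $G$ stay linear in the randomization, and the chain of inequalities preceding Proposition~\ref{prop-f-leq-L} carries over verbatim. This gives $f^*\le \mathcal L(D^{\mu},C^{\mu},\mu)$ for every $\mu\ge 0$, even when $f^*$ is taken over randomized feasible policies. It therefore suffices to construct a randomized policy $\pi$ that satisfies $G(\pi)\le 0$ and $\mu^*G(\pi)=0$, and whose two components each maximize $\mathcal L(\cdot,\cdot,\mu^*)$ pointwise in $x$. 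For such a $\pi$ we have $\Pi(\pi)=\mathcal L(\pi,\mu^*)=\mathcal L(D^{\mu^*},C^{\mu^*},\mu^*)\ge f^*\ge\Pi(\pi)$, which proves both claims at once.

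First, $\mu^*$ is finite. By the Corollary to Proposition~\ref{prop-monotoneD}, $g$ is non-increasing, and by Proposition~\ref{prop-decreasingG}, $g<0$ for large $\mu$. Hence $g(\mu)\le 0$ for all $\mu>\mu^*$ and $g(\mu)>0$ for all $\mu<\mu^*$. Next, fix $x$. The envelope $\mathcal U_x$ has finitely many breakpoints and at most one zero crossing, so $\mu\mapsto (C^{\mu}(x),D^{\mu}(x))$ is piecewise constant with finitely many jumps. The one-sided limits
\[
(C^{-}(x),D^{-}(x))=\lim_{\mu\uparrow\mu^*}(C^{\mu}(x),D^{\mu}(x)),\qquad (C^{+}(x),D^{+}(x))=\lim_{\mu\downarrow\mu^*}(C^{\mu}(x),D^{\mu}(x))
\]
therefore exist pointwise; when $\mu^*=0$ we set $(C^-,D^-)=(C^0,D^0)$. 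These limits are measurable as pointwise limits along a countable sequence.

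I then check that both one-sided policies maximize $\mathcal L(\cdot,\cdot,\mu^*)$ pointwise. For $C^{\pm}(x)$ this holds because each line $\ell_{x,C}$ is continuous and $\mathcal I$ is finite: a set that is an argmax on a one-sided neighbourhood of $\mu^*$ is also an argmax at $\mu^*$. For $D^{\pm}(x)$, continuity of $\mathcal U_x$ gives $\mathcal U_x(\mu^*)>0\Rightarrow D^{\pm}(x)=1$ and $\mathcal U_x(\mu^*)<0\Rightarrow D^{\pm}(x)=0$. When $\mathcal U_x(\mu^*)=0$, either value of $D$ contributes zero to the integrand. By dominated convergence (the integrands are bounded by $1$),
\[
G^-:=G(D^-,C^-)=\lim_{\mu\uparrow\mu^*}g(\mu)\ge 0,\qquad G^+:=G(D^+,C^+)=\lim_{\mu\downarrow\mu^*}g(\mu)\le 0 .
\]
In the case $\mu^*=0$ we directly have $G^-=g(0)$.

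Finally, let $\pi$ use $(D^-,C^-)$ with probability $p$ and $(D^+,C^+)$ with probability $1-p$, via an independent coin. If $G^-\le 0$, take $p=1$. Otherwise take $p=-G^+/(G^--G^+)\in[0,1)$, which makes $G(\pi)=pG^-+(1-p)G^+=0$. In every case $G(\pi)\le 0$ and $\mu^*G(\pi)=0$. The first option gives $G(\pi)=G^-\le 0$, and since $G^-\ge 0$ whenever $\mu^*>0$, it forces $G(\pi)=0$ in that case. By linearity, $\mathcal L(\pi,\mu^*)=p\,\mathcal L(D^-,C^-,\mu^*)+(1-p)\mathcal L(D^+,C^+,\mu^*)=\max_{D,C}\mathcal L(D,C,\mu^*)=\mathcal L(D^{\mu^*},C^{\mu^*},\mu^*)$, and the sandwich from the first paragraph closes the argument.

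The main obstacle is the middle step: justifying that the one-sided limit policies are exact pointwise Lagrangian maximizers at $\mu^*$, and that $G^{\pm}$ are the one-sided limits of $g$. This is precisely where Assumption~\ref{ass-probratio} was used before to exclude jumps. Here it is replaced by the finite-breakpoint structure of each envelope, together with the observation that ties and zero crossings at $\mu^*$ do not change the Lagrangian value. A secondary point to state clearly is that $f^*$ must be read over randomized feasible policies, or at least that the deterministic optimum is bounded by the same dual value.
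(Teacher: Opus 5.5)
Your proof is correct, but it takes a different route from the paper's.

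\textbf{The paper's route.} The paper stays within the family $\{(D^{\mu},C^{\mu})\}$. It notes that $\mu\mapsto\mathcal L(D^{\mu},C^{\mu},\mu)$ is convex, being the upper envelope of the lines $\mu\mapsto\mathcal L(D^{\mu'},C^{\mu'},\mu)$, whose slopes $-g(\mu')$ change sign at $\mu^*$; hence $\mu^*$ minimizes the dual function. It then assumes a gap $\delta>0$ and mixes the policies at $\mu^*-\epsilon$ and $\mu^*+\epsilon$ so that $G=0$. It shows this feasible mixture has power at least $f^*+\delta-2\epsilon q\,g(\mu^*-\epsilon)\ge f^*+\delta-2\epsilon$, which is a contradiction for $\epsilon<\delta/2$. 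This gives strong duality without constructing any limit policies.

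\textbf{Where your route does more.} For fixed $\epsilon$, the policies at $\mu^*\pm\epsilon$ need not maximize $\mathcal L(\cdot,\cdot,\mu^*)$: a set of $x$ of positive probability can have envelope breakpoints or zero crossings between $\mu^*-\epsilon$ and $\mu^*+\epsilon$. So once $\delta=0$, the paper's chain only shows these mixtures are within $2\epsilon$ of $f^*$. Its closing claim that such a mixture attains $f^*$ exactly does not follow. Your pointwise one-sided limits $(D^{\pm},C^{\pm})$ are exact Lagrangian maximizers at $\mu^*$, with $G^-\ge 0\ge G^+$ by dominated convergence. That is what yields an exactly optimal randomized policy. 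Your KKT sandwich then gives strong duality and attainment at once, and covers $\mu^*=0$ in the same argument. The price is the bookkeeping on existence, measurability and convergence of the limits, which you supply.

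\textbf{Two remarks.} You are right that $f^*$ must be taken over randomized policies; the paper uses this implicitly when it writes $\Pi(D^{\mu^*,\epsilon},C^{\mu^*,\epsilon})\le f^*$. As a minor simplification, under the oracle's larger-weight tie-breaking and $D^{\mu}=\mI\{\mathcal U_x(\mu)\ge 0\}$, the map $\mu\mapsto (C^{\mu}(x),D^{\mu}(x))$ is left-continuous on $(0,\infty)$. So for $\mu^*>0$ your $(D^-,C^-)$ is simply $(D^{\mu^*},C^{\mu^*})$.
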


\begin{proof}
By Proposition~\ref{prop-f-leq-L}, $\mathcal{L}(D^{\mu^*}, C^{\mu^*}, \mu^*) \ge f^*$. Let $\delta = \mathcal{L}(D^{\mu^*}, C^{\mu^*}, \mu^*) - f^*$ and assume $\delta > 0$.

Note that $\mathcal{L}(D^{\mu}, C^{\mu}, \mu)$ is a continuous convex function in $\mu$ as the upper envelope of the lines $\{\mathcal{L}(D^{\mu'}, C^{\mu'}, \mu) : \mu' \ge 0\}$. Moreover, the slope of $\mathcal{L}(D^{\mu'}, C^{\mu'}, \mu)$ is negative for any $\mu' < \mu^*$ and non-negative for any $\mu' > \mu^*$. Therefore, $\mathcal{L}(D^{\mu^*}, C^{\mu^*}, \mu^*) = \min_\mu \mathcal{L}(D^{\mu}, C^{\mu}, \mu)$, and $\mathcal{L}(D^{\mu}, C^{\mu}, \mu) \ge f^* + \delta$ for any $\mu \geq 0$.

We first assume that $\mu^* > 0$. Let $g(\mu):= G(D^{\mu}, C^{\mu})$ for notational convenience. We define a new policy $\left(D^{\mu^*,\epsilon}, C^{\mu^*,\epsilon}\right)$ that with probability $q$ follows $\left(D^{\mu^* - \epsilon}, C^{\mu^* - \epsilon}\right)$ and with probability $1-q$ follows $\left(D^{\mu^* + \epsilon}, C^{\mu^* + \epsilon}\right)$, where $q = \frac{-g(\mu^* + \epsilon)}{g(\mu^* - \epsilon) - g(\mu^* + \epsilon)}$.

\begin{align*}
g(\mu^*,\epsilon) &= q g(\mu^* - \epsilon) + (1-q) g(\mu^* + \epsilon) \\
&= \frac{-g(\mu^* + \epsilon) g(\mu^* - \epsilon)}{g(\mu^* - \epsilon) - g(\mu^* + \epsilon)} + \left(1 - \frac{-g(\mu^* + \epsilon)}{g(\mu^* - \epsilon) - g(\mu^* + \epsilon)}\right) g(\mu^* + \epsilon) = 0.
\end{align*}

Thus $mFCR(D^{\mu^*,\epsilon}, C^{\mu^*,\epsilon}) = \alpha$ and
$$\mathcal{L}(D^{\mu^*,\epsilon}, C^{\mu^*,\epsilon}, \mu^*) = \Pi(D^{\mu^*,\epsilon}, C^{\mu^*,\epsilon}) - \mu^* g(\mu^*,\epsilon) = \Pi(D^{\mu^*,\epsilon}, C^{\mu^*,\epsilon}) \le f^*.$$

\begin{align*}
f^* &\ge \mathcal{L}(D^{\mu^*,\epsilon}, C^{\mu^*,\epsilon}, \mu^*) \\
&= q \mathcal{L}(D^{\mu^* - \epsilon}, C^{\mu^* - \epsilon}, \mu^*) + (1-q) \mathcal{L}(D^{\mu^* + \epsilon}, C^{\mu^* + \epsilon}, \mu^*) \\
&= q \mathcal{L}(D^{\mu^* - \epsilon}, C^{\mu^* - \epsilon}, \mu^* - \epsilon) + (1-q) \mathcal{L}(D^{\mu^* + \epsilon}, C^{\mu^* + \epsilon}, \mu^* + \epsilon) \\
&\hspace{1em} + q \left(\mathcal{L}(D^{\mu^* - \epsilon}, C^{\mu^* - \epsilon}, \mu^*) - \mathcal{L}(D^{\mu^* - \epsilon}, C^{\mu^* - \epsilon}, \mu^* - \epsilon)\right) \\
&\hspace{1em} + (1-q) \left(\mathcal{L}(D^{\mu^* + \epsilon}, C^{\mu^* + \epsilon}, \mu^*) - \mathcal{L}(D^{\mu^* + \epsilon}, C^{\mu^* + \epsilon}, \mu^* + \epsilon)\right) \\
&= q \mathcal{L}(D^{\mu^* - \epsilon}, C^{\mu^* - \epsilon}, \mu^* - \epsilon) + (1-q) \mathcal{L}(D^{\mu^* + \epsilon}, C^{\mu^* + \epsilon}, \mu^* + \epsilon) \\
&\hspace{1em} + q \left((\Pi(D^{\mu^* - \epsilon}, C^{\mu^* - \epsilon}) - \mu^* g(\mu^* - \epsilon)) - (\Pi(D^{\mu^* - \epsilon}, C^{\mu^* - \epsilon}) - (\mu^* - \epsilon) g(\mu^* - \epsilon))\right) \\
&\hspace{1em} + (1-q) \left((\Pi(D^{\mu^* + \epsilon}, C^{\mu^* + \epsilon}) - \mu^* g(\mu^* + \epsilon)) - (\Pi(D^{\mu^* + \epsilon}, C^{\mu^* + \epsilon}) - (\mu^* + \epsilon) g(\mu^* + \epsilon))\right) \\
&= q \mathcal{L}(D^{\mu^* - \epsilon}, C^{\mu^* - \epsilon}, \mu^* - \epsilon) + (1-q) \mathcal{L}(D^{\mu^* + \epsilon}, C^{\mu^* + \epsilon}, \mu^* + \epsilon) \\
&\hspace{1em} - q \epsilon g(\mu^* - \epsilon) + (1-q) \epsilon g(\mu^* + \epsilon) \\
&\ge f^* + \delta - q \epsilon g(\mu^* - \epsilon) + (1-q) \epsilon g(\mu^* + \epsilon) \\
&= f^* + \delta + \epsilon \left(g(\mu^*,\epsilon) - 2q g(\mu^* - \epsilon)\right) \\
&= f^* + \delta - 2\epsilon q g(\mu^* - \epsilon) \\
&= f^* + \delta + 2\epsilon \frac{g(\mu^* - \epsilon)g(\mu^* + \epsilon)}{g(\mu^* - \epsilon) - g(\mu^* + \epsilon)}
\end{align*}

Therefore $\delta \le 2\epsilon \frac{g(\mu^* - \epsilon)(-g(\mu^* + \epsilon))}{g(\mu^* - \epsilon) + (- g(\mu^* + \epsilon))} \le 2\epsilon \min\{g(\mu^* - \epsilon), -g(\mu^* + \epsilon)\} \le 2 \epsilon$.
Choosing $\epsilon < \frac{\delta}{2}$ leads to a contradiction. We conclude $\delta = 0$ and $\mathcal{L}(D^{\mu^*}, C^{\mu^*}, \mu^*) = f^*$.

We also showed that there exists a policy $\left(D^{\mu^*,\epsilon}, C^{\mu^*,\epsilon}\right)$ with $mFCR(D^{\mu^*,\epsilon}, C^{\mu^*,\epsilon}) = \alpha$, $\Pi(D^{\mu^*,\epsilon}, C^{\mu^*,\epsilon}) = f^*$, and $\mathcal{L}(D^{\mu^*,\epsilon}, C^{\mu^*,\epsilon}, \mu^*) = \mathcal{L}(D^{\mu^*}, C^{\mu^*}, \mu^*)$. 

Finally, suppose $\mu^* = 0$. By definition, $(D^0, C^0)$ maximizes $\mathcal{L}(D, C, 0) = \Pi(D,C)$. Therefore, if $mFCR(D^0, C^0) \leq \alpha$, then $(D^0, C^0)$ is a feasible optimal policy and we have $\mathcal{L}(D^0, C^0, 0) = f^*$. Otherwise, if $mFCR(D^0, C^0) > \alpha$, combining $(D^0, C^0)$ and $(D^\epsilon, C^\epsilon)$ into a random policy gives the same result as above.
\end{proof}

\section{Addressing label shift from training to calibration plus test samples}\label{app-vector-scaling}

Consider $(\hat\mP(Y=k\mid X))_{k\in [K]}$  the estimated probabilities for class membership of an algorithm derived from unobserved training data $(\tilde{X}_i,\tilde{Y}_i)_{i\in [N]}$ that are iid from a certain joint distribution.  The analyst observes the calibration data  $(X_i,Y_i)_{i\in [n]}$ that arises from another joint distribution. It is assumed that the two  label-conditional  distributions are the same, i.e., $$\mP(X\mid Y=k) = \mP(\tilde X \mid \tilde Y=k) \ \forall \ k\in [K]. $$
However, the marginal distribution of the labels need not coincide, so $$(\mP(Y=k))_{k\in [K]}\neq (\mP(\tilde Y=k))_{k\in [K]}.$$

By Bayes' rule, it follows that $$\mP(Y=k\mid X) = \frac{\mP(X\mid Y=k)\mP(Y=k)}{\sum_{\ell=1}^K{\mP(X\mid Y=\ell)\mP(Y=\ell)}}$$ and similarly
$$\mP(\tilde Y=k\mid X) = \frac{\mP(X\mid Y=k)\mP(\tilde Y=k)}{\sum_{\ell=1}^K{\mP(X\mid \tilde Y=\ell)\mP(\tilde Y=\ell)}}.$$
Since the label-conditional distributions are the same, it follows that $$\mP(Y=k\mid X) \propto \mP(\tilde Y=k\mid X)\frac{\mP(Y=k)}{\mP(\tilde Y=k)}.$$
Let $b_k:=\log\left(\frac{\mP(Y=k)}{\mP(\tilde Y=k)}\right)+b_0,$ where $b_0$ guarantees that $\sum_{k=1}^Kb_k=0$. Then $$\mP(Y=k\mid X)= \frac{e^{b_k} \mP(\tilde Y=k\mid X)}{\sum_{\ell=1}^Ke^{b_{\ell}} \mP(\tilde Y=\ell\mid X)}, $$ and the maximum likelihood estimates of $(b_k)_{k\in [K]}$ can be found by solving the minimization problem for the negative log likelihood (NLL)  $$\min_{\{(b_k)_{k\in [K]}: \sum_{k=1}^K b_k=0\}}-\sum_{i=1}^n\left \lbrace b_{y_i}+\log\left(\frac{\mP(\tilde Y=y_i\mid X_i)}{\sum_{\ell=1}^Ke^{b_{\ell}} \mP(\tilde Y=\ell\mid X_i)} \right) \right \rbrace $$ if the conditional distribution $(\mP(\tilde Y=k\mid X))_{k\in [K]}$ is known. However, since  what is available from the training model is only the logit functions $(z_{k}(X))_{k\in [K]}$, so that the estimated probabilities satisfy $$\hat \mP(\tilde Y=k\mid X)\propto e^{z_k(X)}, $$ we can solve instead
$$\min_{\{(b_k)_{k\in [K]}: \sum_{k=1}^K b_k=0\}}-\sum_{i=1}^n\left \lbrace b_{y_i}+z_{y_i}(X_i) - \log\left(\sum_{\ell=1}^Ke^{b_{\ell}+z_{\ell}(X_i)} \right) \right \rbrace, $$ denoting the solution $\left(\hat b_k\right)_{k\in[K]},$
in order to get the following estimated probabilities for class membership for a new example that has the same distribution as $(X_i,Y_i)_{i\in [n]}$:    $$\hat\mP(Y=k\mid X)= \frac{e^{\hat b_k+z_k(X)} }{\sum_{\ell=1}^Ke^{\hat b_{\ell}+z_{\ell}(X)} }, \ \forall \ k\in [K]. $$

The numerical optimization is carried out  via the L-BFGS-B algorithm \citep{byrd1995limited}, a limited-memory quasi-Newton method designed for large-scale optimization under box constraints.

\begin{remark}
Since this is a  plug-in heuristic, it is not guaranteed to improve upon simpler alternatives such as  temperature scaling. In practice, it may be preferable to split the calibration sample into a ``training" subset and a held-out ``calibration" subset: use the training subset to learn improved predictive scores (e.g., via more flexible machine-learning methods that produce better softmax outputs for the joint distribution generating the calibration and test samples), and then apply \infoOSP\ on the calibration subset along with the test sample.
\end{remark}

\section{Settings that satisfy Assumption \ref{ass-martingale}}\label{sec:auxiliaryresults}

Let us recall the monotonicity assumption made in \cite{GazinHellerMarandonRoquain2025}.
 \begin{ass}\label{ass-Imonotone}
For all $C\in \mathcal I$, we have that $C'\subset C$ implies $C'\in \mathcal I$. 
 \end{ass}

The counterexample in Figure \ref{fig-example-notsatisfyingLagrangeMultiplierAssumption} shows that Assumption~\ref{ass-Imonotone} is not sufficient to guarantee Assumption~\ref{ass-martingale}, even when the weights depend solely on the cardinality via $w(C)=f(|C|)=1/|C|$. In this counterexample, the informative set collection is $\mathcal{I}=\{\{1\}, \{2\}, \{3\}, \{2,3\}\}$, which satisfies Assumption~\ref{ass-Imonotone}. However, $C^{\mu}(x)$ is $\{1\}$ for $\mu \in [0, \frac{1}{2})$ and $\{2,3\}$ for $\mu \ge \frac{1}{2}$, meaning Assumption~\ref{ass-martingale} does not hold.

\begin{figure}[htbp]
    \centering
    \begin{tabular}{ccc}
    	\hspace{-0.03\textwidth}
        \includegraphics[width=0.64\textwidth,valign=t]{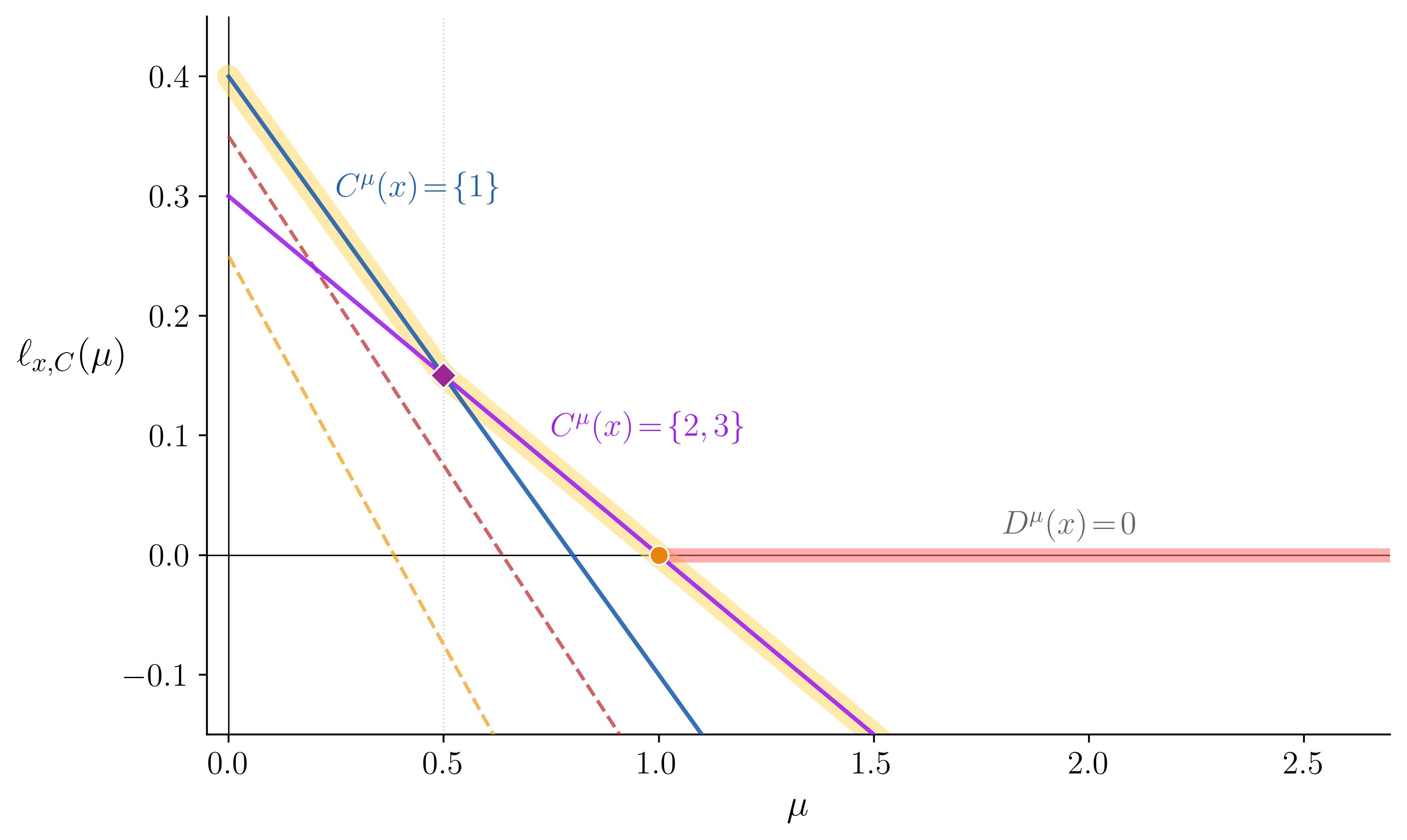}
        \hspace{0.005\textwidth}
        \includegraphics[width=0.12\textwidth,valign=t]{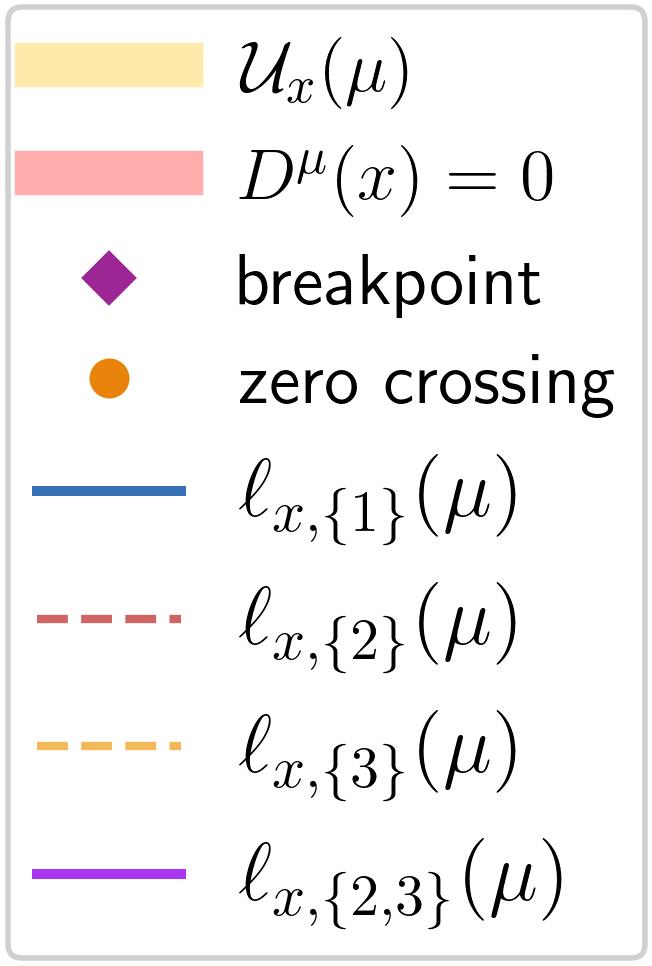}
    \end{tabular}
    \caption{Counterexample showing that Assumption \ref{ass-Imonotone} is insufficient to guarantee the nestedness required by Assumption \ref{ass-martingale}. Here, $K=3$,  $\mathcal{I}=\{\{1\}, \{2\}, \{3\}, \{2,3\}\}$, $w(C) = \frac 1{|C|}$, and $\alpha = 0.1$. The estimated conditional class probabilities are: 0.4 for class 1, 0.35 for class 2, and 0.25 for class 3. The upper envelope $\mathcal U_x(\mu)$ consists of 
    $\hell_{X,\{1\}}(\mu) = 0.4 +\mu(0.4-0.9) = 0.4-0.5\mu$ for $\mu<0.5$, and $\hell_{X,\{2,3\}}(\mu) = (1/2) (0.35+0.25) -\mu(0.35+0.25-0.9) = 0.3-0.3\mu$ for $\mu\geq 0.5$.  Therefore, $C^{\mu}(x)$ is $\{1\}$ for $\mu \in [0, \frac{1}{2})$ and $\{2,3\}$ for $\mu \ge \frac{1}{2}$.
    }
    \label{fig-example-notsatisfyingLagrangeMultiplierAssumption}
\end{figure}

Nevertheless, the following weaker result holds. 

\begin{prop}
Assume that for all $C$, $w(C)=f(|C|)$ for some positive function $f$.
    Consider the informative sets $ \mathcal{I}=\{C\subset [K]\backslash \mathcal{Y}_0\::\: k_0\leq |C|\leq K_0\}$ for some $0\leq k_0\leq  K_0\leq K$ and a (possibly empty) excluded value set $\mathcal{Y}_0\subset [K]$ with $\mathcal{Y}_0\neq [K]$. Then Assumption~\ref{ass-martingale} holds. 
\end{prop}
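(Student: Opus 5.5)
The plan is to show first that, for a fixed $x$, every set that can be selected as $C^{\mu}(x)$ belongs to a single chain of nested "top-$j$" sets. Then I will show that the selected cardinality is non-decreasing in $\mu$.

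Fix $x$ and write $p_k=\hat{\mP}(Y=k\mid X=x)$ for $k\in[K]\setminus\mathcal Y_0$. Rank the non-excluded classes by decreasing $p_k$, breaking ties by class index. For $k_0\le j\le K_0$, let $T_j$ be the set of the first $j$ classes in this ranking. Then $T_{k_0}\subseteq T_{k_0+1}\subseteq\cdots\subseteq T_{K_0}$. (If $k_0=0$, I take $|C|\ge 1$, since the empty set is not a meaningful prediction set, or I handle it trivially.)

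\emph{Step 1: the maximizer is always some $T_j$.} For any $C\in\mathcal I$ with $|C|=j$,
$$\hell_{x,C}(\mu)=(f(j)+\mu)\,\hat{\mP}(Y\in C\mid X=x)-\mu(1-\alpha).$$
Since $f(j)+\mu>0$, among the sets of size $j$ the line is pointwise maximized exactly by the sets maximizing $\sum_{k\in C}p_k$, and $T_j$ is one of them. Sets of equal size have equal weight, so the tie-break among them is the secondary lexicographic rule. I need that rule to pick $T_j$. I would check this by comparing the sorted element lists: any other maximizer differs from $T_j$ only by swapping classes of equal probability, and such swaps replace a smaller index by a larger one. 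I expect this to work, but it is the delicate point of Step 1. If the lexicographic convention does not cooperate, I would fix the secondary rule accordingly, since the paper only requires it to be systematic. Hence $C^{\mu}(x)=T_{J(\mu)}$ for some cardinality $J(\mu)$.

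\emph{Step 2: $J(\mu)$ is non-decreasing.} It then suffices to show $\mu_1<\mu_2\Rightarrow J(\mu_1)\le J(\mu_2)$, because the $T_j$ are nested. Suppose instead that $J(\mu_1)=a>b=J(\mu_2)$. By item 1 of Proposition~\ref{prop-monotoneD-procedure}, $\hat{\mP}(Y\in T_a\mid X=x)\le \hat{\mP}(Y\in T_b\mid X=x)$. On the other hand, $T_b\subset T_a$ gives the reverse inequality, so the two probabilities are equal. The lines $\hell_{x,T_a}$ and $\hell_{x,T_b}$ then have the same slope and differ only in their intercepts $f(a)P$ and $f(b)P$.

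Two parallel lines are either strictly ordered for every $\mu$, or identical. In the identical case the tie-break (smaller weight, then lexicographic order) does not depend on $\mu$. Either way, the preference between $T_a$ and $T_b$ is the same at $\mu_1$ as at $\mu_2$. At $\mu_1$ the envelope is attained by $T_a$, which beats or ties-and-wins over $T_b$; at $\mu_2$ the envelope is attained by $T_b$, and its value equals that of $T_a$ there. Hence $T_a$ would win at $\mu_2$ as well, which is a contradiction.

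The main obstacle is this careful handling of ties: ties in class probabilities (for instance zero-probability classes) and ties among lines. The argument above resolves ties between lines through the parallel-lines observation. Ties in class probabilities are resolved by verifying that the secondary tie-breaking rule is consistent with the fixed ranking used to define the chain $T_j$.
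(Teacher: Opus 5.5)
Your proposal is correct, and it is more complete than the paper's proof. The paper's proof consists of your Step~1 observation and then states the conclusion directly. That observation is that, for each cardinality, the best informative set is the top-$j$ set of non-excluded classes, and these sets are nested. The paper ignores ties, and it leaves implicit why $C^{\mu}(x)$ moves \emph{up} this chain as $\mu$ increases. The natural bridge is presumably the sufficient condition stated in the main text: the weight of $C^{\mu}(x)$ decreases along the envelope, and a best set of larger weight is contained in a best set of smaller weight. That bridge needs $f$ non-increasing, whereas the proposition only assumes $f>0$. Your Step~2 uses coverage monotonicity instead, via item~1 of Proposition~\ref{prop-monotoneD-procedure}. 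A backward move from $T_a$ to $T_b$ with $b<a$ forces $\hat{\mP}(Y\in T_a\mid X=x)=\hat{\mP}(Y\in T_b\mid X=x)$, so the two lines are parallel, and the fixed pairwise preference then gives a contradiction. This works for any positive $f$; a non-monotone $f$ simply makes the envelope skip some cardinalities. So your route proves the statement as actually written.

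The point you flagged in Step~1 resolves in your favour, so there is no need to alter the secondary rule. Let $t$ be the $j$-th largest probability. Every size-$j$ maximizer has the form $A\cup S$, where $A=\{k \text{ non-excluded}: p_k>t\}$ and $S$ is a subset of $E=\{k: p_k=t\}$ of fixed size. Suppose $S$ consists of the smallest indices of $E$ and $S'\neq S$. Then every element of $S'\setminus S$ exceeds every element of $S$, so $\min(S\triangle S')\in S$. Hence $A\cup S$ comes first in the lexicographic order of sorted tuples, which makes it exactly your $T_j$.

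When $f$ is not injective, maximizers of several cardinalities can share the minimal weight. The selected set is then the lexicographic minimum of the corresponding $T_j$'s, so it still lies in the chain.

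Do drop the fallback of changing the rule, though. The statement genuinely depends on the secondary rule being consistent with a single ranking of classes. Take $K=3$, $\mathcal I$ the singletons and pairs, $p_1=p_2=p_3=1/3$, $f(1)=10$, $f(2)=1$ and $\alpha=0.1$. All singleton lines coincide, all pair lines coincide, and the two families cross at $\mu=8$. A systematic rule that picks $\{3\}$ among singletons and $\{1,2\}$ among pairs gives $C^{\mu}(x)=\{3\}$ for $\mu<8$ and $\{1,2\}$ for $\mu\geq 8$. This violates Assumption~\ref{ass-martingale}.

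Finally, $k_0=0$ needs no special care: $T_0=\emptyset$ simply sits at the bottom of the chain, and your argument applies unchanged.
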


\begin{proof}
The informative prediction set of size $j\in [K]$ with largest weight is composed of the $j$ classes with largest estimated conditional probabilities, out of the 
$[K]\backslash \mathcal{Y}_0$ classes that may compose informative prediction sets. This means that for $j_1<j_2$, the informative prediction set of size $j_1$ with largest weight is necessarily a subset of the  informative prediction set of size $j_2$ with largest weight. Therefore, Assumption \ref{ass-martingale} is satisfied.  
\end{proof}

\end{document}